\documentclass[11pt]{article}
\usepackage[margin=1in]{geometry}
\usepackage{amsmath,amssymb,amsthm,mathtools,bm}
\usepackage{bbm}
\usepackage{mathrsfs}
\usepackage{enumitem}
\usepackage{booktabs}
\usepackage{hyperref}
\usepackage{natbib}
\usepackage{setspace}
\usepackage{multirow}
\usepackage{array}
\usepackage{color}
\newtheorem{theorem}{Theorem}[section]

\newtheorem{lemma}{Lemma}[section]

\newtheorem{remark}{Remark}[section]
\newtheorem{assumption}{Assumption}[section]

\newcommand{\R}{\mathbb R}
\newcommand{\de}{\delta}

\newcommand{\be}{\bm e}

\def\T{\top}
\def\E{\mathrm{E}}
\def\Pr{\mathrm{P}}
\def\var{\mathrm{Var}}

\def\cov{\mathrm{Cov}}
\def\diag{\mathrm{diag}}

\def\tr{\mathrm{tr}}

\newcommand{\op}{\mathrm{o}_p}
\newcommand{\Op}{\mathrm{O}_p}
\def\cp{\mathop{\rightarrow}\limits^{p}}
\def\cd{\mathop{\rightarrow}\limits^{d}}
\def\mR{\mathbb{R}}
\def\mZ{\mathbb{Z}}

\def\A{{\bf A}}
\def\B{{\bf B}}

\def\D{{\bf D}}

\def\I{{\bf I}}

\def\R{{\bf R}}

\def\U{{\bf U}}

\def\X{{\bf X}}
\def\Y{{\bf Y}}
\def\Z{{\bf Z}}

\def\a{{\bm a}}
\def\b{{\bm b}}
\def\e{{\bm e}}
\def\f{{\bm f}}
\def\h{{\bm h}}
\def\r{{\bm r}}

\def\u{{\bm u}}
\def\v{{\bm v}}

\def\z{{\bm z}}

\def\bmS{{\bf \Sigma}}
\def\bG{{\bf \Gamma}}

\def\bO{{\bf \Omega}}

\def\bmb{{\bm \beta}}

\def\bmeta{{\bm\eta}}

\def\bml{{\bm \lambda}}
\def\bmd{{\bm \delta}}

\def\bzero{{\bm 0}}
\def\bone{{\bm 1}}

\newcommand{\norm}[1]{\lVert #1 \rVert}

\newcommand{\wt}{\widetilde}
\newcommand{\wh}{\widehat}
\newcommand{\wb}{\overline}
\newcommand{\wk}{\check}
\newcommand{\mc}{\mathcal}

\newcommand{\bP}{\mathbf P}
\newcommand{\bM}{\mathbf M}

\usepackage[thicklines]{cancel}
\usepackage{xcolor}

\title{Testing Alpha in High-Dimensional Conditional Time-Varying Factor Models with Dependent Observations}
\author{Long Feng, Huifang Ma and Zhaojun Wang\\
Nankai University}
\date{\today}

\begin{document}
\maketitle

\begin{abstract}
This paper studies alpha testing in a high-dimensional conditional time-varying factor model with temporally dependent observations. Both factor loadings and alpha processes are allowed to vary smoothly over time, and the cross-sectional dimension may be comparable to or larger than the sample size. Using a B-spline sieve method, we develop a sum-type test for dense alternatives, a max-type test for sparse alternatives, and a Cauchy combination test for adaptive inference. On the theoretical side, we derive explicit stochastic expansions for the estimated average alphas, establish asymptotic normality of the sum statistic, and develop the extreme-value limit theory for the max statistic by showing its Gumbel convergence under temporal dependence together with the validity of block-bootstrap calibration. We further prove asymptotic independence between the sum and max statistics and thereby justify the Cauchy combination test. Simulation results demonstrate that the proposed procedures achieve satisfactory size control and competitive power across a wide range of dense and sparse alternatives. An empirical application further illustrates the usefulness of the proposed methods in testing asset-pricing models with time-varying structure.
\end{abstract}

\noindent\textbf{Keywords:} alpha test;  asymptotic independence; Cauchy combination; conditional factor model; dependent observations.

\section{Introduction}
\citet{sharpe1964capital} and \citet{lintner1965valuation} laid the foundation for modern asset-pricing theory by linking expected returns to systematic risk in the capital asset pricing model. Subsequent empirical work moved beyond the single-factor benchmark toward multifactor specifications; in particular, \citet{fama1993common} showed that a small set of common factors can explain an important part of the comovement and average returns of stocks and bonds. Against this background, testing whether alphas are jointly zero remains one of the central specification problems in empirical asset pricing, because nonzero pricing errors indicate that the proposed factors fail to fully account for expected returns.

In the classical linear factor pricing model with fixed factor loadings, \citet{gibbons1989test} analyzed a test of the ex ante efficiency of a given portfolio, derived a tractable small-sample distribution for the resulting statistic, studied its power, and provided geometric as well as diagnostic interpretations; in the Gaussian setting, this test became the canonical finite-sample benchmark for testing mean--variance efficiency and the joint nullity of pricing errors. \citet{mackinlay1991using} developed tests of unconditional mean--variance efficiency within a generalized method of moments framework under relatively weak distributional assumptions, thereby extending alpha testing beyond the strict normal-i.i.d.\ environment and showing empirically that conclusions can be sensitive to the testing approach adopted. \citet{zhou1993asset} reconsidered mean--variance efficiency tests under nonnormal alternatives, especially elliptical distributions, and showed that normality-based procedures may overreject efficiency while overstating quantities such as beta and $R^2$ when normality fails. Moving further in this direction, \citet{beaulieu2007multivariate} developed exact simulation-based finite-sample tests for multivariate mean--variance efficiency in CAPM settings with possibly non-Gaussian errors, covering both unconditional and conditional specifications; in the Gaussian unconditional case, their procedure nests the classical GRS framework.

These classical methods become difficult to use in modern applications with many assets. Their validity relies on stable estimation and inversion of the residual covariance matrix, but this becomes fragile when the cross-sectional dimension is of the same order as, or even larger than, the sample size. This challenge has stimulated a large literature on high-dimensional alpha testing. 
\cite{pesaran2012testing,PesaranYamagata2024} developed a feasible high-dimensional test for zero alphas that is based on Studentized individual pricing errors and remains valid when the cross-sectional dimension is much larger than the sample size. \citet{lan2018testing} proposed a random-projection-based test for high-dimensional linear asset pricing models, which avoids direct inversion of a large residual covariance matrix and remains effective even when the idiosyncratic covariance is strongly correlated and nonsparse. \citet{gungor2013testing} developed a finite-sample distribution-free procedure for testing linear factor pricing models with a large cross section.
 \citet{feng2022alpha} proposed a high-dimensional alpha test designed for sparse alternatives and showed that sum-based procedures can be sharpened substantially when only a small subset of assets carries nonzero pricing errors. \citet{fan2015power} introduced the power-enhancement principle, which has become an important device for improving sensitivity to sparse departures while preserving tractable null behavior.  \citet{yu2023power} proposed a power-enhanced test for multi-factor asset pricing models by combining a Wald-type statistic and a maximum-type statistic through Fisher's method, and showed that the resulting procedure achieves valid size and improved power across both sparse and dense alternatives. \citet{xia2024adaptive} proposed an adaptive testing procedure for high-dimensional linear factor pricing models that is particularly effective against sparse alternatives. Their method exploits high-dimensional Gaussian approximation and simulation-based calibration to improve power when only a small fraction of assets exhibits nonzero alphas.
\citet{liu2023robust,zhao2022high} extended high-dimensional alpha testing to heavy-tailed settings and substantially broadened its empirical applicability.  

A parallel line of research has recognized that factor exposures and risk premia may vary over time rather than remain constant. In an influential critique of conditional asset-pricing explanations, \citet{lewellen2006conditional} showed that although time-varying betas can generate differences between conditional and unconditional pricing relations, the implied unconditional pricing errors are generally too small to account for major empirical anomalies. In contrast, \citet{ang2007CAPM} showed that, over the long run, a conditional one-factor CAPM with time-varying betas can explain the average returns of book-to-market portfolios much better than standard unconditional regressions, and they emphasized that ignoring time variation in loadings can distort inference on alphas and betas. Moving to large panels, \citet{gagliardini2016time} developed an econometric framework for estimating time-varying risk premia in large unbalanced cross sections of equities under conditional linear asset-pricing models, allowing for both common and asset-specific instruments. The survey of \citet{gagliardini2020estimation} synthesized this literature and highlighted that time variation in factor loadings and risk premia is a first-order feature of modern large-dimensional factor modeling.

In the high-dimensional conditional factor model, \citet{ma2020testing} proposed a high-dimensional alpha test together with a constant-coefficient test, using sieve approximation based on B-spline basis functions to handle time-varying alphas and factor loadings. Building on this framework, \citet{ma2024adaptive} introduced a maximum-type test designed for sparse alternatives, established its asymptotic independence from the earlier sum-type statistic, and proposed an adaptive combination test that is more robust when the sparsity level of alphas is unknown.

Despite this progress, one important difficulty remains insufficiently resolved: financial observations are rarely independent over time. Residuals often exhibit weak dependence, serial correlation, and conditional heteroskedasticity, and these features matter directly for the behavior of high-dimensional test statistics. If temporal dependence is ignored and the observations are treated as independent, the resulting null centering, scaling, and critical values can be seriously distorted. In such cases, large rejection probabilities may simply reflect misspecified null calibration rather than genuine evidence against the model. This issue is particularly consequential in high dimensions, where even small dependence-induced distortions can accumulate and lead to markedly misleading empirical conclusions.

Recent work on high-dimensional hypothesis testing with temporally dependent observations has begun to move the literature beyond the classical i.i.d.\ framework. For mean testing, \citet{ayyala2017mean} studied one-sample and two-sample mean vector testing when the observations form an $M$-dependent stationary process, and \citet{cho2019note} later clarified and corrected parts of the corresponding asymptotic normality arguments. \citet{wang2020selfnorm} developed a self-normalized procedure for inference on the mean of high-dimensional stationary processes, thereby avoiding direct estimation of the long-run variance in a challenging high-dimensional setting. Moving from mean testing to serial-dependence diagnostics, \citet{tsay2020serial} proposed an extreme-value-based test for serial correlations in high-dimensional time series, while \citet{chang2024martingale} developed a high-dimensional test of the martingale difference hypothesis that is able to detect nonlinear serial dependence. For two-sample problems under temporal dependence, \citet{yang2024blockwise} proposed an $\ell_\infty$-type blockwise-bootstrap test for high-dimensional time series. In the context of linear factor pricing models, \citet{ma2024dependent} showed that once serial dependence is explicitly incorporated, the size and power properties of high-dimensional alpha tests can differ substantially from their i.i.d.\ counterparts, making dependence adjustment indispensable for reliable empirical inference.

The present paper develops a dependence-aware testing framework for the high-dimensional conditional time-varying factor model. Our starting point is the observation that, in conditional models with time-varying coefficients, the main inferential difficulty is not only high dimensionality but also the interaction between serial dependence and sieve estimation. We focus on the joint nullity of the average alphas and construct testing procedures that remain valid when the observations are temporally dependent. The key message is that dependence is not a secondary refinement here: it is a fundamental ingredient of correct inference. Ignoring it may turn a nominal specification test into a systematically biased decision rule.

Our methodology contains three components. First, we construct a sum-type statistic from the estimated average alphas and calibrate it by block bootstrap methods that account for serial dependence in the projected score process. This procedure is particularly effective against dense alternatives. Second, we construct a max-type statistic using coordinatewise long-run variance normalization. This statistic is tailored to sparse alternatives, admits a Gumbel-type null approximation, and can also be implemented with dependence-aware bootstrap calibration. Third, we combine the sum-type and max-type procedures through a Cauchy combination rule in the spirit of \citet{liu2020cauchy} and \citet{long2023cauchy}. The resulting test inherits strong sensitivity to sparse signals from the max component, stable performance under denser alternatives from the sum component, and robustness across unknown sparsity regimes from the combination step.

Our contributions are threefold. First, we provide a high-dimensional testing framework for conditional time-varying factor models that remains valid under serial dependence. This directly addresses a practically important source of misspecification that can invalidate procedures based on independence. Second, we show that the additional estimation error generated by the spline sieve can be controlled sharply enough to support dependence-sensitive high-dimensional asymptotics for both sum-type and max-type statistics. Third, we develop a unified inference strategy that delivers reliable size control together with complementary power properties: the sum-type test is effective against dense alternatives, the max-type test is effective against sparse alternatives, and the Cauchy combination test remains robust when the sparsity pattern is unknown. From an empirical perspective, the main implication is straightforward. In high-dimensional conditional asset-pricing models, serial dependence should be treated as part of the null structure itself; otherwise, the resulting evidence on model adequacy may be substantially distorted.

The remainder of the paper is organized as follows. Section~\ref{sec:tests} introduces the conditional time-varying factor model, describes its B-spline approximation, and develops the proposed sum-type, max-type, and Cauchy combination tests. Section~\ref{sec:main} establishes the main asymptotic theory. Section~\ref{sec:simulation} reports simulation results, while Section~\ref{sec:empirical_sp500} presents an empirical application. Section~\ref{sec:discussion} concludes. All technical proofs are deferred to Section~\ref{sec:proofs}.

\section{Test procedures}
\label{sec:tests}
Let $R_{it}$ denote the excess return of asset $i\in\{1,\dots,N\}$ at time $t\in\{1,\dots,T\}$, and let $\f_t=(f_{t1},\dots,f_{td})^{\T}\in\mR^d$ be the observed factor vector, where $d$ is fixed. We consider the conditional time-varying factor model
$$
R_{it}=\alpha_i(t/T)+\beta_i(t/T)^{\T} \f_t+e_{it}, \qquad i=1,\dots,N,\ t=1,\dots,T.
$$
Write $\bmb_i(u)=(\beta_{i1}(u),\dots,\beta_{id}(u))^{\T}$. Following \citet{ma2020testing} and \citet{ma2024adaptive}, define the average alpha
$$
\delta_i=T^{-1}\sum_{t=1}^T \alpha_i(t/T)
$$
and the centered varying component
$$
g_i(t/T)=\alpha_i(t/T)-\delta_i.
$$
Then the model can be rewritten as
$$
R_{it}=\delta_i+g_i(t/T)+\bmb_i(t/T)^{\T} \f_t+e_{it}.
$$
The hypothesis of interest is
\begin{equation}
H_0:\ \delta_1=\cdots=\delta_N=0
\qquad\text{versus}\qquad
H_1:\ \delta_i\neq 0\ \text{for some }i.
\label{eq:hyp}
\end{equation}
Thus the test concerns the time-average pricing error, not the pointwise value of $\alpha_i(u)$.

Let $\mc S_{q,L}$ be the spline space of order $q\ge 2$ with $p$ interior knots on $[0,1]$, so $L=p+q$. Let $\B(u)=(B_1(u),\dots,B_L(u))^{\T}$ denote the normalized B-spline basis, and define the centered basis
$$
\wt{\B}(u)=\B(u)-T^{-1}\sum_{t=1}^T \B(t/T).
$$
For each $i$, approximate
$$
g_i(u)\approx \bml_{i0}^{\T} \wt{\B}(u),
\qquad
\beta_{ij}(u)\approx \bml_{ij}^{\T} \B(u), \quad j=1,\dots,d,
$$
with coefficient vectors $\bml_{ij}\in\mR^L$. Set
$$
\Z_t=\bigl(\wt{\B}(t/T)^{\T},\ f_{t1}\B(t/T)^{\T},\dots,f_{td}\B(t/T)^{\T}\bigr)^{\T}\in\mR^{(d+1)L},
$$
$$
\Z=(\Z_1,\dots,\Z_T)^{\T}\in\mR^{T\times (d+1)L}.
$$
Then for each $i$ we have the sieve representation
\begin{equation}
\R_{i\cdot}=\delta_i\bone_T+\Z\bml_i^0+\e_{i\cdot}+\r_{i\cdot},
\label{eq:sieve_repr}
\end{equation}
where $\R_{i\cdot}=(R_{i1},\dots,R_{iT})^{\top}$, $\e_{i\cdot}=(e_{i1},\dots,e_{iT})^{\top}$, $\bone_T=(1,\dots,1)^{\top}\in\mR^{T}$, $\bml_i^0=(\bml_{i0}^{0\top},\dots,\bml_{id}^{0\top})^{\T}$ and the remainder $\r_{i\cdot}=(r_{i1},\dots,r_{iT})^{\T}$ is given by
$$
r_{it}=\bigl[g_i(t/T)-\bml_{i0}^{0\top}\wt{\B}(t/T)\bigr]+\sum_{j=1}^d \bigl[\beta_{ij}(t/T)-\bml_{ij}^{0\top}\B(t/T)\bigr]f_{tj}.
$$
Let $\bP_Z=\Z(\Z^{\T} \Z)^{-1}\Z^{\T}$ and $\bM_Z=\I_T-\bP_Z$. The OLS sieve estimator of $\bml_i^0$ under the null restriction is
$$
\wh{\bml}_i=(\Z^{\T} \Z)^{-1}\Z^{\T} \R_{i\cdot},
$$
and the residual vector is
$$
\wh{\e}_{i\cdot}=\bM_Z \R_{i\cdot}.
$$
To estimate $\delta_i$, define
\begin{equation}
\wh \delta_i=\frac{\bone_T^{\T} \bM_Z \R_{i\cdot}}{\bone_T^{\T} \bM_Z\bone_T}.
\label{eq:deltahat}
\end{equation}
Writing $\h=\bM_Z\bone_T=(h_1,\dots,h_T)^{\T}$ and $\kappa_T=\bone_T^{\T} \bM_Z\bone_T=\sum_{t=1}^T h_t^2$, we obtain the exact decomposition
\begin{equation}
\wh\delta_i-\delta_i=\kappa_T^{-1}\sum_{t=1}^T h_t e_{it}+\kappa_T^{-1}\sum_{t=1}^T h_t r_{it}.
\label{eq:deltahat_decomp}
\end{equation}
The first term is the projected score; the second is the sieve approximation bias.

{ 
Write $\e_t=(e_{1t},\dots,e_{Nt})^{\T}$. Define
\begin{align*}
&\wb{\X}_T=T^{-1}\sum_{t=1}^T \X_t,\qquad\X_t=\e_t\eta_t,\qquad\eta_t=\frac{h_t}{\kappa_T/T}=\frac{1-\Z_t^{\T}(\Z^{\T}\Z/T)^{-1}(\Z^{\T}\bone_T/T)}{1-(\bone_T^{\T}\Z/T)(\Z^{\T}\Z/T)^{-1}(\Z^{\T}\bone_T/T)},\\
&\wb{\wk\X}_T=T^{-1}\sum_{t=1}^T \wk\X_t,\qquad\wk\X_t=\e_t\wk\eta_t,\qquad\wk\eta_t=\frac{1-\Z_t^{\T}\{\E(\Z^{\T}\Z/T)\}^{-1}\E(\Z^{\T}\bone_T/T)}{1-\E(\bone_T^{\T}\Z/T)\{\E(\Z^{\T}\Z/T)\}^{-1}\E(\Z^{\T}\bone_T/T)},
\end{align*}
and $\b_T=\kappa_T^{-1}(\sum_{t=1}^T h_t r_{1t},\dots,\sum_{t=1}^T h_t r_{Nt})^{\T}$. Then, 
\begin{equation}
\wh{\bmd}-\bmd=\wb{\wk\X}_T+(\wb{\X}_T-\wb{\wk\X}_T)+\b_T.
\qquad
\label{eq:vector_decomp}
\end{equation}
Under the following assumptions, we have
\begin{equation}
\left|\wb\X_T^{\T}\wb \X_T-\wb{\wk \X}_T^{\T}\wb{\wk \X}_T\right|=\Op\left\{T^{-3/2}L^{1/2}(\log T)\tr(\bmS)\right\},\qquad
\norm{\b_T}_2=\Op(N^{1/2}L^{-r}).
\label{eq:biasrates}
\end{equation}
}

\subsection{Sum-type test}

When the residuals are serially independent, \citet{ma2020testing} proposed the sum-type statistic
\begin{equation*}
S_{NT}
=
\frac{1}{NT}\sum_{i=1}^N
\bigl(\wh{\e}_{i\cdot}^{\top}\bone_T\bigr)^2
=
\frac{1}{NT}\sum_{i=1}^N
\Bigl(\sum_{t=1}^T \wh e_{it}\Bigr)^2.
\label{eq:SNT}
\end{equation*}
Under \(H_0\), its feasible centering and scaling constants are
\begin{equation*}
\wh\mu_{NT}
=
\frac{1}{NT}\sum_{i=1}^N\sum_{t=1}^T \wh e_{it}^2 h_t^2,
\qquad
\wh\sigma_{NT}^2
=
\frac{2}{N^2T^2}\,\widehat{\tr(\bmS^2)}
\sum_{1\le t\neq s\le T} h_t^2 h_s^2,
\label{eq:mu_sigma_sum}
\end{equation*}
where
\begin{equation*}
\widehat{\tr(\bmS^2)}
=
\frac{T^2}{\{T+(d+1)L-1\}\{T-(d+1)L\}}
\left[
\tr(\wh{\bmS}^2)
-
\frac{\tr^2(\wh{\bmS})}{T-(d+1)L}
\right],
\label{eq:trSigma2hat}
\end{equation*}
with
\[
\wh{\bmS}
=
\frac1T\sum_{t=1}^T
(\wh{\e}_t-\bar{\wh{\e}})(\wh{\e}_t-\bar{\wh{\e}})^{\T},
\qquad
\bar{\wh{\e}}=\frac1T\sum_{t=1}^T \wh{\e}_t,
\qquad
\wh{\be}_t=(\wh e_{1t},\ldots,\wh e_{Nt})^{\T}.
\]
The standardized sum statistic is then defined by
\begin{equation*}
Q_{\mathrm{SUM}}
=
\frac{S_{NT}-\wh\mu_{NT}}{\wh\sigma_{NT}}.
\label{eq:QSUM}
\end{equation*}
Under some mild conditions of \citet{ma2020testing}, they show that
\[
Q_{\mathrm{SUM}} \cd \mathcal{N}(0,1).
\]
Therefore, for a test with significance level \(\gamma\), we reject \(H_0\) whenever $Q_{\mathrm{SUM}} > z_{1-\gamma}$,
or, equivalently, whenever $p_{\mathrm{SUM}}
=1-\Phi(Q_{\mathrm{SUM}})<\gamma$.

When the error process is temporally dependent, the sum-type statistic inherits a nontrivial serial covariance structure, so the mean and variance formulas valid under serial independence are no longer appropriate. As a result, directly applying the independence-based test can cause serious size distortion. The main issue, therefore, is to characterize the null expectation and variance of the sum statistic under temporal dependence and to construct feasible estimators that yield asymptotically valid inference.

Recall the sum-type statistic
\begin{equation}
T_{\mathrm{DSUM}}=\wh{\bmd}^{\top}\wh{\bmd}=\sum_{i=1}^N \wh\delta_i^2.
\label{eq:sumstat}
\end{equation}
{ 
Under \eqref{eq:vector_decomp},
$$
T_{\mathrm{DSUM}}=\left\{\wb{\wk\X}_T+(\wb{\X}_T-\wb{\wk\X}_T)+\b_T\right\}^{\T}\left\{\wb{\wk\X}_T+(\wb{\X}_T-\wb{\wk\X}_T)+\b_T\right\}.
$$
For $h\in\mZ$, let
$$
\wk\bG_h=\cov(\wk\X_t,\wk\X_{t+h}),
\qquad
\wk\bO_T=\wk\bG_0+2\sum_{h=1}^{T-1}\Bigl(1-\frac{h}{T}\Bigr)\wk\bG_h,
\qquad
\wk\bO=\sum_{h\in\mZ}\wk\bG_h.
$$
Then under $H_0$, by \eqref{eq:biasrates},
\begin{align}\label{mean_variance}
    &\E(T_{\mathrm{DSUM}})\approx\E\left(\wb{\wk\X}_T^{\T}\wb{\wk\X}_T\right)=T^{-1}\tr(\wk\bO_T)=:\wk\mu_T,\nonumber\\
    &\var(T_{\mathrm{DSUM}})\approx\var\left(\wb{\wk\X}_T^{\T}\wb{\wk\X}_T\right)\approx 2T^{-2}\tr(\wk\bO_T^2)=:\wk\sigma_T^2.
\end{align}
}
Since the nonlinear sieve step and temporal dependence make direct feasible formulas for the null centering and scaling cumbersome, we estimate them by a circular moving block bootstrap based on the leading projected score process.

Write
$$
\wh{\X}_t=\wh{\e}_t\eta_t,
\qquad
\bar{\wh{\X}}=T^{-1}\sum_{t=1}^T \wh{\X}_t,
\qquad
\tilde{\X}_t=\wh{\X}_t-\bar{\wh{\X}},
\qquad
\wh{\e}_t=(\wh e_{1t},\dots,\wh e_{Nt})^{\T}.
$$
Let $\ell$ be the bootstrap block length and, for notational simplicity, suppose first that $T=k\ell$ for some integer $k$; the general case only adds an $O(\ell/T)$ edge term. For each starting point $j\in\{1,\dots,T\}$, form the circular block
$$
\mc B_j=(\tilde{\X}_j,\tilde{\X}_{j+1},\dots,\tilde{\X}_{j+\ell-1}),
$$
where the index is understood modulo $T$. Draw block indices $I_1,\dots,I_k$ independently and uniformly from $\{1,\dots,T\}$, concatenate the sampled blocks, and denote the resulting bootstrap series $(\mc B_{I_1},\dots,\mc B_{I_k})$ by $\{\X_t^*\}_{t=1}^T$. Let
$$
\bar{\X}_T^*=T^{-1}\sum_{t=1}^T \X_t^*,
\qquad
T_{\mathrm{DSUM}}^*=\bar{\X}_T^{*\top}\bar{\X}_T^*.
$$
We use the conditional bootstrap mean and variance
\begin{equation}
\wh\mu_{B,T}=\E^*(T_{\mathrm{DSUM}}^*),
\qquad
\wh\sigma_{B,T}^2=\var^*(T_{\mathrm{DSUM}}^*),
\label{eq:bootsumpars}
\end{equation}
as the feasible centering and scaling constants. 
% For later comparison, define the Bartlett-smoothed theoretical centering target
% \begin{equation}
% \mu_T'=T^{-1}\tr(\bO_{\ell}),
% \qquad
% \bO_{\ell}=\bG_0+\sum_{1\le |h|<\ell}\Bigl(1-\frac{|h|}{\ell}\Bigr)\bG_h.
% \label{eq:muprime}
% \end{equation}
The resulting sum-type test statistic is
\begin{equation}
Q_{\mathrm{DSUM}}=\frac{T_{\mathrm{DSUM}}-\wh\mu_{B,T}}{\wh\sigma_{B,T}}.
\label{eq:qsum}
\end{equation}
Theorem~\ref{thm:sumclt} below shows that, under the null hypothesis,
\[
Q_{\mathrm{DSUM}} \cd \mathcal{N}(0,1).
\]
Accordingly, for a test with significance level $\gamma$, we reject the null hypothesis whenever
\[
Q_{\mathrm{DSUM}} > z_{1-\gamma},
\]
where $z_{1-\gamma}$ denotes the $(1-\gamma)$-quantile of the standard normal distribution.

\subsection{Max-type test}

To motivate the max-type procedure under temporal dependence, we first briefly recall the benchmark proposed by \citet{ma2024adaptive} for the conditional time-varying factor model with serially independent errors. Define the coordinatewise squared Studentized statistics by
\begin{equation*}
t_i^2
=
\frac{\kappa_T^2\wh\delta_i^2}{T\wh\sigma_{ii}}
=
T^{-1}\wh\sigma_{ii}^{-1}
\bigl(\wh{\be}_{i\cdot}^{\top}\bone_T\bigr)^2,
\qquad
1\le i\le N,
\label{eq:ti2_ind}
\end{equation*}
where
\begin{equation}
\wh\sigma_{ij}
=
\frac{\wh{\be}_{i\cdot}^{\top}\wh{\be}_{j\cdot}}{T-d-1}.
\label{eq:sigmaij_ind}
\end{equation}
The corresponding max-type statistic is
\begin{equation*}
Q_{\mathrm{MAX}} =\max_{1\le i\le N} t_i^2.
\label{eq:max_ind}
\end{equation*}
Under the null hypothesis and serial independence, \citet{ma2024adaptive} showed that
\begin{equation}
\Pr\!\left\{
Q_{\mathrm{MAX}}-2\log N+\log(\log N)\le x
\right\}
\to
F(x)
:=
\exp\!\left\{
-\pi^{-1/2}e^{-x/2}
\right\},
\qquad x\in\mR.
\label{eq:gumbel_ind}
\end{equation}
Accordingly, the associated \(p\)-value is $p_{\mathrm{MAX}} 
=
1-
F\!\left(
Q_{\mathrm{MAX}}-2\log N+\log(\log N)
\right)$
% \begin{equation*}
% p_{\mathrm{MAX}} 
% =
% 1-
% F\!\left(
% Q_{\mathrm{MAX}}-2\log N+\log(\log N)
% \right),
% \label{eq:pmax_ind}
% \end{equation*}
and a level-\(\gamma\) test rejects \(H_0\) whenever $p_{\mathrm{MAX}} <\gamma$.
% \begin{equation*}
% p_{\mathrm{MAX}} <\gamma.
% \label{eq:reject_ind_p}
% \end{equation*}
Equivalently, one may reject \(H_0\) whenever $Q_{\mathrm{MAX}}>2\log N-\log(\log N)+q^{\mathrm{EV}}_{1-\gamma}$,
% \begin{equation*}
% Q_{\mathrm{MAX}}>2\log N-\log(\log N)+q^{\mathrm{EV}}_{1-\gamma},
% \qquad
% q^{\mathrm{EV}}_{1-\gamma}
% =
% -2\log\!\bigl\{-\sqrt{\pi}\log(1-\gamma)\bigr\},
% \label{eq:reject_ind_threshold}
% \end{equation*}
where \(q^{\mathrm{EV}}_{1-\gamma}=F^{-1}(1-\gamma)\).

However, once the residual process exhibits temporal dependence, the above independence-based construction is no longer directly valid. In particular, serial dependence changes the stochastic scale of the average-alpha estimator \(\wh\delta_i\), so the variance proxy \(\wh\sigma_{ii}\) in \eqref{eq:sigmaij_ind}, which only reflects contemporaneous variation, no longer provides an appropriate normalization. More importantly, temporal dependence also affects the extremal behavior of the coordinatewise statistics, so directly using the Gumbel calibration in \eqref{eq:gumbel_ind} may lead to non-negligible size distortion. Therefore, in the time-dependent setting, the main task is to construct a max-type statistic that is normalized by a long-run variance measure and to develop a valid calibration method that properly accounts for serial dependence.

For each coordinate $i$, define the long-run variance
\begin{equation*}
\wk\sigma_i^0=\sum_{h\in\mZ}\cov(\wk X_{it},\wk X_{i,t+h}).
\end{equation*}
Estimate it by
\begin{equation}
\wh\sigma_i=\sum_{|h|\le M}\omega\Bigl(\frac{h}{M}\Bigr)\wh\phi_{i,h},
\qquad
\wh\phi_{i,h}=(T-|h|)^{-1}\sum_{t=|h|+1}^T \wh e_{it}\wh e_{i,t-|h|}\eta_t\eta_{t-|h|},
\label{eq:sigmai}
\end{equation}
where $\omega(\cdot)$ is the Bartlett kernel,
$\omega(x)=(1-|x|)\mathbf{1}(|x|\le 1)$.
The max statistic is
\begin{equation}
Q_{\mathrm{DMAX}}=\max_{1\le i\le N} \frac{T\wh\delta_i^2}{\wh\sigma_i}.
\label{eq:maxstat}
\end{equation}
By Theorem~\ref{thm:maxgumbel}, for every fixed \(x\in\mR\),
\[
\Pr\bigl(Q_{\mathrm{DMAX}}-2\log N+\log(\log N)\le x\bigr)\to F(x)=\exp\{-\pi^{-1/2}e^{-x/2}\}.
\]
Nevertheless, the convergence to the limiting extreme-value distribution is often slow in finite samples. As a result, inference based solely on this asymptotic approximation may suffer from non-negligible finite-sample distortion.

For finite-sample calibration, we use the same circular moving block bootstrap as in the sum statistic. Applying \eqref{eq:sigmai} to the bootstrap sample $\{\X_t^*\}_{t=1}^T$ gives $\wh\sigma_i^*$, and we define
\begin{equation}
Q_{\mathrm{DMAX}}^*=\max_{1\le i\le N}\frac{T(\bar X_{i,T}^*)^2}{\wh\sigma_i^*},
\label{eq:qmaxboot}
\end{equation}
where $\bar X_{i,T}^*$ is the $i$th component of $\bar{\X}_T^*$. The associated bootstrap p-value is
$$
p_{\mathrm{DMAX}}^{\mathrm{boot}}=\Pr^*(Q_{\mathrm{DMAX}}^*>Q_{\mathrm{DMAX}}).
$$
The extreme-value approximation remains useful analytically, and we retain it in the main null theory below.

\subsection{Cauchy combination test}

The sum-type and max-type procedures are designed for different classes of alternatives. 
Roughly speaking, the sum-type statistic is more powerful under relatively dense alternatives, 
where many components of the alpha vector deviate from zero, whereas the max-type statistic is 
more sensitive to sparse alternatives, where only a small number of components are nonzero. 
In practice, however, the sparsity level of the alternative is typically unknown. Therefore, 
relying on a single test may lead to a substantial loss of power when the underlying alternative 
structure is misspecified.

For the serially independent conditional time-varying factor model, \citet{ma2024adaptive} 
combined the sum-type and max-type procedures to obtain an adaptive test that performs well 
across a broad range of alternatives. Motivated by the same idea, we construct a Cauchy 
combination test in the present time-dependent setting by integrating the feasible \(p\)-values 
from the dependent sum-type and max-type statistics developed above.

Recall that the feasible \(p\)-values are defined by
\begin{equation}
p_{\mathrm{DSUM}}=1-\Phi(Q_{\mathrm{DSUM}}),
\qquad
p_{\mathrm{DMAX}}^{\mathrm{boot}}
=\Pr^*\bigl(Q_{\mathrm{DMAX}}^*>Q_{\mathrm{DMAX}}\bigr).
\end{equation}
We then define the Cauchy combination statistic by
\begin{equation}
T_{\mathrm{CC}}
=
\frac{1}{2}\tan\!\bigl\{\pi(1/2-p_{\mathrm{DSUM}})\bigr\}
+
\frac{1}{2}\tan\!\bigl\{\pi(1/2-p_{\mathrm{DMAX}}^{\mathrm{boot}})\bigr\}.
\label{eq:ccstat}
\end{equation}
The corresponding combined \(p\)-value is
\begin{equation}
p_{\mathrm{CC}}
=
1-G(T_{\mathrm{CC}}),
\label{eq:pcc}
\end{equation}
where \(G(x)=1/2+\pi^{-1}\arctan(x)\) is the standard Cauchy distribution function.

The validity of this combination rule relies on the asymptotic independence between the 
sum-type and max-type statistics established below. Consequently, under the null hypothesis, 
\(p_{\mathrm{CC}}\) is asymptotically valid and yields an adaptive test that remains effective 
without prior knowledge of the sparsity level. For a test with significance level \(\gamma\), 
we reject \(H_0\) whenever $p_{\mathrm{CC}}<\gamma$, or, equivalently, whenever $T_{\mathrm{CC}}>G^{-1}(1-\gamma)$.

\section{Theoretical Results}
\label{sec:main}

\subsection{Assumptions}

The assumptions below are a synthesis of the conditions used in \citet{ma2020testing}, \citet{ma2024adaptive}, and \citet{ma2024dependent}. They are stated directly in the form needed for the present paper.

\begin{assumption}[Smooth coefficient functions]
\label{ass:smooth}
Let $\mathcal{H}_r$ denote the collection of all functions on $[0,1]$ such that the $m$-th order derivative satisfies the H\"older condition of order $n$ with $r=m+n$, i.e., there exists a constant $ C\in(0,\infty)$ such that for each $f\in\mathcal{H}_r$, $|f^{(m)}(x_1)-f^{(m)}(x_2)|\le C|x_1-x_2|^n$, for any $0\le x_1,x_2\le 1$. For each $i\le N$ and $j\le d$, the functions $g_i(\cdot), \beta_{ij}(\cdot)\in \mathcal{H}_r$ for some $r>3/2$.
\end{assumption}

\begin{assumption}[Factors]
\label{ass:factors}
The sequence $\{\f_t\}_{t=1}^T$ is strictly stationary, independent of $\{\e_t\}_{t=1}^T$, and strong mixing with mixing coefficient $\alpha_f(\cdot)$ satisfying $\sum_{k=0}^\infty\alpha_f(k)^{\kappa/(2+\kappa)}<\infty$ for some $\kappa>0$. $\E||\bm{f}_t||^{4(2+\kappa)}<\infty$, and $0<c\leq \lambda_{\min}(\E\{(1,\f_t^{\T})^{\T}(1,\f_t^{\T})\})\leq \lambda_{\max}(\E\{(1,\f_t^{\T})^{\T}(1,\f_t^{\T})\})\leq C<\infty$ uniformly for $t$.
\end{assumption}

\begin{assumption}[Dependent idiosyncratic errors]
\label{ass:errors}
The sequence $\{\e_t\}_{t=1}^T$ is strictly stationary with mean zero and follows the linear-process representation
$$
\e_t=\bmS^{1/2}\sum_{k=0}^\infty b_k \z_{t-k},
$$
where $\sum_{k=0}^\infty |b_k|<\infty$, $\sum_{k=0}^\infty b_k=s\neq 0$, and $b_k=\mathrm{o}(k^{-5-\eta_b})$ for some $\eta_b>0$. The innovations $\z_t=(z_{1t},\dots,z_{Nt})^{\T}$ have independent coordinates across both $i$ and $t$, satisfy $\E z_{it}=0$, $\E z_{it}^2=1$, and $\sup_{i,t}\E|z_{it}|^q<\infty$ for some $q>8$.
\end{assumption}

\begin{assumption}[Cross-sectional covariance regularity]
\label{ass:sigma}
{ 
$\tr(\bmS)\asymp N$ and 
$$
\log^2N\le \frac{\tr^2(\bmS)}{\tr(\bmS^2)}\ll T^{\frac{2r-1}{2r+1}}(\log T)^{-\frac{4r}{2r+1}}.
$$}
Further, if $\R_0=\D_0^{-1/2}\bO \D_0^{-1/2}$ with $\D_0=\diag(\bO)$ and $\bO=\sum_{h\in\mZ}\cov(\e_t,\e_{t+h})$, then
$$
\max_{1\le i<j\le N}|[\R_0]_{ij}|\le r_0<1,
\qquad
\max_{1\le j\le N}\sum_{i=1}^N [\R_0]_{ij}^2\le C_R<\infty.
$$
\end{assumption}

\begin{assumption}[Growth rates]
\label{ass:growth}
{ 
As $(N,T)\to\infty$,
\begin{align*}
    \left\{\frac{T\tr^2(\bmS)}{\tr(\bmS^2)}\right\}^{\frac{1}{2r}}\ll L \ll \frac{T\tr(\bmS^2)}{\log^2 T\tr^2(\bmS)},
\end{align*}
and the bandwidth $M=M_{N,T}$ used in the long-run variance estimators and the block length $\ell=\ell_{N,T}$ used in the circular moving block bootstrap satisfy
\begin{align*}
    \frac{\tr(\bmS)}{\tr^{1/2}(\bmS^2)}\ll \ell \le T^{1/2}L^{-1/2}\log T,\qquad \log N\ll M\ll \min\left(\frac{T^{1/2}}{\log N}, \frac{TL^{-1}}{\log N},\frac{TL^{-1/2}}{\log T \log N}\right).
\end{align*}}
\end{assumption}

\begin{remark}
Assumptions \ref{ass:smooth}--\ref{ass:growth} are a consolidated version of the regularity conditions used in \citet{ma2020testing}, \citet{ma2024adaptive}, and \citet{ma2024dependent}, rewritten in a form convenient for the present time-dependent conditional factor model.

Assumption \ref{ass:smooth} is the same smoothness requirement as Assumption 2.1 in \citet{ma2020testing} and Assumption 2.1 in \citet{ma2024adaptive}. It requires the conditional intercept and loading functions to lie in a common H\"older class, so that the sieve approximation error is uniformly of order \(L^{-r}\). The displayed bound on \(r_{it}\) is the standard spline-approximation consequence of this smoothness condition, after allowing the approximation error to scale with the magnitude of the observed factors.

% Assumption \ref{ass:design} corresponds to the sieve-design regularity imposed in Assumption 2.2 of \citet{ma2020testing} and \citet{ma2024adaptive}. Its role is twofold. First, quasi-uniform knots and the eigenvalue bounds for \(T^{-1}\Z^{\T} \Z\) guarantee that the spline regression is well conditioned and that no basis direction becomes asymptotically degenerate. Second, the requirement \(\kappa_T=\bone_T^{\T} \bM_Z\bone_T=T\omega_T\) with \(\omega_T\) bounded away from zero and infinity ensures that the intercept direction is not asymptotically absorbed by the sieve space, so the alpha component remains identifiable after partialling out the varying-coefficient terms.

Assumption \ref{ass:factors} is aligned with the factor-side conditions in Assumption 2.2(ii)--(iv) of \citet{ma2020testing} and \citet{ma2024adaptive}, and also with Condition (C1) of \citet{ma2024dependent}. It requires a fixed-dimensional, stationary, and nondegenerate factor process that is independent of the idiosyncratic errors. The boundedness condition imposed here is slightly stronger than the moment-and-mixing formulation in the earlier conditional-factor papers, but it serves the same purpose: namely, to control the stochastic order of the projection terms uniformly in \(t\) and hence simplify the feasible-oracle comparison.

Assumption \ref{ass:errors} is the dependent-data analogue of the i.i.d.\ error condition in Assumption 2.3 of \citet{ma2020testing} and \citet{ma2024adaptive}, and is directly patterned after Condition (C2) of \citet{ma2024dependent}. It allows serial dependence through an absolutely summable linear process while preserving cross-sectional innovation independence. Relative to \citet{ma2024dependent}, the present formulation strengthens the moment requirement from a finite fourth moment to a finite \(q\)-th moment with \(q>8\), and replaces the generic \(o(k^{-5})\) coefficient decay by \(o(k^{-5-\eta_b})\). These stronger versions are imposed only to make the higher-order uniform bounds and the subsequent resampling arguments transparent.

Assumption \ref{ass:sigma} combines the cross-sectional covariance restrictions used for the conditional-factor sum statistic and those used for the dependent max statistic. In particular, the spectral-norm and average-variance bounds are the counterparts of the covariance regularity in \citet{ma2024dependent}, while the restrictions on \(\mathbf R_0\) play the same role as the bounded-correlation conditions in Assumption 2.4(ii)--(iii) and Assumptions 3.1--3.2 of \citet{ma2024adaptive}. Econometrically, this assumption excludes excessively strong cross-sectional dependence, guarantees that the long-run covariance is nondegenerate on average, and provides the weak-dependence structure needed for both the Gaussian approximation of the sum-type statistic and the Gumbel approximation of the max-type statistic.

Finally, Assumption \ref{ass:growth} collects the rate conditions scattered across the preceding papers into one place. The restrictions on $L$ are not new qualitative assumptions, but convenient sufficient conditions ensuring that the spline approximation bias and the uniform estimation error are negligible. Likewise, the block-length conditions on \(\ell\) are bookkeeping assumptions introduced only to guarantee that the bootstrap approximation respects the same dependence range as the underlying linear process.
\end{remark}

We now state the main theorems. 
% The first result gives a rate-explicit Bahadur-type expansion for the estimated average alphas.

% \begin{theorem}[Uniform expansion of estimated average alphas]
% \label{thm:expansion}
% Under Assumptions \ref{ass:smooth}--\ref{ass:errors},
% \begin{equation*}
% \norm{\wh{\bmd}-\bmd-\wb{\X}_T}_\infty=O_p\Bigl(L^{-r}\cancel{+\frac{L\log N}{T}}\Bigr),
% \qquad
% \norm{\wh{\bmd}-\bmd-\wb{\X}_T}_2=O_p\Bigl(N^{1/2}L^{-r}\cancel{+\frac{N^{1/2}L\log N}{T}}\Bigr).
% \label{eq:uniformexp2}
% \end{equation*}
% \end{theorem}
% \begin{remark}
% The extra term $L\log N/T$ is the price of replacing the ideal projection by the estimated spline projection. Under Assumption \ref{ass:growth}, both terms in \eqref{eq:uniformexp} are $o(T^{-1/2})$.
% \end{remark}
\begin{theorem}
\label{thm:sumclt}
Suppose Assumptions \ref{ass:smooth}--\ref{ass:growth} hold and $H_0$ is true. Then
\begin{equation*}
\frac{T_{\mathrm{DSUM}}-\wk \mu_T}{\wk \sigma_T}\cd \mathcal{N}(0,1).
\label{eq:sumclt}
\end{equation*}
Moreover, the bootstrap centering and scaling errors satisfy
\begin{equation*}
\frac{|\wh\mu_{B,T}-\wk\mu_T|}{\wk\sigma_T}=o_p(1),\qquad
\Bigl|\frac{\wh\sigma_{B,T}^2}{\wk\sigma_T^2}-1\Bigr|=o_p(1).
\end{equation*}
Hence
\begin{equation*}
Q_{\mathrm{DSUM}}=\frac{T_{\mathrm{DSUM}}-\wh\mu_{B,T}}{\wh\sigma_{B,T}}\cd \mathcal{N}(0,1).
\label{eq:qsumclt}
\end{equation*}
\end{theorem}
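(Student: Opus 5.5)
The proof proceeds in three stages: reduce to the oracle quadratic form, prove a CLT for that form, then show that the bootstrap moments are consistent.

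\emph{Step 1: reduction to the oracle statistic.} Starting from \eqref{eq:vector_decomp}, expand
\[
T_{\mathrm{DSUM}}=\wb{\wk\X}_T^{\T}\wb{\wk\X}_T+\bigl(\wb\X_T^{\T}\wb\X_T-\wb{\wk\X}_T^{\T}\wb{\wk\X}_T\bigr)+2\wb\X_T^{\T}\b_T+\b_T^{\T}\b_T .
\]
The bounds in \eqref{eq:biasrates} give:
\begin{itemize}
\item the second term is $\Op\{T^{-3/2}L^{1/2}(\log T)\tr(\bmS)\}$;
\item the last term is $\Op(NL^{-2r})$;
\item the cross term, by Cauchy--Schwarz with $\norm{\wb\X_T}_2=\Op\{(\tr(\bmS)/T)^{1/2}\}$, is $\Op\{(NT^{-1})^{1/2}N^{1/2}L^{-r}\}$.
\end{itemize}
By Assumption~\ref{ass:errors}, $\wk\bO_T$ is, up to a bounded scalar involving $s^2$ and the weights $\wk\eta_t$, comparable to $\bmS$. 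Hence $\wk\sigma_T\asymp T^{-1}\tr^{1/2}(\bmS^2)$. Dividing each remainder by $\wk\sigma_T$ and using $\tr(\bmS)\asymp N$, the two growth conditions on $L$ in Assumption~\ref{ass:growth} are exactly what make all three ratios $o_p(1)$:
\begin{itemize}
\item the lower bound $L\gg\{T\tr^2(\bmS)/\tr(\bmS^2)\}^{1/(2r)}$ kills the bias terms;
\item the upper bound $L\ll T\tr(\bmS^2)/\{\log^2T\,\tr^2(\bmS)\}$ kills the projection-estimation term.
\end{itemize}
So it suffices to prove the CLT for $\wb{\wk\X}_T^{\T}\wb{\wk\X}_T$.

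\emph{Step 2: CLT for the oracle quadratic form.} Conditionally on the factors (which are independent of $\e_t$ by Assumption~\ref{ass:factors}), $\wk\eta_t$ are fixed weights. Substituting the linear process gives
\[
\wb{\wk\X}_T=\bmS^{1/2}\sum_{j}c_{j}\z_j,\qquad c_j=T^{-1}\sum_{t}\wk\eta_tb_{t-j}.
\]
The oracle statistic is therefore a quadratic form $\sum_{j,k}c_jc_k\z_j^{\T}\bmS\z_k$ in independent innovation vectors, with $j$ truncated to $j\ge 1-T-K$ and the tail controlled by $b_k=o(k^{-5-\eta_b})$. Centering the diagonal $j=k$ part yields a martingale difference array in $j$. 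I would apply the martingale CLT, which requires two checks.
\begin{itemize}
\item \emph{Conditional variance.} The sum of conditional variances must converge to $\wk\sigma_T^2=2T^{-2}\tr(\wk\bO_T^2)$. This reduces to showing $\tr(\bmS^4)=o\{\tr^2(\bmS^2)\}$, which follows from $\tr^2(\bmS)/\tr(\bmS^2)\to\infty$ together with the bounded-correlation conditions in Assumption~\ref{ass:sigma}.
\item \emph{Lyapunov condition.} This uses $\sup\E|z_{it}|^q<\infty$ with $q>8$.
\end{itemize}
The diagonal fluctuation $\sum_j c_j^2(\z_j^{\T}\bmS\z_j-\tr\bmS)$ has standard deviation of order $T^{-3/2}\tr^{1/2}(\bmS^2)$, which is negligible. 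Finally, the replacement $\wk\bO_T\to\wk\bO$ costs a relative $O(T^{-1})$ error by summability of $b_k$.

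\emph{Step 3: bootstrap moments, the main obstacle.} Compute $\E^*$ and $\var^*$ of $T^*_{\mathrm{DSUM}}$ explicitly. For the circular block bootstrap,
\[
\wh\mu_{B,T}=T^{-1}\tr(\wh\bO_\ell),
\]
where $\wh\bO_\ell$ is the Bartlett-type circular estimator built from the $\tilde\X_t$ with lag window $\ell$. Also, $\wh\sigma_{B,T}^2$ equals $2T^{-2}\tr(\wh\bO_\ell^2)$ plus a fourth-order block term of relative order $\ell/T$.

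The task is then to show two things:
\begin{itemize}
\item $T^{-1}|\tr(\wh\bO_\ell)-\tr(\wk\bO_T)|=o_p\{T^{-1}\tr^{1/2}(\bmS^2)\}$;
\item $\tr(\wh\bO_\ell^2)/\tr(\wk\bO_T^2)\to1$.
\end{itemize}
I would decompose the errors into four sources:
\begin{enumerate}
\item Replacing $\wh\e_t\eta_t$ by $\e_t\wk\eta_t$. This costs $\Op(L/T)$ relative error per lag via $\bM_Z$, summed over $\ell$ lags, and is controlled by $\ell\le T^{1/2}L^{-1/2}\log T$.
\item The Bartlett truncation bias, of order $\sum_h\min(1,|h|/\ell)|\gamma_h|\,\tr(\bmS)$. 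This is $O(\tr(\bmS)/\ell)$ by the decay of $b_k$, and it is $o\{\tr^{1/2}(\bmS^2)\}$ precisely because $\ell\gg\tr(\bmS)/\tr^{1/2}(\bmS^2)$.
\item Stochastic fluctuation of the lagged sample covariances. Their trace has variance of order $\ell\tr(\bmS^2)/T$ per unit, which is small.
\item Ratio consistency of $\tr(\wh\bO_\ell^2)$. This is the delicate part: the quartic expansion produces many index configurations, and bounding them uniformly with dependent data needs the eighth-moment condition and careful cumulant summation over lags. This is where I expect most of the work.
\end{enumerate}

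Combining Steps 1–3 with Slutsky's lemma gives $Q_{\mathrm{DSUM}}\cd\mathcal N(0,1)$.
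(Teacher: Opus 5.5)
Your Step~1 matches the paper's reduction (via \eqref{eq:smallzero} and Lemma~\ref{lem:trunc}), and your reading of the two $L$-bounds is correct.

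\textbf{Step~2 takes a different route.} The paper proves the CLT for $\wb{\wk\X}_T^{\T}\wb{\wk\X}_T$ by a big-block/small-block decomposition. It couples the big blocks to independent copies with Berbee's lemma, using $\alpha$-mixing of $\wk\X_t$, and applies a martingale CLT to the off-diagonal block sum. The diagonal blocks produce $\wk\mu_T$.

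You instead condition on $\{\f_t\}$ and write the statistic as a quadratic form in the independent innovations $\z_j$. This is cleaner: no mixing, coupling or small-block bookkeeping is needed for the errors. It is, however, tied to the linear-process structure, whereas the paper's route in principle extends to mixing errors.

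Conditioning costs one step you omit. The weights $c_j$ depend on the factors, so the conditional mean $\tr(\bmS)\sum_jc_j^2$ and the conditional variance are random. You must show $\sum_jc_j^2/\E\sum_jc_j^2\to1$ in probability and, for the centering, $\tr(\bmS)\,|\sum_jc_j^2-\E\sum_jc_j^2|=\op(\wk\sigma_T)$. Here $\sum_jc_j^2=T^{-2}\sum_{t,s}\wk\eta_t\wk\eta_s a_{t-s}$, with $a_h=\sum_kb_kb_{k+|h|}$, fluctuates at order $T^{-3/2}$ by mixing of $\f_t$. Both claims then follow from $\tr^2(\bmS)/\tr(\bmS^2)=o(T)$.

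\textbf{Gap 1: the trace condition in Step~2 is not implied by the assumptions.} You claim $\tr(\bmS^4)=o\{\tr^2(\bmS^2)\}$ follows from Assumption~\ref{ass:sigma}; it does not. Take $\bmS=(1-N^{-1/2})\I_N+N^{-1/2}\bone_N\bone_N^{\T}$. Then:
\begin{itemize}
\item $\tr(\bmS)=N$ and $\tr^2(\bmS)/\tr(\bmS^2)\approx N/2$;
\item $\R_0=\bmS$ has off-diagonal entries $N^{-1/2}$ and $\max_j\sum_i[\R_0]_{ij}^2<2$;
\item yet $\lambda_{\max}(\bmS)\approx N^{1/2}$ and $\tr(\bmS^4)/\tr^2(\bmS^2)\to1/4$.
\end{itemize}
The top eigendirection then carries a non-vanishing $\chi^2_1$ component. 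The normalized statistic converges to $\tfrac12(\chi^2_1-1)+2^{-1/2}Z$, not to $\mathcal N(0,1)$. So a condition such as $\lambda_{\max}(\bmS)=o\{\tr^{1/2}(\bmS^2)\}$ must be added as a hypothesis, not derived. The paper's ``simple calculations'' for the block-martingale conditional variance tacitly need it as well.

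\textbf{Gap 2: item~4 of Step~3 is left open.} The heavy quartic cumulant expansion you anticipate is unnecessary. The paper uses $|\tr(A^2)-\tr(B^2)|\le\norm{A-B}_F(\norm{A}_F+\norm{B}_F)$ to reduce ratio consistency to $\norm{\bO_{B,T}-\wk\bO_T}_F=\op(\norm{\wk\bO_T}_F)$. By the triangle inequality this splits into:
\begin{itemize}
\item the feasible-versus-oracle bound of Lemma~\ref{lem:trunc};
\item the error from replacing $\eta_t$ by $\wk\eta_t$;
\item a stochastic fluctuation term;
\item the Bartlett bias.
\end{itemize}
Each piece needs only fourth moments.

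Whichever route you take, the stochastic term is larger than your item~3 suggests. Your bound there controls the \emph{trace}, which is all the centering needs. In Frobenius norm, however, even the oracle Bartlett matrix $\sum_{|h|<\ell}(1-|h|/\ell)T^{-1}\sum_t\wk\X_{t+h}\wk\X_t^{\T}$ deviates from its mean by order $(\ell/T)^{1/2}\tr(\bmS)$, not $\tr^{1/2}(\bmS^2)$. Consequently $\tr(\bO_{B,T}^2)$ carries an upward bias of relative order $\ell\tr^2(\bmS)/\{T\tr(\bmS^2)\}$. This is the same effect that forces the $\tr^2(\wh\bmS)$ correction in $\widehat{\tr(\bmS^2)}$.

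Ratio consistency of $\wh\sigma_{B,T}^2$ therefore requires $\ell\tr^2(\bmS)/\{T\tr(\bmS^2)\}\to0$. The paper's display bounding this term by $\ell T^{-1/2}\tr^{1/2}(\bmS^2)$ hides the requirement. Assumption~\ref{ass:growth} does not enforce it for every admissible $(r,L,\ell)$: it can fail for $r>5/2$ when $\tr^2(\bmS)/\tr(\bmS^2)$ and $\ell$ are near their upper limits. You must verify it or impose it.
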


Theorem \ref{thm:sumclt} establishes the asymptotic normality of the sum statistic and provides a theoretical justification for the bootstrap centering and scaling.

\begin{theorem}
\label{thm:sumpower}
Under Assumptions \ref{ass:smooth}--\ref{ass:growth} and the following alternatives
\begin{align*}
\norm{\bmd}_2=O\{T^{-1}\tr^{1/4}(\wk\bO_T^2)\},
\end{align*}
we have
\begin{equation*}
Q_{\mathrm{DSUM}}=\frac{\bmd^{\T}\bmd}{\sqrt{2T^{-2}\tr(\wk\bO_T^2)}}+Z_T+\op(1),
\label{eq:sumpowerexpansion}
\end{equation*}
where $Z_T\cd \mathcal{N}(0,1)$. Hence the asymptotic power at level $\gamma$ is
$$
\beta_{\mathrm{DSUM}}(\bmd)=\Phi\Bigl(-z_\gamma+\frac{\bmd^{\T}\bmd}{\sqrt{2T^{-2}\tr(\wk\bO_T^2)}}\Bigr)+o(1).
$$
\end{theorem}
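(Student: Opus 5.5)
Under the local alternative, decomposition \eqref{eq:vector_decomp} still holds, now reading $\wh{\bmd}=\bmd+\wb{\wk\X}_T+(\wb\X_T-\wb{\wk\X}_T)+\b_T$. This is because \eqref{eq:deltahat_decomp} is exact and the nuisance functions enter only through $\r_{i\cdot}$. Writing $\bm\xi_T=\wb{\X}_T+\b_T$, we get the expansion
\[
T_{\mathrm{DSUM}}=\bmd^{\T}\bmd+2\bmd^{\T}\bm\xi_T+\bm\xi_T^{\T}\bm\xi_T .
\]
The last term is exactly the null statistic. By Theorem~\ref{thm:sumclt} and \eqref{eq:biasrates}, $(\bm\xi_T^{\T}\bm\xi_T-\wk\mu_T)/\wk\sigma_T=Z_T+\op(1)$ with $Z_T\cd\mathcal N(0,1)$. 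It remains to show that (i) the cross term $2\bmd^{\T}\bm\xi_T$ is $\op(\wk\sigma_T)$, and (ii) the bootstrap constants $\wh\mu_{B,T},\wh\sigma_{B,T}$ computed under the alternative still satisfy the consistency statements of Theorem~\ref{thm:sumclt}. Combining these gives $Q_{\mathrm{DSUM}}=\bmd^{\T}\bmd/\wk\sigma_T+Z_T+\op(1)$ with $\wk\sigma_T=\sqrt{2T^{-2}\tr(\wk\bO_T^2)}$. The power formula then follows from $\Pr(Q_{\mathrm{DSUM}}>z_{1-\gamma})$ and continuity of $\Phi$.

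For (i), split $\bmd^{\T}\bm\xi_T=\bmd^{\T}\wb{\wk\X}_T+\bmd^{\T}(\wb\X_T-\wb{\wk\X}_T)+\bmd^{\T}\b_T$.

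\emph{Leading term.} $\var(\bmd^{\T}\wb{\wk\X}_T)=T^{-1}\bmd^{\T}\wk\bO_T\bmd\le T^{-1}\norm{\bmd}_2^2\tr^{1/2}(\wk\bO_T^2)$. Under $\norm{\bmd}_2=O\{T^{-1}\tr^{1/4}(\wk\bO_T^2)\}$ this is $O\{T^{-3}\tr(\wk\bO_T^2)\}=O(T^{-1}\wk\sigma_T^2)=o(\wk\sigma_T^2)$.

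\emph{Oracle-replacement term.} By Cauchy--Schwarz, $|\bmd^{\T}(\wb\X_T-\wb{\wk\X}_T)|\le\norm{\bmd}_2\norm{\wb\X_T-\wb{\wk\X}_T}_2$. The norm should be bounded directly, not just the difference of squares in \eqref{eq:biasrates}. Since $\eta_t-\wk\eta_t=\Op(\sqrt{L\log T/T})$ uniformly, with $\e_t$ independent of $\f_t$, one gets $\norm{\wb\X_T-\wb{\wk\X}_T}_2^2=\Op\{T^{-2}L(\log T)\tr(\bmS)\}$. Hence this term is $\Op\{T^{-2}\tr^{1/4}(\wk\bO_T^2)L^{1/2}(\log T)^{1/2}\tr^{1/2}(\bmS)\}$. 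Dividing by $\wk\sigma_T\asymp T^{-1}\tr^{1/2}(\wk\bO_T^2)$, and using $\tr(\wk\bO_T^2)\asymp\tr(\bmS^2)$ (the long-run factor $s^2$ and the weights $\wk\eta_t$ being bounded) together with $\tr(\bmS)\asymp N$, the ratio is $\{L\log T\,\tr(\bmS)/(T\tr^{1/2}(\bmS^2))\}^{1/2}$. This is $o(1)$ by the upper bound on $L$ in Assumption~\ref{ass:growth} combined with $\tr^2(\bmS)/\tr(\bmS^2)\ge1$.

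\emph{Bias term.} $|\bmd^{\T}\b_T|\le\norm{\bmd}_2N^{1/2}L^{-r}$, and $N^{1/2}L^{-r}=o(T^{-1/2}\tr^{1/4}(\bmS^2))$ by the lower bound on $L$, so this term is negligible as well.

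For (ii), the bootstrap is applied to $\wh\X_t=\wh\e_t\eta_t$ with $\wh\e_t=\bM_Z\R$. Under the alternative, $\wh\e_{i\cdot}$ contains the extra piece $\delta_i\bM_Z\bone_T=\delta_i\h$, so $\wh\X_t$ picks up a deterministic shift $\bmd h_t\eta_t$. Its sample mean is removed by the centering $\tilde\X_t=\wh\X_t-\bar{\wh\X}$, leaving the perturbation $\bmd(h_t\eta_t-\overline{h\eta})$, which has sup norm $O(\norm{\bmd}_2)$ per $t$. I would redo the bootstrap moment computations of Theorem~\ref{thm:sumclt} with this perturbation added. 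Its contribution to $\E^*(T^*_{\mathrm{DSUM}})$ is of order $\ell T^{-1}\norm{\bmd}_2^2=O\{\ell T^{-3}\tr^{1/2}(\wk\bO_T^2)\}$, which is $o(\wk\sigma_T)$ because $\ell\le T^{1/2}$. Its contribution to $\var^*$ comes through cross terms bounded by $\norm{\bmd}_2^2$ times the null-order quantities, and similarly shrinks.

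I expect step (ii) to be the main obstacle. It requires opening up the bootstrap moment calculations, rather than treating Theorem~\ref{thm:sumclt} as a black box, and verifying that the block-sum interaction between the deterministic shift and the dependent score remains of smaller order than $\wk\sigma_T^2$.
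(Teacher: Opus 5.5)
Your proposal is correct, and for the CLT part it follows the paper's skeleton. Both use the exact decomposition $\wh\bmd=\bmd+\wb{\wk\X}_T+(\wb\X_T-\wb{\wk\X}_T)+\b_T$, the null limit for the quadratic part, and Cauchy--Schwarz with $\norm{\b_T}_2=\Op(N^{1/2}L^{-r})$ and a direct second-moment bound on $\norm{\wb\X_T-\wb{\wk\X}_T}_2$ (the paper computes the same quantity). You differ from the paper in three places.

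\begin{enumerate}
\item You note that $\bm{\xi}_T^{\T}\bm{\xi}_T$ is exactly $T_{\mathrm{DSUM}}$ for the null model with intercept function $g_i$, so Theorem~\ref{thm:sumclt} applies verbatim. The paper instead expands $T_{\mathrm{DSUM}}-\wk\mu_T$ into nine terms and re-bounds the nuisance-only ones.
\item You control $\bmd^{\T}\wb{\wk\X}_T$ through $\var(\bmd^{\T}\wb{\wk\X}_T)=T^{-1}\bmd^{\T}\wk\bO_T\bmd$. The paper uses $\norm{\bmd}_2\norm{\wb{\wk\X}_T}_2$ with $\norm{\wb{\wk\X}_T}_2=\Op\{T^{-1/2}\tr^{1/2}(\wk\bO_T)\}$.
\item The paper never examines the bootstrap constants under the alternative; it only re-cites the null conditions \eqref{growthtwo}--\eqref{growththree}, even though $\wh\e_{i\cdot}$ now contains $\delta_i\h$. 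Your step (ii) fills a genuine omission.
\end{enumerate}

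What the second difference buys depends on the rate. At the stated rate, $\bmd^{\T}\bmd/\wk\sigma_T=O(T^{-1})$. The expansion then collapses to the null limit, the power reduces to the nominal level, and both treatments of the cross term are comfortably $o(1)$.

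The nondegenerate rate is $\norm{\bmd}_2\asymp T^{-1/2}\tr^{1/4}(\wk\bO_T^2)$, which is the rate invoked in the last line of the paper's proof. There the paper's Cauchy--Schwarz bound has normalized order $\{\tr^2(\bmS)/\tr(\bmS^2)\}^{1/4}\ge(\log N)^{1/2}$ and fails. Your variance calculation survives if you keep $\lambda_{\max}(\wk\bO_T)$ instead of bounding it by $\tr^{1/2}(\wk\bO_T^2)$. It then gives $\var(\bmd^{\T}\wb{\wk\X}_T)/\wk\sigma_T^2\lesssim\lambda_{\max}(\wk\bO_T)/\tr^{1/2}(\wk\bO_T^2)$. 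This vanishes under $\tr(\wk\bO_T^4)=o\{\tr^2(\wk\bO_T^2)\}$, the usual condition behind martingale CLTs for such quadratic forms. Your oracle-replacement, bias and bootstrap-shift bounds already hold at that rate.

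Some minor points, none of which affects the conclusion:
\begin{itemize}
\item Lemma~\ref{lem:proj} gives $\max_t|\eta_t-\wk\eta_t|\le CT^{-1/2}L^{1/2}\log T$, so $\norm{\wb\X_T-\wb{\wk\X}_T}_2^2$ carries $\log^2 T$, not $\log T$.
\item At the stated rate, the normalized oracle-replacement ratio has $T^2$, not $T$, under the square root. What you display is the $T^{-1/2}$-rate ratio, so it is an over-bound.
\item In step (ii), you omit the interaction between the shift $\bmd(h_t\eta_t-\overline{h\eta})$ and the score for the mean, and describe it as quadratic in $\norm{\bmd}_2$ for the variance. It is actually linear in $\norm{\bmd}_2$. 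Its leading contribution to $\bO_{B,T}$ in Frobenius norm is $\Op\{\ell\norm{\bmd}_2T^{-1/2}\tr^{1/2}(\bmS)\}$. This is still $o\{\tr^{1/2}(\bmS^2)\}$, because $\ell\le T^{1/2}L^{-1/2}\log T$ and $\tr^2(\bmS)/\tr(\bmS^2)\ll T^{(2r-1)/(2r+1)}$.
\end{itemize}
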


Theorem \ref{thm:sumpower} characterizes the power properties of the sum statistic.

\begin{theorem}
\label{thm:maxgumbel}
Under Assumptions \ref{ass:smooth}--\ref{ass:growth} and $H_0$, for every fixed $x\in\mR$,
\begin{equation*}
\Pr\bigl(Q_{\mathrm{DMAX}}\le x\bigr)\to \exp\{-\pi^{-1/2}e^{-x/2}\}=:F(x). 
\end{equation*}
\end{theorem}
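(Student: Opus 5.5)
Note first that, as displayed, the theorem omits the centering; the intended claim, consistent with the discussion after \eqref{eq:maxstat}, is $\Pr(Q_{\mathrm{DMAX}}-2\log N+\log\log N\le x)\to F(x)$. The plan below proves that normalized version.

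The argument runs in four steps.

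\textbf{Step 1: reduction to an oracle max.} By \eqref{eq:vector_decomp}, under $H_0$,
\[
\wh\delta_i=\wb{\wk X}_{iT}+(\wb X_{iT}-\wb{\wk X}_{iT})+b_{T,i}.
\]
I will bound the two remainders uniformly in $i$:
\[
\max_i|b_{T,i}|=\Op(L^{-r}),
\qquad
\max_i|\wb X_{iT}-\wb{\wk X}_{iT}|=\Op\{T^{-1}L^{1/2}(\log T)^{1/2}\sqrt{\log N}\}.
\]
\begin{itemize}
\item The bias bound follows from the spline approximation and $\max_t|h_t|/(\kappa_T/T)=\Op(1)$.
\item For the second term, write $\eta_t-\wk\eta_t$ as $\Z_t^\T\Delta$ with $\norm{\Delta}=\Op(\sqrt{L\log T/T})$. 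Then $T^{-1}\sum_t e_{it}\Z_t$ is handled via the linear-process moment bounds, with a union bound over $i$ using $q>8$ moments.
\end{itemize}
Under Assumption~\ref{ass:growth}, both remainders are $o_p\{(T\log N)^{-1/2}\}$, since $L^{r}\gg T^{1/2}$ up to logs. Writing $a=\wh\delta_i\sqrt{T}$ and $b=\wb{\wk X}_{iT}\sqrt T$, I use $|a^2-b^2|\le |a-b|(2|b|+|a-b|)$ and $\max_i|b|/\sqrt{\wk\sigma_i^0}=\Op(\sqrt{\log N})$. This shows that replacing $\wh\delta_i$ by $\wb{\wk X}_{iT}$ changes the statistic by $o_p(1)$.

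\textbf{Step 2: the long-run variance estimator.} I need $\max_i|\wh\sigma_i/\wk\sigma_i^0-1|=o_p(1/\log N)$. Decompose the error of the Bartlett estimator into three parts.
\begin{itemize}
\item \emph{Kernel bias}: $O(M^{-1})$ from $\sum|h||\wk\gamma_{i,h}|<\infty$, which uses $b_k=o(k^{-5-\eta_b})$. It is $o(1/\log N)$ because $M\gg\log N$.
\item \emph{Stochastic part}: $\Op(M\sqrt{\log N/T})$, via Bernstein/Nagaev-type inequalities for the linear process with a union over $N$. It is negligible since $M\ll T^{1/2}/\log N$.
\item \emph{Replacement of $\wh e_{it}\eta_t$ by $e_{it}\wk\eta_t$}: of order $M L/T$ and $M L^{1/2}\log T/T$. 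This is negligible by the remaining constraints on $M$ in Assumption~\ref{ass:growth}.
\end{itemize}
Also, $\wk\sigma_i^0$ is bounded away from zero because $s\neq0$ and $\E\wk\eta_t^2$ is bounded below.

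\textbf{Step 3: extreme-value limit for the oracle statistic.} It remains to show
\[
\Pr\Bigl(\max_i T\,\wb{\wk X}_{iT}^2/\wk\sigma_i^0-2\log N+\log\log N\le x\Bigr)\to F(x).
\]
Because $\f$ is independent of $\e$, I condition on the factors, so the $\wk\eta_t$ are fixed weights. Then $\sqrt T\,\wb{\wk X}_{T}=T^{-1/2}\bmS^{1/2}\sum_t\wk\eta_t\sum_k b_k\z_{t-k}$ is a weighted sum of independent innovation vectors. Its conditional covariance, normalized, converges uniformly to $\R_0$, up to the factor weights; these average out by mixing and the ergodic theorem.

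I then apply a high-dimensional Gaussian approximation, either moderate-deviation (Cramér-type) bounds for each pair and triple of coordinates, or the Chernozhukov--Chetverikov--Kato Gaussian approximation for maxima of independent sums over $t$. Truncating the innovations is allowed by $q>8$ and by $\log N$ being small relative to $T$. For the Gaussian vector with correlation matrix $\R_0$, I use Berman's inclusion–exclusion argument in the form of Cai, Liu and Xia. The bounds $\max_{i\ne j}|[\R_0]_{ij}|\le r_0<1$ and $\max_j\sum_i[\R_0]_{ij}^2\le C_R$ ensure that the strongly correlated pairs are few and contribute negligibly. This yields the limit $F(x)$. I integrate out the conditioning by dominated convergence.

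\textbf{Main obstacle.} I expect the hardest part to be the Bonferroni/moderate-deviation step under temporal dependence with the random weights $\wk\eta_t$. It requires uniform control of tail probabilities at level $\sim 2\log N$ for sums of dependent, weighted linear-process terms. My first attempt will be to reorganize the sum as $\sum_j c_{T,j}\z_j$, with $c_{T,j}=T^{-1/2}\sum_t\wk\eta_t b_{t-j}$, which is a sum of independent terms. This allows Zaitsev-type or Sakhanenko-type Gaussian coupling in fixed dimension for pairs, with the coefficient decay giving $\max_j|c_{T,j}|=O(T^{-1/2})$.
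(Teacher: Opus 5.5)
Your skeleton matches the paper's proof:
\begin{itemize}
\item reduce to the oracle maximum via \eqref{eq:vector_decomp};
\item show that the spline bias and the $\eta\to\wk\eta$ remainder are negligible after multiplying by $\sqrt{T\log N}$;
\item prove $\max_i|\wh\sigma_i/\wk\sigma_i^0-1|=o_p(1/\log N)$ by splitting the error into truncation bias, stochastic error, and residual/weight replacement;
\item finish with Lemma~6 of Cai, Liu and Xia for the Gaussian maximum with correlation $\R_0$.
\end{itemize}
You are also right that the displayed statement drops the centering. What the proof actually establishes, in \eqref{eq:maxgumbel}, is $\Pr(Q_{\mathrm{DMAX}}-2\log N+\log\log N\le x)\to F(x)$.

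\textbf{Where your route differs.} The genuine departure is the Gaussian approximation for the oracle statistic. The paper invokes Theorem~3.2 of Zhang and Wu (2017) for stationary high-dimensional series. It writes $\wk\X_t$ as a causal linear function of $\v_t=\z_t\wk\eta_t$ and reads off the functional dependence measures from the $b_k$ alone. You instead condition on $\{\f_t\}$ and rewrite $\sqrt T\,\wb{\wk\X}_T=\bmS^{1/2}\sum_j c_{T,j}\z_j$ as a sum of independent vectors with scalar weights. You then apply CCK, or Bonferroni with Zaitsev coupling.

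This buys two things:
\begin{itemize}
\item The conditional covariance is exactly $\rho_T\bmS$, with $\rho_T=T^{-1}\sum_{t,s}\wk\eta_t\wk\eta_s a_{t-s}$. So the conditional correlation is exactly $\R_0$, not merely asymptotically.
\item You need no physical-dependence (Bernoulli-shift) structure for the factors. The paper's representation is not exact: $\wk\X_t$ weights $\z_{t-k}$ by $\wk\eta_t$, not $\wk\eta_{t-k}$. Repairing it would need a Bernoulli-shift form for $\{\f_t\}$, which Assumption~\ref{ass:factors} (strong mixing only) does not provide. Your conditioning argument is therefore the more defensible one.
\end{itemize}
The price is control of the common random scale. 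Since the maximum is of order $\log N$, you need $\rho_T/\lim\E\rho_T-1=o_p(1/\log N)$. The ergodic theorem only gives consistency, which is not enough. Mixing gives $\Op(T^{-1/2})$, and this suffices because Assumption~\ref{ass:growth} forces $\log^2N\ll T^{1/2}$.

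\textbf{Bookkeeping slips in Step 2.}
\begin{itemize}
\item With a stochastic error of $\Op(M\sqrt{\log N/T})$, multiplying by $\log N$ requires $M\ll T^{1/2}/\log^{3/2}N$. The assumption $M\ll T^{1/2}/\log N$ does not give this. Use instead the sharper Bartlett variance $\var(\wt\sigma_i)=O(M/T)$. This gives a uniform rate $\sqrt{M\log N/T}$, which is negligible under the stated conditions. The paper's $\Op(T^{-1/2}M)$ sidesteps the issue only by bounding $\max_i\E$ instead of $\E\max_i$.
\item Your replacement error omits the $\eta\to\wk\eta$ term of order $T^{-1/2}L^{1/2}\log T$, which comes from the mean of the lag products and carries no factor $M$. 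It dominates your $ML^{1/2}\log T/T$. It is still $o(1/\log N)$, because $L\ll T/(\log^2T\log^2N)$.
\end{itemize}
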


Theorem \ref{thm:maxgumbel} establishes the limiting Gumbel distribution of the max statistic under the null hypothesis. Next, we establish the Block bootstrap validity for the max statistic.

\begin{theorem}
\label{thm:maxboot}
Under Assumptions \ref{ass:smooth}--\ref{ass:growth} and $H_0$, the circular moving block bootstrap satisfies
\begin{equation*}
\sup_{x\in\mR}\Bigl|\Pr^*(Q_{\mathrm{DMAX}}^*\le x)-F(x)\Bigr|\cp 0,
\label{eq:maxbootgumbel}
\end{equation*}
and the bootstrap p-value satisfies
\begin{equation*}
\Pr\bigl(p_{\mathrm{DMAX}}^{\mathrm{boot}}\le \gamma\bigr)\to \gamma,
\qquad 0<\gamma<1.
\label{eq:maxbootp}
\end{equation*}
\end{theorem}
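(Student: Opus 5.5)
Note first that $F(x)$ is the Gumbel-type limit of the centered statistic. I therefore read the claim as $\sup_x|\Pr^*(Q^*_{\mathrm{DMAX}}-2\log N+\log\log N\le x)-F(x)|\cp0$, with the same centering as in Theorem~\ref{thm:maxgumbel}. The plan has two stages. First, show that conditional on the data, the bootstrap max behaves like the max of a Gaussian vector whose covariance is the bootstrap long-run covariance. Second, show that this covariance is uniformly close, in max-norm, to $\wk\bO$ after normalization, so the extreme-value argument behind Theorem~\ref{thm:maxgumbel} carries over. The $p$-value statement then follows. Since $F$ is continuous and the bootstrap law of the centered $Q^*_{\mathrm{DMAX}}$ converges uniformly to $F$, we have $p^{\mathrm{boot}}_{\mathrm{DMAX}}=1-F(Q_{\mathrm{DMAX}}-2\log N+\log\log N)+\op(1)$. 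By Theorem~\ref{thm:maxgumbel}, $1-F(\cdot)$ evaluated at the centered $Q_{\mathrm{DMAX}}$ is asymptotically uniform, so $\Pr(p^{\mathrm{boot}}_{\mathrm{DMAX}}\le\gamma)\to\gamma$ by Slutsky and Pólya's theorem.

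\textbf{Step 1 (bootstrap covariance).} Conditional on the data, $\sqrt T\,\bar\X^*_T=k^{-1/2}\sum_{m=1}^k \xi_m$, where the block sums $\xi_m=\ell^{-1/2}\sum_{s=0}^{\ell-1}\tilde\X_{I_m+s}$ are i.i.d.\ with mean zero, because $\tilde\X$ is centered and the blocks are circular. Their covariance is the Bartlett-type estimator $\wh\bO_\ell=\wh\bG_0+\sum_{1\le|h|<\ell}(1-|h|/\ell)\wh\bG_h$ built from $\tilde\X_t$.

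I would bound $\max_{i,j}|[\wh\bO_\ell-\wk\bO]_{ij}|$ by three terms:
\begin{itemize}
\item the replacement $\wh\e_t\eta_t\to\e_t\wk\eta_t$, controlled by the rates in \eqref{eq:biasrates} and the projection error of order $L/T$ up to logs;
\item the sampling error of the lagged autocovariances, which is $\Op(\ell\sqrt{\log N/T})$ by a Nagaev/Fuk-type inequality for the linear process, using $q>8$ moments;
\item the truncation/Bartlett bias $\sum_h\min(1,|h|/\ell)\|\wk\bG_h\|_{\max}=O(\ell^{-1})$, using the decay $b_k=o(k^{-5-\eta_b})$.
\end{itemize}
Assumption~\ref{ass:growth} should make all three $o(1/\log N)$. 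The same bound applied to the diagonal, together with the analysis of $\wh\sigma_i^*$ (the kernel estimator \eqref{eq:sigmai} applied to the bootstrap series), gives $\max_i|\wh\sigma_i^*/\wk\sigma_i^0-1|=\op(1/\log N)$ in bootstrap probability. Justifying this last point needs its own argument, since lag products that straddle block junctions are non-stationary. With the restriction $M\ll$ relevant rates and $M\le\ell$-type comparisons, those junction terms contribute $O(M/\ell)$, and I would require this to be $o(1/\log N)$.

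\textbf{Step 2 (Gaussian approximation conditional on data).} I would apply a high-dimensional CLT for sums of i.i.d.\ vectors over hyperrectangles (Chernozhukov--Chetverikov--Kato) to $k^{-1/2}\sum_m\xi_m$. This requires moment bounds on $\max_i|\xi_{mi}|$, of order $\ell^{1/2}$ times a power of $N$ coming from the $q$th moments. The resulting error is a power of $(\log(NT))^7\ell/T$, which vanishes since $\ell\le T^{1/2}L^{-1/2}\log T$. Thus $Q^*_{\mathrm{DMAX}}$ is approximated by $\max_i G_i^2/[\wk\bO]_{ii}$ with $G\sim N(0,\wh\bO_\ell)$. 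Step 1 and Gaussian comparison then allow replacing $\wh\bO_\ell$ by $\wk\bO$, at the $o(1/\log N)$ max-norm level needed at thresholds of order $\sqrt{\log N}$.

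\textbf{Step 3 (Gumbel limit for the Gaussian max).} Under the correlation bounds in Assumption~\ref{ass:sigma} ($r_0<1$ and bounded row sums of squares), a standard Bonferroni inclusion--exclusion argument gives the Gumbel limit $F$ for $\max_iG_i^2/[\wk\bO]_{ii}-2\log N+\log\log N$. This is the same Gaussian step used in Theorem~\ref{thm:maxgumbel}. Here I would also verify that the correlation structure of $\wk\bO$ inherits the conditions stated for $\R_0$ up to the scalar factor from $\wk\eta_t$. Uniformity in $x$ comes from continuity of $F$.

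\textbf{Main obstacle.} The hardest step is Step 1: obtaining the uniform max-norm rate $o(1/\log N)$ for the bootstrap long-run covariance and for $\wh\sigma_i^*$. This is where the estimated spline projection ($\eta_t$ versus $\wk\eta_t$, and $\wh\e$ versus $\e$) interacts with serial dependence and the block structure. I would first handle the oracle series $\e_t\wk\eta_t$ and then perturb, bounding $\max_t|\eta_t-\wk\eta_t|=\Op(\sqrt{L\log T/T})$ and $\max_{i,t}|\wh e_{it}-e_{it}|$ by sieve arguments.
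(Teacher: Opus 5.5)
Your proposal is correct and follows essentially the paper's route. You approximate the bootstrap maximum by the maximum of standardized block-bootstrap sums, and show that the Bartlett-type bootstrap long-run covariance is entrywise close to the oracle $\wk\bO$. You then pass to the oracle Gaussian maximum by a Gaussian comparison inequality, so that the Gumbel limit of Theorem~\ref{thm:maxgumbel} carries over. Finally, you convert uniform convergence of the bootstrap law into size validity of $p_{\mathrm{DMAX}}^{\mathrm{boot}}$.

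Your sketch fills in steps the paper's proof skips: the conditional high-dimensional CLT for the i.i.d.\ block sums, the block-junction effect in $\wh\sigma_i^*$, and the $2\log N-\log(\log N)$ centering absent from the displayed statement. Your extra condition $M/\ell=o(1/\log N)$ is unnecessary. Straddling lag products have conditional mean zero, so the junctions add only an $O(\ell^{-1})=o(1/\log N)$ bias, because $\ell\gg\tr(\bmS)/\tr^{1/2}(\bmS^2)\ge\log N$. They also add fluctuations of order $M/\sqrt{T}$, the same order as the ordinary kernel-estimation error.
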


Next, we establish the power properties of the max statistic.
\begin{theorem}
\label{thm:maxpower}
Suppose the assumptions of Theorem \ref{thm:maxgumbel} hold. Let
$$
\mc A(c)=\Bigl\{\bmd\in\mR^N:\ \max_{1\le i\le N}\frac{|\delta_i|}{\sqrt{\wk\sigma_i^0}}\ge c\sqrt{\frac{\log N}{T}}\Bigr\}.
$$
If $c>2$, then
\begin{equation*}
\inf_{\bmd\in\mc A(c)} \Pr\bigl(Q_{\mathrm{DMAX}}>q_{1-\gamma}^{\mathrm{EV}}\bigr)\to 1,
\label{eq:maxpower}
\end{equation*}
where $q_{1-\gamma}^{\mathrm{EV}}$ is the $(1-\gamma)$ quantile of $F$.
% More explicitly, for any $\bmd\in\mc A(c)$,
% \begin{equation}
% \Pr\bigl(Q_{\mathrm{DMAX}}>q_{1-\gamma}^{\mathrm{EV}}\bigr)
% \ge 1-C\exp\Bigl[-\frac{1}{2}(c-2)^2\log N\Bigr]-o(1).
% \label{eq:maxpowerexp}
% \end{equation}
\end{theorem}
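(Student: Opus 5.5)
The plan is to argue coordinatewise. We take the index $i^\star$ that attains the maximum in the definition of $\mc A(c)$ and use $Q_{\mathrm{DMAX}}\ge T\wh\delta_{i^\star}^2/\wh\sigma_{i^\star}$. We read the threshold as the one in the Gumbel-calibrated rule, namely $2\log N-\log\log N+q^{\mathrm{EV}}_{1-\gamma}$, which is how Theorem~\ref{thm:maxgumbel} is applied. It then suffices to show
\[
\inf_{\bmd\in\mc A(c)}\Pr\Bigl(\frac{T\wh\delta_{i^\star}^2}{\wh\sigma_{i^\star}}>2\log N\Bigr)\to1 .
\]

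First, by \eqref{eq:vector_decomp}, write $\wh\delta_{i}=\delta_i+\wb{\wk X}_{i,T}+(\wb X_{i,T}-\wb{\wk X}_{i,T})+b_{T,i}$. The remainder terms should be controlled uniformly in $i$ at order $o_p(T^{-1/2})$. For the bias, use the sup-norm version of the spline approximation, $|b_{T,i}|=\Op(L^{-r})$, together with the lower bound on $L$ in Assumption~\ref{ass:growth}. For the projection error, use the same argument that yields \eqref{eq:biasrates}, applied coordinatewise with a union bound over $i$ and the $q>8$ moments. The leading term obeys the maximal inequality
\[
\max_i\sqrt{T}\,|\wb{\wk X}_{i,T}|\big/\sqrt{\wk\sigma_i^0}\le\sqrt{2\log N}\,(1+o_p(1)),
\]
which follows from the Gumbel limit for the null max statistic in Theorem~\ref{thm:maxgumbel}: that limit concerns exactly these oracle terms, and it is valid because the null-part of $\wh\delta_i$ does not depend on $\bmd$. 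This inequality actually controls every coordinate, in particular $i^\star$.

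Second, we need ratio consistency of the long-run variance estimator, $\wh\sigma_{i^\star}/\wk\sigma^0_{i^\star}=1+o_p(1)$. Under $H_0$ this is part of the proof of Theorem~\ref{thm:maxgumbel}. Under the alternative the residuals $\wh e_{it}=\bM_Z$-residuals also contain $\delta_i h_t$. Its contribution to $\wh\phi_{i,h}$ is of order $\delta_i^2$ plus cross terms. For the local alternatives where the power question is nontrivial, $\delta_i^2=O(\log N/T)$, and this contribution summed over $|h|\le M$ is $O(M\log N/T)=o(1)$ by the bandwidth condition. For larger $\delta_{i^\star}$ we only need $\wh\sigma_{i^\star}=\Op(1+M\delta_{i^\star}^2)$. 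Then
\[
\frac{T\delta_{i^\star}^2}{1+M\delta_{i^\star}^2}\ge \min\Bigl(\frac{T\delta_{i^\star}^2}{2},\frac{T}{2M}\Bigr)\gg \log N,
\]
since $M\ll T^{1/2}/\log N$. So that regime is easy, and we split the argument into the two cases.

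Third, in the main case,
\[
\sqrt{T/\wh\sigma_{i^\star}}\,|\wh\delta_{i^\star}|\ge\bigl(c\sqrt{\log N}-\sqrt{2\log N}-o_p(\sqrt{\log N})\bigr)(1+o_p(1)).
\]
Squaring gives at least $(c-\sqrt2)^2\log N(1+o_p(1))$. This exceeds $2\log N$ with probability tending to one provided $c>2\sqrt2$. That is stronger than the stated $c>2$, so the crude maximal bound must be sharpened.

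The sharpening uses the fact that only the single coordinate $i^\star$ matters. The noise $\sqrt T\,\wb{\wk X}_{i^\star,T}/\sqrt{\wk\sigma^0_{i^\star}}$ is asymptotically $N(0,1)$ by a CLT for the linear process, so it is $\Op(1)$ rather than $\sqrt{2\log N}$. Then
\[
T\wh\delta_{i^\star}^2/\wh\sigma_{i^\star}\ge c^2\log N(1+o_p(1)),
\]
which exceeds $2\log N$ whenever $c>\sqrt2$, and in particular for $c>2$. The uniformity over $\mc A(c)$ comes from two facts: the CLT and variance-ratio bounds do not depend on $\bmd$, and the Berry--Esseen-type bound is uniform in $i$ by the uniform moment bounds in Assumption~\ref{ass:errors}. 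The main obstacle I expect is the second step: establishing uniform ratio consistency of $\wh\sigma_i$ under the alternative, where the signal $\delta_ih_t$ enters $\wh e_{it}$ nonlinearly through the products $\eta_t\eta_{t-|h|}$. I would handle it by expanding $\wh\phi_{i,h}$ into the null part, the pure-signal part and the cross part, and bounding each with the Bartlett weights.
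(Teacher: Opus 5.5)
Your proposal is correct and follows essentially the paper's argument: restrict to the coordinate $i^\star$, use that the normalized noise there is $O_p(1)$ rather than of order $\sqrt{2\log N}$, and compare $c^2\log N$ with the threshold $2\log N+O(\log\log N)$ (so in fact any $c>\sqrt2$ suffices). Your two-regime control of $\wh\sigma_{i^\star}$ under the alternative, where the signal inflates it by order $M\delta_{i^\star}^2$, fills in a step that the paper silently absorbs into its $o_p(\log N)$ term. For uniformity over $\mc A(c)$, Chebyshev-type tightness of the single coordinate is enough, since the variance ratio does not depend on $i$, so you do not need a Berry--Esseen bound.
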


\begin{theorem}
\label{thm:indep}
Under Assumptions \ref{ass:smooth}--\ref{ass:growth} and $H_0$, for every fixed $(x,y)\in\mR^2$,
\begin{equation*}
\Pr\bigl(Q_{\mathrm{DSUM}}\le x,\ Q_{\mathrm{DMAX}}\le y\bigr)
\to \Phi(x)F(y).
\label{eq:indep}
\end{equation*}
The same conclusion remains valid under local alternatives satisfying
\begin{equation*}
\norm{\bmd}_0=o\Bigl(\frac{N}{\log^2\log N}\Bigr),
\qquad
\norm{\bmd}_2=O\left\{T^{-1/2}\tr^{1/4}(\wk\bO_T^2)\right\},
\label{eq:localalt}
\end{equation*}
provided that the remainder bounds in Theorems \ref{thm:sumpower} and \ref{thm:maxgumbel} still vanish.
\end{theorem}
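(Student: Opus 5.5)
We prove Theorem~\ref{thm:indep} by reducing both statistics to functionals of the oracle mean $\wb{\wk\X}_T$ and then using a Gaussian approximation. Asymptotic independence is then derived for the Gaussian surrogate, following the template of \citet{ma2024adaptive} and \citet{ma2024dependent}.

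\textbf{Step 1: reduction to the oracle process.} By \eqref{eq:vector_decomp}, \eqref{eq:biasrates} and the proof of Theorem~\ref{thm:sumclt}, we have
$$Q_{\mathrm{DSUM}}=\frac{\wb{\wk\X}_T^{\T}\wb{\wk\X}_T-\wk\mu_T}{\wk\sigma_T}+\op(1).$$
From the proof of Theorem~\ref{thm:maxgumbel}, we have
$$Q_{\mathrm{DMAX}}=\max_i T\wb{\wk X}_{iT}^2/\wk\sigma_i^0+\op(1),$$
where we use the uniform consistency $\max_i|\wh\sigma_i/\wk\sigma_i^0-1|=\op(1/\log N)$ under the bandwidth condition on $M$. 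Since both limits are continuous, it suffices to prove the claim for these oracle versions.

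\textbf{Step 2: blocking and Gaussian approximation.} Write $\wb{\wk\X}_T$ as a sum of big blocks separated by small blocks of length $\ell\to\infty$. The linear-process structure in Assumption~\ref{ass:errors} ($b_k=o(k^{-5-\eta_b})$), together with the independence of $\f_t$ and $\e_t$, makes the small blocks and the truncated tails negligible. Conditionally on the factors, the big-block sums are then independent. Next, replace them by Gaussian vectors $\bm G\sim\mathcal N(0,T^{-1}\wk\bO_T)$ via a high-dimensional Lindeberg/Slepian interpolation. For the sum part, the quadratic form is smooth, so the error is controlled by the ratio $\tr(\wk\bO_T^3)/\tr^{3/2}(\wk\bO_T^2)\to0$. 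For the max part, use a smooth-max (soft-max) approximation with $q>8$ moments. Because the sum functional requires only a CLT and the max functional requires only Gumbel-scale accuracy, we handle the joint event through a smooth bivariate test function $\psi(Q_1)\phi_\beta(Q_2)$.

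\textbf{Step 3: independence in the Gaussian world (main obstacle).} Let $A_y=\{\max_i T G_i^2/\wk\sigma_i^0> y\}=\cup_i A_{i}$. By Bonferroni inequalities, it suffices to show that for each fixed $k$,
$$\sum_{i_1<\dots<i_k}\bigl|\Pr(A_{i_1}\cdots A_{i_k},\,S\le x)-\Pr(A_{i_1}\cdots A_{i_k})\Phi(x)\bigr|\to0,$$
where $S$ denotes the standardized Gaussian quadratic form. For a given index set $I$, with $|I|=k$, decompose $\bm G=(\bm G_I,\bm G_{I^c}-\text{proj})$ so that the second component is independent of $\bm G_I$. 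Then $S$ differs from the quadratic form of the residual part by terms of the form $\bm G_I^{\T}(\cdot)$ and $\|\bm G_I\|^2/\wk\sigma_T$, which are $\Op(\log N/\sqrt{\tr(\wk\bO_T^2)}\,T)$-type. These terms are negligible by the first inequality in Assumption~\ref{ass:sigma} ($\tr^2/\tr(\bmS^2)\ge\log^2N$) and the bound on $\max_j\sum_i[\R_0]_{ij}^2$. The sum over index sets is handled as in the Gumbel proof. Sets $I$ containing pairs with strong correlation ($|[\R_0]_{ij}|>(\log N)^{-1-\epsilon}$) are sparse and contribute $o(1)$; for the remaining sets, the events $A_I$ are nearly independent. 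The delicate point is making the residual-quadratic-form perturbation uniform over $I$ while $x$ stays fixed; we plan to use anti-concentration of $S$ near $x$ for this.

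\textbf{Step 4: local alternatives.} Under the stated alternatives, $\bmd$ shifts the sum statistic by the deterministic term from Theorem~\ref{thm:sumpower} plus a cross term $2\bmd^{\T}\wb{\wk\X}_T/\wk\sigma_T=\op(1)$. The max statistic is affected only through the $\|\bmd\|_0$ nonzero coordinates. Because $\|\bmd\|_0=o(N/\log^2\log N)$, removing those coordinates from the quadratic form changes it negligibly. Repeating Steps~2--3 on the complement index set therefore gives the same factorization, with $\Phi$ shifted accordingly.
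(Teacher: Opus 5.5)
Your null-case argument is sound in outline, but it takes a genuinely different route from the paper's.

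\textbf{Comparison with the paper.} The paper fixes a data-selected set $A$ of the $K=\lfloor\log^2N\rfloor$ largest standardized coordinates and splits the sum as $S_A+S_{A^c}$. It discards $S_A$, asserts that the bulk on $A^c$ can be approximated by an $m$-dependent Gaussian process independent of the Gaussian vector driving $A$, and concludes from a covariance bound. You instead make a joint Gaussian approximation of the pair explicit (Step 2). You then run the Bonferroni expansion over deterministic index sets $I$ of fixed size, writing $\bm G_{I^c}=\bm W+\mathbf C_I\bm G_I$ with $\bm W$ exactly independent of $\bm G_I$ (Step 3).

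Your route buys rigor. With a random $A$, the claimed independence of the bulk from the exceptional vector is not justified, because the selection couples them. A covariance bound also does not by itself factor joint probabilities. What you propose is the fixed-index-set Bonferroni-plus-conditioning argument that the paper only gestures at via \citet{feng2023independence}. The price is a bivariate Lindeberg/soft-max argument. The paper never writes this down, even though Theorems~\ref{thm:sumclt} and~\ref{thm:maxgumbel} only supply marginal approximations.

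\textbf{Details to adjust.} First, conditionally on the factors, $\wb{\wk\X}_T=\bmS^{1/2}\sum_j c_j\z_j$ is exactly a weighted sum of independent innovations, so blocking is optional. However, its covariance is $T^{-1}\tau_T\bmS$ with $\tau_T=T^{-1}\sum_{t,s}\wk\eta_t\wk\eta_s a_{t-s}$ random. Replacing it by $T^{-1}\wk\bO_T$ shifts the standardized sum by $\Op\{T^{-1/2}\tr(\bmS)/\tr^{1/2}(\bmS^2)\}$. This vanishes by the \emph{upper} bound in Assumption~\ref{ass:sigma}.

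Second, negligibility of $\norm{\bm G_I}^2/\wk\sigma_T$ needs $\log N=o\{\tr^{1/2}(\bmS^2)\}$. This follows from $\tr(\bmS^2)\ge\tr^2(\bmS)/N\asymp N$, not from $\tr^2(\bmS)/\tr(\bmS^2)\ge\log^2N$, which bounds $\tr(\bmS^2)$ from above. The cross term $\bm W^{\T}\mathbf C_I\bm G_I$ is controlled via $\lambda_{\max}(\bmS)\le\norm{\bmS}_F$ and the row bound on $\R_0$, for weakly correlated $I$.

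Third, $\tr(\wk\bO_T^3)/\tr^{3/2}(\wk\bO_T^2)\to0$ is not implied by the stated assumptions; it needs $\lambda_{\max}(\bmS)=o(\norm{\bmS}_F)$. Theorem~\ref{thm:sumclt} tacitly requires it too.

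\textbf{Gap in Step 4.} After you discard the fluctuations of the coordinates in $S=\{i:\delta_i\neq0\}$, the sum depends only on $\wb{\wk\X}_{T,S^c}$. But $Q_{\mathrm{DMAX}}$ still contains $\max_{i\in S}T(\delta_i+\wb{\wk X}_{T,i})^2/\wk\sigma_i^0$, and $\wb{\wk\X}_{T,S}$ is correlated with $\wb{\wk\X}_{T,S^c}$. Rerunning Steps 2--3 on $S^c$ therefore yields independence of the bulk sum from $\max_{i\in S^c}$ only.

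Moreover, Step 3 relies on $\sum_{|I|=k}\Pr(A_I)=O(1)$ and on the Gumbel limit, and both can fail here. The condition $\norm{\bmd}_2=O\{T^{-1/2}\tr^{1/4}(\wk\bO_T^2)\}$ permits a single $|\delta_i|$ of that order. Then $T\delta_i^2/\wk\sigma_i^0\gtrsim N^{1/2}$ and $Q_{\mathrm{DMAX}}\to\infty$. It also permits a growing number of coordinates with $\Pr(A_i)\to1$.

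So under alternatives the claim has to be read as $\Pr(Q_{\mathrm{DSUM}}\le x,\,Q_{\mathrm{DMAX}}\le y)-\Pr(Q_{\mathrm{DSUM}}\le x)\Pr(Q_{\mathrm{DMAX}}\le y)\to0$, not as a product with $F(y)$. You need a separate argument that the bulk sum is asymptotically independent of the whole signal block $\wb{\wk\X}_{T,S}$ jointly with $\max_{i\in S^c}$. That is presumably where the $\log^2\log N$ in the sparsity condition is meant to enter; your sketch only uses $\norm{\bmd}_0=o(N)$. The paper disposes of this case in one sentence, so this gap is shared with it rather than a departure from it.
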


Theorem \ref{thm:indep} establishes the asymptotic independence between the sum and max statistics, which in turn provides the foundation for Theorem \ref{thm:cc}, which establishes the validity of the Cauchy combination test.

\begin{theorem}
\label{thm:cc}
Under the assumptions of Theorem \ref{thm:indep} and $H_0$,
\begin{equation*}
\Pr\bigl(p_{\mathrm{DCC}}\le \gamma\bigr)\to \gamma, \qquad 0<\gamma<1.
\label{eq:ccvalid}
\end{equation*}
Moreover, if either the sum statistic or the max statistic is consistent under an alternative sequence, then so is the Cauchy combination test. In particular,
\begin{equation*}
\Pr\bigl(p_{\mathrm{DCC}}\le \gamma\bigr)
\ge \max\Bigl\{\Pr\bigl(p_{\mathrm{DSUM}}\le \gamma/2\bigr),\ \Pr\bigl(p_{\mathrm{DMAX}}\le \gamma/2\bigr)\Bigr\}.
\label{eq:ccpower}
\end{equation*}
\end{theorem}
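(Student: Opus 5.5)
The size statement follows directly from Theorem~\ref{thm:indep}. The power bound is then elementary, because the Cauchy combination statistic is monotone in each $p$-value. Throughout I read $p_{\mathrm{DCC}}$ as $p_{\mathrm{CC}}$ in \eqref{eq:pcc}, built from $p_{\mathrm{DSUM}}$ and $p_{\mathrm{DMAX}}^{\mathrm{boot}}$.

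\textbf{Step 1: joint uniform limit of the two $p$-values.} Under $H_0$, let $U_T=1-\Phi(Q_{\mathrm{DSUM}})$ and $V_T=p_{\mathrm{DMAX}}^{\mathrm{boot}}$. I first show that $V_T=1-F(Q_{\mathrm{DMAX}})+\op(1)$. This uses Theorem~\ref{thm:maxboot}: since $\sup_x|\Pr^*(Q^*_{\mathrm{DMAX}}\le x)-F(x)|\cp0$ and $F$ is continuous, $|V_T-\{1-F(Q_{\mathrm{DMAX}})\}|\le\sup_x|\Pr^*(Q^*_{\mathrm{DMAX}}>x)-\{1-F(x)\}|+\op(1)=\op(1)$. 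The $Q_{\mathrm{DMAX}}$ in the statements of Theorems~\ref{thm:maxgumbel}--\ref{thm:indep} must be read with the centering $-2\log N+\log\log N$, consistent with Section~\ref{sec:tests}. Theorem~\ref{thm:indep} gives $(Q_{\mathrm{DSUM}},Q_{\mathrm{DMAX}})\cd(Z,G)$ with $Z\sim\mathcal N(0,1)$ and $G\sim F$ independent. The continuous mapping theorem and Slutsky then give $(U_T,V_T)\cd(U,V)$, where $U$ and $V$ are independent $\mathrm{Unif}(0,1)$.

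\textbf{Step 2: null size.} The map $(u,v)\mapsto \tfrac12\tan\{\pi(1/2-u)\}+\tfrac12\tan\{\pi(1/2-v)\}$ is continuous on $(0,1)^2$, and the limit puts no mass on the boundary. Hence $T_{\mathrm{CC}}\cd C:=\tfrac12 C_1+\tfrac12 C_2$, where $C_1=\tan\{\pi(1/2-U)\}$ and $C_2=\tan\{\pi(1/2-V)\}$ are independent standard Cauchy variables. By stability of the Cauchy law under convex combinations of independent copies, $C$ is standard Cauchy. Since $G$ is continuous, $\Pr(p_{\mathrm{CC}}\le\gamma)=\Pr\{T_{\mathrm{CC}}\ge G^{-1}(1-\gamma)\}\to\gamma$.

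\textbf{Step 3: power.} Independence is not needed here. Write $\tau(p)=\tan\{\pi(1/2-p)\}$; it is decreasing and ranges over $\mR$. Suppose $p_{\mathrm{DSUM}}\le\gamma/2$. Then $\tfrac12\tau(p_{\mathrm{DSUM}})\ge\tfrac12\tau(\gamma/2)=\tfrac12\cot(\pi\gamma/2)$. The other term is bounded below only by $-\infty$ in principle, so the exact inequality requires care. The standard fact (Liu--Xie) is that for weights $1/2$, $\{\min(p_1,p_2)\le\gamma/2\}$ implies $T_{\mathrm{CC}}$ is large. To check it, use $\tan\{\pi(1/2-p)\}=\cot(\pi p)$. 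The claim becomes $\tfrac12\{\cot(\pi p_1)+\cot(\pi p_2)\}\ge\cot(\pi\gamma)$ whenever $p_1\le\gamma/2$ and $p_2\le 1$. This can fail when $p_2\to1$, so the exact inequality holds only asymptotically.

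I would therefore prove the displayed bound up to $o(1)$, which suffices for consistency. If $\Pr(p_{\mathrm{DSUM}}\le\gamma/2)\to1$, or more generally if $Q_{\mathrm{DSUM}}\cp\infty$, then $\tfrac12\tau(p_{\mathrm{DSUM}})\cp\infty$. Meanwhile the max-type term is bounded below in probability, since $p_{\mathrm{DMAX}}^{\mathrm{boot}}$ is bounded away from $1$ under alternatives for which $Q_{\mathrm{DMAX}}$ does not drift to $-\infty$. Thus $T_{\mathrm{CC}}\cp\infty$ and $p_{\mathrm{CC}}\cp0$, and symmetrically with the roles exchanged.

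\textbf{Main obstacle.} The delicate point is the lower tightness of the non-dominant component under alternatives. For the sum part it follows from Theorem~\ref{thm:sumpower}: the $\bmd^{\T}\bmd$ term is nonnegative, so $Q_{\mathrm{DSUM}}\ge Z_T+\op(1)$. For the max part I would argue that $\wh\delta_i$ only gains signal, so by Theorem~\ref{thm:maxgumbel} applied to the null-coordinate maximum, the centered $Q_{\mathrm{DMAX}}$ is bounded below in probability.
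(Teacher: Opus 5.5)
Your size argument is the paper's argument: joint convergence from Theorem~\ref{thm:indep}, continuous mapping, and stability of the Cauchy law under averaging. You also add a step the paper omits. Theorem~\ref{thm:indep} concerns $Q_{\mathrm{DMAX}}$, not $p^{\mathrm{boot}}_{\mathrm{DMAX}}$, so Theorem~\ref{thm:maxboot} is needed to write $p^{\mathrm{boot}}_{\mathrm{DMAX}}=1-F(Q_{\mathrm{DMAX}}-2\log N+\log\log N)+\op(1)$.

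You are also right, against the paper, that the displayed power inequality is not exact. This breaks the paper's one-line justification that $T_{\mathrm{CC}}$ exceeds the Cauchy quantile whenever one $p$-value is below $\gamma/2$ ``up to an asymptotically negligible event''. From $2\cot(\pi\gamma)=\cot(\pi\gamma/2)-\tan(\pi\gamma/2)$ one only gets $\{p_{\mathrm{DSUM}}\le\gamma/2,\ p_{\mathrm{DMAX}}\le(1+\gamma)/2\}\subset\{p_{\mathrm{DCC}}\le\gamma\}$. Moreover, suppose $p_{\mathrm{DMAX}}$ is asymptotically uniform and independent. Then, given $p_{\mathrm{DSUM}}=p\le\gamma/2$, the combined test fails in the limit with conditional probability exactly $q(p)$, where $\cot(\pi q)=\cot(\pi p)-2\cot(\pi\gamma)$. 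For $\gamma<1/2$ this gives $q(p)>p$, so the exceptional event is not asymptotically negligible.

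The genuine problem is your fallback claim that the bound holds ``up to $o(1)$''. It is also false, and you never actually prove it.

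\begin{itemize}
\item Take a thinly spread local alternative, for instance equal entries on $N^{3/4}$ coordinates scaled so that $\bmd^{\top}\bmd/\wk\sigma_T=\theta$. Then $\max_i T^{1/2}|\delta_i|=o\{(\log N)^{-1/2}\}$.
\item The expansion in Theorem~\ref{thm:sumpower} gives $Q_{\mathrm{DSUM}}\cd\theta+Z$. The centered $Q_{\mathrm{DMAX}}$ moves only by $\op(1)$, so $p^{\mathrm{boot}}_{\mathrm{DMAX}}$ stays asymptotically uniform and independent of $p_{\mathrm{DSUM}}$.
\item Write $p_1=1-\Phi(\theta+Z)$. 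The computation above gives $1-\lim\Pr(p_{\mathrm{DCC}}\le\gamma)>\E[p_1\mathbf{1}\{p_1\le\gamma/2\}]+\frac{1-\gamma}{2}\Pr(p_1>\gamma/2)$.
\item We have $\E p_1=\Phi(-\theta/\sqrt{2})$, which decays more slowly than $\Pr(p_1>\gamma/2)=\Phi(z_{1-\gamma/2}-\theta)$. Hence the right side exceeds $1-\lim\Pr(p_{\mathrm{DSUM}}\le\gamma/2)$ for large fixed $\theta$, for instance $\gamma=0.05$, $\theta=8$.
\end{itemize}

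The violation is numerically tiny, but it is a fixed positive gap, not $o(1)$. Only the consistency implication survives, and that part you do prove.

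Even there, the lower tightness of the non-dominant term needs more than you give. The maximum over null coordinates says nothing in the dense regime where the sum test is the consistent one, since there may be no null coordinates at all. The natural replacement is Anderson's inequality applied to the Gaussian approximation: $\Pr(\max_i|Z_i+\mu_i|\le x)\le\Pr(\max_i|Z_i|\le x)$ for centered Gaussian $Z$ and any shift $\mu$. You would also need to check that the signal perturbs $\wh\sigma_i$ and the bootstrap law only negligibly.

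Similarly, Theorem~\ref{thm:sumpower} covers only local $\norm{\bmd}_2$. For stronger sparse signals use $\bmd^{\top}\bmd+2\bmd^{\top}\wb{\X}_T\ge-(\bmd^{\top}\wb{\X}_T)^2/(\bmd^{\top}\bmd)$. This lower bound is $\op(\wk\sigma_T)$ once $\lambda_{\max}(\wk\bO_T)=o\{\tr^{1/2}(\wk\bO_T^2)\}$.
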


Moreover, under an alternative sequence along which the asymptotic independence in Theorem~\ref{thm:indep} continues to hold, the DCC test satisfies the sharper lower bound
\begin{equation*}
\label{eq:ccpower_ie}
\Pr\bigl(p_{\mathrm{DCC}}\le \gamma\bigr)
\ge
\Pr\Bigl(p_{\mathrm{DSUM}}\le \frac{\gamma}{2}\Bigr)
+
\Pr\Bigl(p_{\mathrm{DMAX}}\le \frac{\gamma}{2}\Bigr)
-
\Pr\Bigl(p_{\mathrm{DSUM}}\le \frac{\gamma}{2},\,
p_{\mathrm{DMAX}}\le \frac{\gamma}{2}\Bigr)
+o(1).
\end{equation*}
Therefore, whenever both component tests have nontrivial power, the DCC procedure inherits the power of their union and can be more powerful than either component test alone.

\section{Simulation}
\label{sec:simulation}

We conduct Monte Carlo experiments to assess the finite-sample performance of the proposed testing procedures in conditional time-varying factor models. Throughout, we report results for six methods: the original sum-type, max-type, and Cauchy-combination tests, denoted by \textrm{SUM}, \textrm{MAX}, and \textrm{CC}, together with their dependence-robust bootstrap counterparts, denoted by \textrm{DSUM}, \textrm{DMAX}, and \textrm{DCC}. The bootstrap methods are calibrated by a moving block bootstrap.

\subsection{Data-generating mechanisms}
\label{subsec:sim_dgp}

We consider two examples. In both examples, the sample is generated over the periods $t=-24,\ldots,0,1,\ldots,T$, and the first 25 observations are discarded to mitigate start-up effects. The B-spline approximation uses cubic splines with total basis dimension $L=5$.

\paragraph{Example 1.}
The first design is a conditional CAPM with one time-varying factor loading:
\[
Y_{it}=\alpha_{it}+\beta_i(t/T)f_t+e_{it}, \qquad i=1,\ldots,N,\quad t=1,\ldots,T.
\]
The factor process follows an AR(1)-GARCH(1,1) recursion,
\[
f_t=0.34+0.05(f_{t-1}-0.34)+\sqrt{h_t}\,\varepsilon_t^{(f)},
\qquad
h_t=0.32+0.67h_{t-1}+0.13\xi_{t-1}^2,
\]
where $\varepsilon_t^{(f)}$ and $\xi_t$ are generated from independent standard normal draws. The conditional beta path is common across $i$ and is given by the logistic curve
\[
\beta_i(t/T)=\left\{1+\exp\bigl[-2(10t/T-2)\bigr]\right\}^{-1},
\qquad i=1,\ldots,N.
\]

\paragraph{Example 2.}
The second design is a conditional three-factor model:
\[
Y_{it}=\alpha_{it}+\sum_{j=1}^3 \beta_{ij}(t/T)f_{jt}+e_{it},
\qquad i=1,\ldots,N,\quad t=1,\ldots,T.
\]
The three factor processes are generated from AR(1)-GARCH(1,1) recursions,
\[
f_{jt}=\mu_j+\varphi_j(f_{j,t-1}-\mu_j)+\sqrt{h_{jt}}\,\varepsilon_{jt}^{(f)},
\qquad j=1,2,3,
\]
with
\[
(\mu_1,\mu_2,\mu_3)=(0.34,\,0.04,\,0.06),\qquad
(\varphi_1,\varphi_2,\varphi_3)=(0.05,\,0.07,\,0.04),
\]
and
\[
h_{jt}=a_j+b_j h_{j,t-1}+c_j \xi_{j,t-1}^2,\qquad j=1,2,3,
\]
where
\[
(a_1,a_2,a_3)=(0.32,\,0.33,\,0.26),\qquad
(b_1,b_2,b_3)=(0.67,\,0.51,\,0.72),\qquad
(c_1,c_2,c_3)=(0.13,\,0.03,\,0.05).
\]
The conditional beta paths are
\[
\beta_{i1}(t/T)=0.5+0.5z_t,\qquad
\beta_{i2}(t/T)=0.5+0.1z_t,\qquad
\beta_{i3}(t/T)=0.5+0.2z_t,
\]
where
\[
z_t=\left\{1+\exp\bigl[-2(10t/T-2)\bigr]\right\}^{-1}.
\]

The idiosyncratic error process is generated in the same way for both examples. Let
\[
\e_t=\z_t+\sum_{h=1}^{M}\A_h\z_{t-h},
\qquad
\z_t=\bmS^{1/2}\u_t,
\]
where $\u_t=(u_{1t},\ldots,u_{Nt})^{\T}$ has i.i.d.\ coordinates. We consider two innovation distributions:
\[
u_{it}\sim \mathcal{N}(0,1),
\qquad\text{or}\qquad
u_{it}\sim t(6)/\sqrt{6/4}.
\]
Hence, in both cases, the marginal variance is normalized to one.

The cross-sectional covariance matrix $\bmS=(\sigma_{ij})_{1\le i,j\le N}$ is defined by
\[
\sigma_{ii}=1,\qquad
\sigma_{ij}=
\frac{\phi_2}{|i-j|^2}\, \mathbf{1}\!\left(1\le |i-j|\le \omega N\right),
\qquad i\neq j,
\]
with
\[
(\omega,\phi_1,\phi_2)=(0.9,\,0.6,\,0.4).
\]
For the temporal dependence matrices $\A_h$, we set
\[
[\A_h]_{ii}=\frac{\phi_1}{h},\qquad
[\A_h]_{ij}=\frac{\phi_1}{h|i-j|^2}\,\mathbf{1}\!\left(1\le |i-j|\le \omega N\right),
\qquad h=1,2,
\]
and for $h\ge 3$ we take
\[
\A_h=e^{-2h}\I_N.
\]
We consider three dependence regimes:
\[
M=0,\qquad M=2,\qquad M=T-1.
\]
These correspond, respectively, to independent errors, short-range dependence, and a long-range dependent case.

\subsection{Null experiments}
\label{subsec:sim_h0}

Under the null hypothesis, we set
\[
H_0:\quad \alpha_{it}\equiv 0,\qquad i=1,\ldots,N,\quad t=1,\ldots,T.
\]
We consider
\[
(T,N)\in\{(200,250),(400,250),(200,500),(400,500)\},
\]
under both Gaussian and standardized $t(6)$ innovations, and under all three dependence settings $M\in\{0,2,T-1\}$. For the size experiments, each setting is based on 1000 Monte Carlo replications at the nominal level $5\%$. For the bootstrap procedures, we use 500 moving block bootstrap replications. 

For the moving block bootstrap, the block length is selected in a data-driven manner from the centered residual matrix under the null model. Specifically, let
$
\mathbf E_0=\mathbf{\widehat E}_0-\bone_T\bar{\mathbf e}_0^{\T}
$
denote the centered null residual matrix, where \(\mathbf E_0\in\mathbb R^{T\times N}\). For each cross-sectional series, we apply the PWSD automatic block-length selector implemented by the \texttt{pwsd()} function in the R package \texttt{blocklength} \citep{politis1995spectral,politis2004automatic,patton2009correction}, and extract the circular-bootstrap recommendation \(b_{i,\mathrm{Circ}}\), \(i=1,\dots,N\). The final bootstrap block length is then chosen as
$$
\ell
=
\max\!\left\{
2,\,
\min\!\left(
\lfloor \sqrt{T}\rfloor,\,
\left\lceil 1.5\,\mathrm{median}\{b_{1,\mathrm{Circ}},\dots,b_{N,\mathrm{Circ}}\}\right\rceil
\right)
\right\}.
$$

\begin{table}[!htbp]
\centering
\caption{Empirical sizes (\%) for Example~1 under $H_0$. The nominal level is $5\%$.}
\label{tab:size_preliminary_ex1}
\setlength{\tabcolsep}{5pt}
\renewcommand{\arraystretch}{1.05}
\begin{tabular}{lcccccccc}
\toprule
$M$ & $T$ & $N$ & \textrm{SUM} & \textrm{MAX} & \textrm{CC} & \textrm{DSUM} & \textrm{DMAX} & \textrm{DCC} \\
\midrule
\multicolumn{9}{c}{Gaussian errors} \\
\midrule
0     & 200 & 250 & 2.2   & 1.5 & 2.2 & 1.3 & 3.9 & 2.8 \\
0     & 400 & 250 & 2.6   & 1.2 & 2.0 & 2.8 & 4.9 & 3.9 \\
0     & 200 & 500 & 3.1   & 1.6 & 2.1 & 0.2 & 4.2 & 2.3 \\
0     & 400 & 500 & 1.5   & 1.3 & 1.5 & 0.7 & 4.9 & 2.7 \\
\midrule
2     & 200 & 250 & 99.9  & 99.9 & 100.0 & 5.6 & 5.2 & 4.9 \\
2     & 400 & 250 & 100.0 & 99.9 & 100.0 & 5.7 & 4.4 & 5.4 \\
2     & 200 & 500 & 100.0 & 100.0 & 100.0 & 2.8 & 5.6 & 4.4 \\
2     & 400 & 500 & 100.0 & 100.0 & 100.0 & 1.7 & 4.5 & 3.2 \\
\midrule
$T-1$ & 200 & 250 & 100.0 & 99.6 & 100.0 & 5.5 & 6.1 & 5.7 \\
$T-1$ & 400 & 250 & 100.0 & 99.9 & 100.0 & 4.5 & 5.3 & 5.3 \\
$T-1$ & 200  & 500 & 100.0 & 100.0 & 100.0 & 1.7 & 7.0 & 4.6 \\
$T-1$ & 400  & 500 & 100.0 & 100.0 & 100.0 & 0.5 & 5.8 & 2.9 \\
\midrule
\multicolumn{9}{c}{$t(6)$ errors} \\
\midrule
0     & 200 & 250 & 2.0   & 1.2 & 1.6 & 0.6 & 3.7 & 2.1 \\
0     & 400 & 250 & 1.9   & 0.8 & 1.8 & 2.2 & 4.0 & 2.7 \\
0     & 200 & 500 & 2.0   & 0.6 & 1.7 & 0.5 & 3.1 & 2.2 \\
0     & 400 & 500 & 1.5   & 1.2 & 0.9 & 0.6 & 4.9 & 2.3 \\
\midrule
2     & 200 & 250 & 99.9  & 99.7 & 99.9  & 4.5 & 4.9 & 5.0 \\
2     & 400 & 250 & 100.0 & 100.0 & 100.0 & 5.9 & 4.9 & 5.7 \\
2     & 200 & 500 & 100.0 & 100.0 & 100.0 & 3.7 & 5.2 & 4.6 \\
2     & 400 & 500 & 100.0 & 100.0 & 100.0 & 3.3 & 5.5 & 4.7 \\
\midrule
$T-1$ & 200 & 250 & 99.9  & 99.6 & 100.0 & 5.0 & 5.9 & 5.6 \\
$T-1$ & 400  & 250 & 100.0 & 99.8 & 100.0 & 5.3 & 4.4 & 4.7 \\
$T-1$ & 200  & 500 & 100.0 & 100.0 & 100.0 & 3.3 & 6.8 & 5.8 \\
$T-1$ & 400  & 500 & 100.0 & 100.0 & 100.0 & 2.4 & 6.0 & 4.9 \\
\bottomrule
\end{tabular}
\end{table}

\begin{table}[!htbp]
\centering
\caption{Empirical sizes (\%) for Example~2 under $H_0$. The nominal level is $5\%$.}
\label{tab:size_preliminary_ex2}
\setlength{\tabcolsep}{5pt}
\renewcommand{\arraystretch}{1.05}
\begin{tabular}{lcccccccc}
\toprule
$M$ & $T$ & $N$ & \textrm{SUM} & \textrm{MAX} & \textrm{CC} & \textrm{DSUM} & \textrm{DMAX} & \textrm{DCC} \\
\midrule
\multicolumn{9}{c}{Gaussian errors} \\
\midrule
0     & 200 & 250 & 2.1   & 0.8  & 1.3  & 0.8 & 5.6 & 3.5 \\
0     & 400 & 250 & 1.7   & 1.0  & 1.2  & 0.9 & 5.2 & 3.2 \\
0     & 200 & 500 & 2.8   & 1.0  & 2.1  & 0.4 & 5.8 & 2.6 \\
0     & 400 & 500 & 2.2   & 0.9  & 1.8  & 0.5 & 5.3 & 2.4 \\
\midrule
2     & 200 & 250 & 99.9  & 98.4 & 100.0 & 6.8 & 6.4 & 6.1 \\
2     & 400 & 250 & 100.0 & 98.7 & 100.0 & 6.1 & 5.9 & 5.7 \\
2     & 200 & 500 & 100.0 & 99.8 & 100.0 & 3.5 & 5.6 & 3.9 \\
2     & 400 & 500 & 100.0 & 99.9 & 100.0 & 2.8 & 5.2 & 3.1 \\
\midrule
$T-1$ & 200 & 250 & 100.0 & 98.8 & 99.4  & 7.1 & 5.8 & 6.4 \\
$T-1$ & 400 & 250 & 100.0 & 99.1 & 99.5  & 6.2 & 5.4 & 5.8 \\
$T-1$ & 200 & 500 & 100.0 & 99.0 & 100.0 & 5.2 & 5.3 & 3.4 \\
$T-1$ & 400 & 500 & 100.0 & 99.2 & 100.0 & 4.5 & 4.9 & 2.8 \\
\midrule
\multicolumn{9}{c}{$t(6)$ errors} \\
\midrule
0     & 200 & 250 & 0.8   & 0.4  & 0.5  & 0.3 & 4.7 & 2.7 \\
0     & 400 & 250 & 0.6   & 0.5  & 0.6  & 0.4 & 4.5 & 2.5 \\
0     & 200 & 500 & 1.3   & 1.0  & 0.7  & 0.2 & 4.9 & 2.3 \\
0     & 400 & 500 & 1.0   & 1.2  & 0.5  & 0.3 & 4.7 & 2.2 \\
\midrule
2     & 200 & 250 & 100.0 & 99.3 & 100.0 & 5.7 & 5.8 & 5.3 \\
2     & 400 & 250 & 100.0 & 99.6 & 100.0 & 5.1 & 5.4 & 4.9 \\
2     & 200 & 500 & 100.0 & 99.9 & 100.0 & 3.2 & 5.5 & 4.4 \\
2     & 400 & 500 & 100.0 & 100.0 & 100.0 & 2.7 & 5.1 & 3.8 \\
\midrule
$T-1$ & 200 & 250 & 99.9  & 98.1 & 100.0 & 4.3 & 4.8 & 3.9 \\
$T-1$ & 400 & 250 & 100.0 & 98.6 & 100.0 & 3.7 & 4.5 & 3.4 \\
$T-1$ & 200 & 500 & 100.0 & 99.9 & 100.0 & 3.0 & 5.1 & 4.7 \\
$T-1$ & 400 & 500 & 100.0 & 100.0 & 100.0 & 2.5 & 4.8 & 4.2 \\
\bottomrule
\end{tabular}
\end{table}

Table~\ref{tab:size_preliminary_ex1} and \ref{tab:size_preliminary_ex2} reports the currently available preliminary size results for Example~1 and 2, respectively. 
All the results conveys two main messages. First, in the independent case $M=0$, all six procedures are somewhat conservative, with rejection frequencies generally below the nominal level. Among the bootstrap procedures, \textrm{DMAX} is typically closest to the target level, whereas \textrm{DSUM} is more conservative.

Second, once temporal dependence is introduced, the original procedures \textrm{SUM}, \textrm{MAX}, and \textrm{CC} become severely oversized, with empirical sizes essentially equal to one in the currently available designs. This pattern is entirely consistent with the fact that their calibration ignores serial dependence. In contrast, the bootstrap procedures \textrm{DSUM}, \textrm{DMAX}, and \textrm{DCC} remain much more stable, with empirical sizes mostly lying in a moderate range around the nominal level. Overall, the preliminary results strongly support the use of block-bootstrap calibration in the presence of serially dependent errors.

\subsection{Alternative experiments}
\label{subsec:sim_h1}

Under the alternative hypothesis, we generate sparse and time-varying alpha signals. Let
\[
S\subset \{1,\ldots,N\}, \qquad |S|=s,
\]
where the support $S$ is drawn uniformly at random. For $i\in S$, the baseline signal magnitude is
\[
a_i=
\sqrt{\frac{c_M\log N}{sT}},
\]
where
\[
c_M=
\begin{cases}
12, & M=0,\\
80, & M=2,\\
90, & M=T-1.
\end{cases}
\]
In the current implementation, the nonzero alpha path is time-varying and takes the form
\[
\alpha_{it}=a_i+0.35|a_i|\,g_i(t/T), \qquad i\in S,
\]
where $g_i(\cdot)$ is a centered and normalized cosine component with a random frequency drawn from $[\pi/2,\pi]$. For $i\notin S$, we set $\alpha_{it}=0$. This design preserves the sparsity pattern while allowing for moderate time variation in the nonzero alpha trajectories.

For the power experiments, we currently fix
\[
(T,N)=(200,250),
\]
and vary the sparsity level over
\[
s\in\{1,5,9,\ldots,101\}.
\]
The same two innovation distributions and the same three dependence settings are considered. The current power plots are preliminary and are based on 1000 Monte Carlo replications and 500 bootstrap replications. In the final version, we will report the results for both Example~1 and Example~2 under a larger number of replications.

\begin{figure}[!htbp]
\centering
\includegraphics[width=\textwidth]{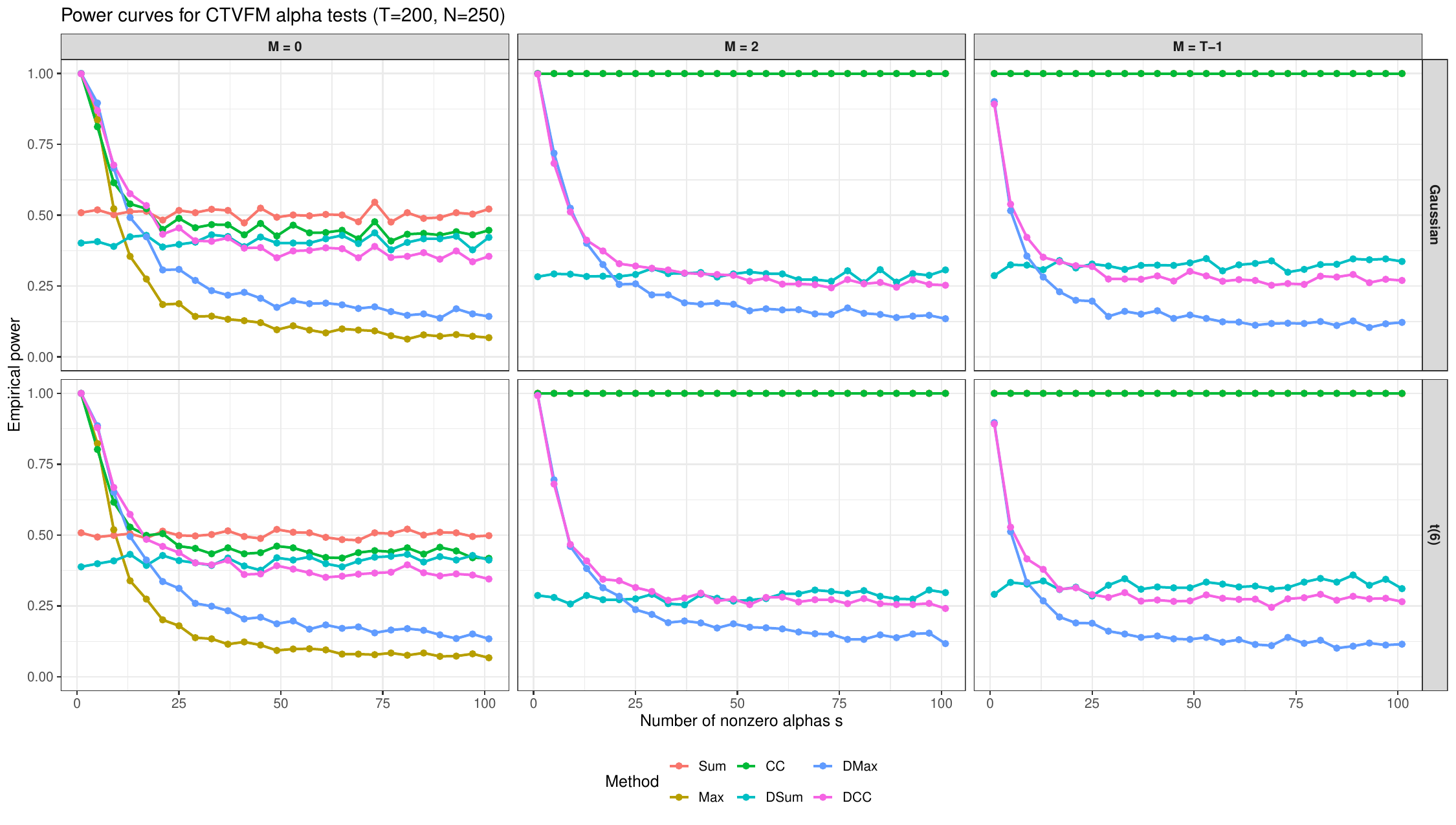}
\caption{Preliminary power curves for Example~1 under different sparse alternatives.}
\label{fig:power_preliminary_ex1}
\end{figure}

\begin{figure}[!htbp]
\centering
\includegraphics[width=\textwidth]{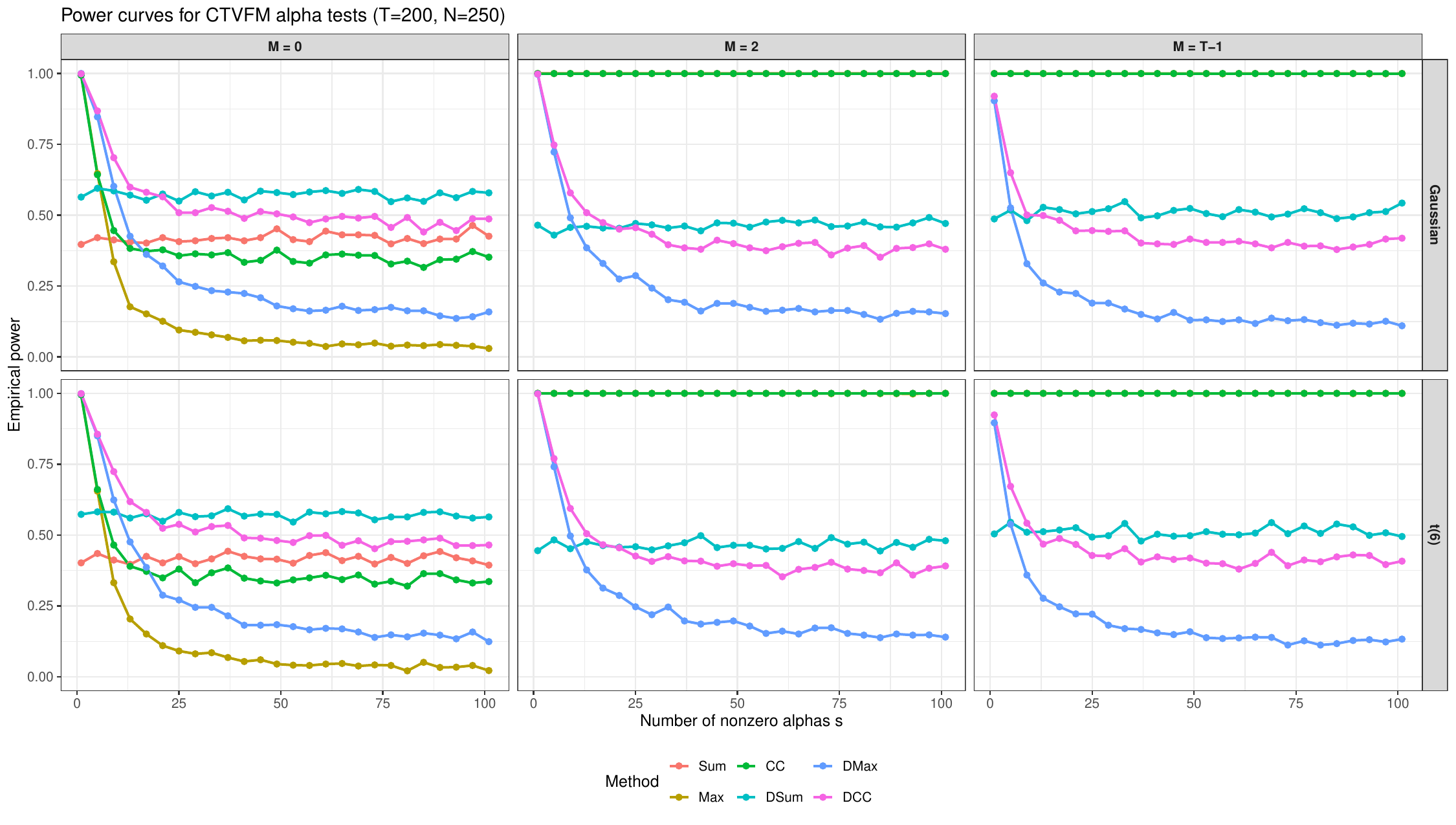}
\caption{Preliminary power curves for Example~2 under different sparse alternatives.}
\label{fig:power_preliminary_ex2}
\end{figure}

Figures~\ref{fig:power_preliminary_ex1} and \ref{fig:power_preliminary_ex2} deliver a coherent message across the two examples, the three dependence regimes, and the two innovation distributions. In the benchmark case $M=0$, where serial dependence is absent, the dependence-adjusted procedures remain fully competitive with the original methods. More specifically, the relative behavior of the adjusted procedures is well aligned with their theoretical roles: \textrm{DMAX} is the most effective in the very sparse regime, \textrm{DSUM} becomes increasingly advantageous as the alternative becomes denser, and \textrm{DCC} is particularly attractive because it remains highly stable in the sparse region while still preserving competitive performance more broadly. This pattern is visible in both examples and under both Gaussian and $t(6)$ errors, indicating that the proposed adjustment does not sacrifice power when dependence is weak, while still preserving the expected complementarity between max-type, sum-type, and combination-type procedures.

The advantage of the adjusted procedures becomes much more pronounced once temporal dependence is introduced. For both $M=2$ and $M=T-1$, the original methods display rejection frequencies that are often excessively large and nearly flat across different sparsity levels, especially for the combination procedure and, to a lesser extent, for the sum-based procedure. Such behavior is difficult to interpret as genuine discriminatory power, and is instead broadly consistent with the substantial size distortions documented in the size experiments. By contrast, the dependence-adjusted methods continue to exhibit economically and statistically meaningful power curves. In particular, \textrm{DMAX} retains its expected strength in detecting very sparse departures, whereas \textrm{DSUM} becomes the most reliable choice once the signal is spread over a larger fraction of coordinates. The adjusted combination procedure \textrm{DCC} is especially appealing because it remains highly robust in the sparse regime and continues to provide a favorable compromise between sensitivity and stability across a wide range of alternatives. Taken together, these findings suggest that explicitly accounting for serial dependence is indispensable for reliable inference in the conditional time varying factor model. From a practical perspective, \textrm{DSUM} appears to be the preferred choice when dense or moderately dense alternatives are of primary concern, \textrm{DMAX} is particularly useful under strong sparsity, and \textrm{DCC} offers a robust all-around procedure, especially when the degree of sparsity is unknown in advance.

\section{Real Data Application}
\label{sec:empirical_sp500}

We illustrate the practical relevance of dependence-robust inference using weekly returns on S\&P 500 constituent stocks. The stock-price data are downloaded from Yahoo Finance and converted into weekly returns. The factor data are the weekly Fama--French three factors, obtained from Kenneth French's online data library. After aligning the two data sources by calendar week, the common sample runs from January 14, 2005 to January 5, 2024, yielding $T=894$ weekly observations. Because the high-dimensional tests require a balanced panel, we restrict attention to the $N=393$ stocks with complete weekly returns throughout the full overlap period.

For each stock $i$ and week $t$, we estimate the conditional time-varying three-factor model
\begin{equation}
R_{it}-R_{ft}
= \alpha_i(t/T)
+ \beta_{iM}(t/T)\,\text{MKT}_t
+ \beta_{iS}(t/T)\,\text{SMB}_t
+ \beta_{iH}(t/T)\,\text{HML}_t
+ u_{it},
\label{eq:empirical_ctvfm}
\end{equation}
where $R_{it}$ is the stock return, $R_{ft}$ is the one-week Treasury bill rate, $\text{MKT}_t$ is the market excess return, $\text{SMB}_t$ is the size factor (small minus big), and $\text{HML}_t$ is the value factor (high minus low). The coefficient functions are allowed to vary smoothly over time and are approximated using the same cubic B-spline specification employed in the simulation analysis.

Before conducting alpha tests, it is useful to examine whether the residual process is close to white noise. Figure~\ref{fig:bp_histogram_sp500} plots the histogram of stock-level Box--Pierce $p$-values from the fitted time-varying three-factor model. The distribution is heavily tilted toward small $p$-values: the mean and median $p$-values are 0.252 and 0.122, respectively, and 35.1\% of the stocks reject the white-noise null at the 5\% level. This rejection rate is far above the nominal size and provides direct empirical evidence that residual serial dependence is pervasive in this panel. Consequently, inference procedures that ignore time-series dependence may be misleading in this application.

\begin{figure}[!htbp]
\centering
\includegraphics[width=0.72\textwidth]{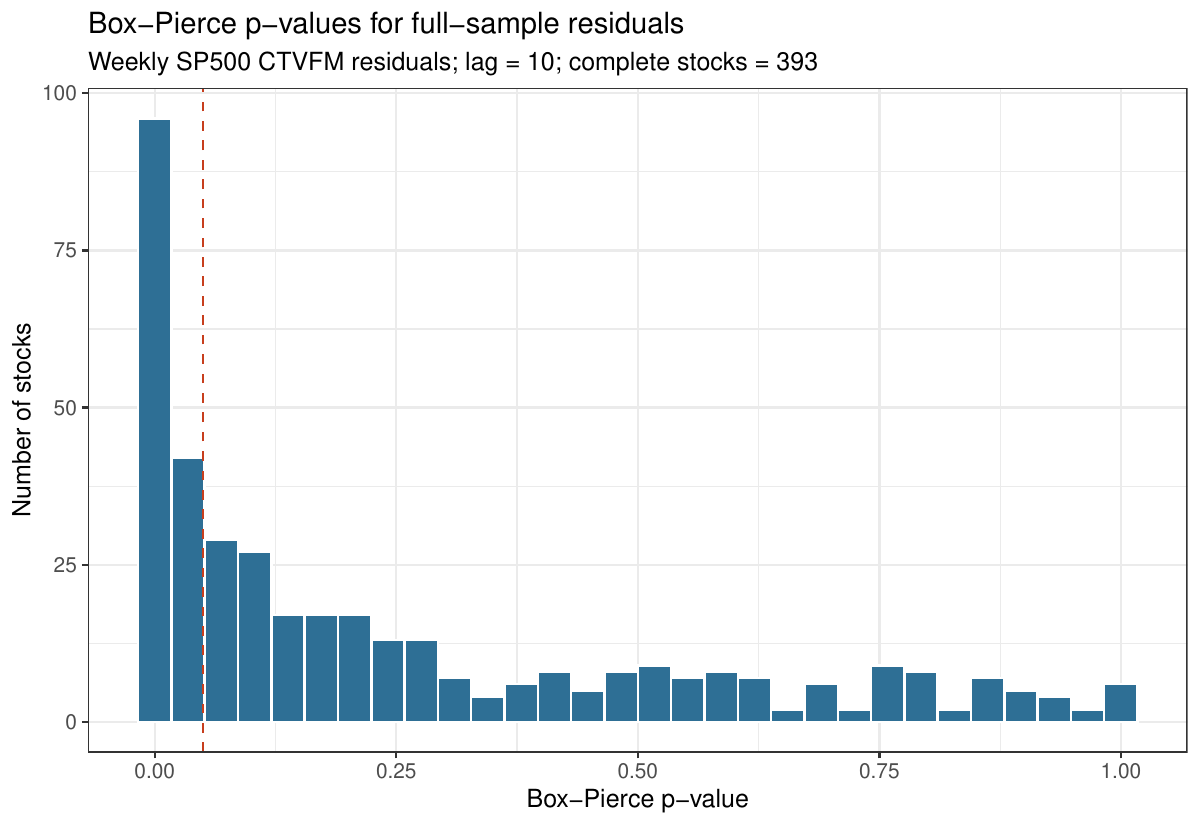}
\caption{Histogram of Box--Pierce $p$-values from the residuals of the time-varying Fama--French three-factor model. The vertical dashed line marks the 5\% level.}
\label{fig:bp_histogram_sp500}
\end{figure}

Table~\ref{tab:sp500_fullsample_pvalues} reports the full-sample $p$-values for the six testing procedures. The dependence-robust procedures are implemented using a moving block bootstrap with block length 18. The evidence is striking. Among the classical methods, \textrm{MAX} and \textrm{CC} strongly reject the joint null that all alphas are zero, with $p$-values below $10^{-6}$, whereas \textrm{SUM} does not reject. Once serial dependence is explicitly accounted for, however, the three dependence-robust procedures---\textrm{DSUM}, \textrm{DMAX}, and \textrm{DCC}---all produce comfortably insignificant $p$-values. In particular, \textrm{DMAX} yields a $p$-value of 0.920 and \textrm{DCC} yields a $p$-value of 0.850. The empirical message is therefore clear: if one ignores time-series dependence, one may obtain spurious evidence against the no-alpha null and overstate the degree of cross-sectional mispricing.

\begin{table}[!htbp]
\centering
\caption{Full-sample $p$-values for weekly S\&P 500 returns under the time-varying Fama--French three-factor model}
\label{tab:sp500_fullsample_pvalues}
\begin{tabular}{lc}
\hline
Method & $p$-value \\
\hline
\textrm{SUM}  & 0.8053 \\
\textrm{MAX}  & $2.13\times 10^{-7}$ \\
\textrm{CC}   & $4.26\times 10^{-7}$ \\
\textrm{DSUM} & 0.5163 \\
\textrm{DMAX} & 0.9198 \\
\textrm{DCC}  & 0.8503 \\
\hline
\end{tabular}
\vspace{0.4em}

\parbox{0.9\textwidth}{\small \textit{Notes}: The sample covers January 14, 2005 to January 5, 2024 and contains $T=894$ weekly observations and $N=393$ stocks with complete returns over the full sample. The dependence-robust procedures use a moving block bootstrap with block length 18.}
\end{table}

Taken together, these findings reinforce the main methodological point of the paper. In a large panel of weekly equity returns, residual dependence is empirically important rather than a minor second-order feature. As a result, procedures calibrated under independence can lead to qualitatively different conclusions from procedures that are robust to serial dependence. For this dataset, the dependence-robust tests suggest that the apparent evidence against zero pricing errors is largely driven by misspecified temporal dependence rather than by economically meaningful nonzero alphas.

\section{Discussion}
\label{sec:discussion}

This paper studies alpha testing in a high-dimensional conditional time-varying factor model with temporally dependent observations. Using a B-spline sieve approximation, we develop sum-type, max-type, and Cauchy combination tests that are designed to handle dense, sparse, and unknown alternatives, respectively. We establish the asymptotic theory for these procedures under temporal dependence and show through simulations and an empirical application that the proposed methods perform well in finite samples and are useful in practice.

Several directions deserve further study. One important extension is to allow for heavy-tailed innovations, which are common in financial returns and may require robustification of both the test statistics and the bootstrap calibration \citep{liu2023robust,zhao2023robust}. It would also be of interest to consider more general forms of dynamic structure, such as stronger serial dependence \citep{zhang2017gaussian}, structural breaks\citep{cho2015multiple}. An important practical issue is the choice of block length in the bootstrap procedure. As usual, this involves a trade-off between size and power: shorter blocks may not capture serial dependence adequately and can lead to size distortions, whereas longer blocks better preserve dependence but may yield a more variable bootstrap approximation and hence lower power in finite samples. Developing theoretically grounded and practically effective block-length selection rules for high-dimensional dependent factor models is therefore an interesting direction for future work.

\section{Appendix}
\label{sec:proofs}

\subsection{Auxiliary lemmas}

We collect the deterministic and probabilistic ingredients needed in the proofs.

\begin{lemma}[Spline approximation]
\label{lem:spline}
Under Assumption \ref{ass:smooth}, there exist coefficient vectors $\lambda_{i0}^0,\dots,\lambda_{id}^0$ such that
$$
\max_{1\le i\le N}\max_{1\le t\le T}|r_{it}|\le C L^{-r}\Bigl(1+\max_{1\le t\le T}\norm{\f_t}_\infty\Bigr).
$$
Hence under Assumption \ref{ass:factors},
$$
\max_{1\le i\le N}\max_{1\le t\le T}|r_{it}|\le C L^{-r}
$$
with probability approaching one.
\end{lemma}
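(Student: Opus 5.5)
The plan is to reduce the statement to the classical Jackson-type approximation theorem for polynomial splines and then handle the centering and the factor multiplication by elementary bounds.

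\textbf{Step 1 (approximating one Hölder function).} I will invoke the standard result (de Boor, \emph{A Practical Guide to Splines}, or Schumaker's monograph). For a spline space $\mc S_{q,L}$ of order $q\ge m+1$ with quasi-uniform interior knots, every $f\in\mathcal H_r$ with $r=m+n$ has an approximant $s_f=\bml_f^{\T}\B\in\mc S_{q,L}$ such that
$$
\sup_{u\in[0,1]}|f(u)-s_f(u)|\le C_0\,C\,L^{-r}.
$$
Here $C_0$ depends only on $q$ and the knot mesh ratio, and $C$ is the Hölder constant in Assumption~\ref{ass:smooth}. Since that assumption uses a single constant $C$ for the whole class $\mathcal H_r$, this bound is uniform over $i\le N$ and $j\le d$. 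I will make explicit two implicit requirements: the order $q$ must exceed $m$, and the knots must be quasi-uniform. With these choices, $\bml^0_{ij}:=\bml_{\beta_{ij}}$ gives
$$
\max_{i,j}\sup_u|\beta_{ij}(u)-\bml_{ij}^{0\T}\B(u)|\le C_1L^{-r}.
$$

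\textbf{Step 2 (the centered basis for $g_i$).} Take $\bml^0_{i0}:=\bml_{g_i}$. By construction $T^{-1}\sum_{t}g_i(t/T)=T^{-1}\sum_t\{\alpha_i(t/T)-\delta_i\}=0$. Writing $s_i=\bml^{0\T}_{i0}\B$, this gives
$$
g_i(t/T)-\bml_{i0}^{0\T}\wt\B(t/T)=\{g_i(t/T)-s_i(t/T)\}-T^{-1}\sum_{s=1}^T\{g_i(s/T)-s_i(s/T)\}.
$$
The right-hand side is bounded in absolute value by $2C_1L^{-r}$ uniformly in $i$ and $t$. The centering of the basis therefore costs at most a factor of two. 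This is the only step that uses the specific definition of $\wt\B$, and it is where the identity $\sum_t g_i(t/T)=0$ is essential.

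\textbf{Step 3 (combining the pieces).} By the triangle inequality, with $d$ fixed,
$$
|r_{it}|\le 2C_1L^{-r}+C_1L^{-r}\sum_{j=1}^d|f_{tj}|\le C L^{-r}\bigl(1+\norm{\f_t}_\infty\bigr).
$$
Taking the maximum over $i$ and $t$ gives the first display of the lemma.

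\textbf{Step 4 (the probabilistic statement; main obstacle).} The difficulty is controlling $\max_{t\le T}\norm{\f_t}_\infty$. Under the moment part of Assumption~\ref{ass:factors} alone, the union bound and Markov's inequality give only
$$
\max_t\norm{\f_t}_\infty=\Op\bigl(T^{1/\{4(2+\kappa)\}}\bigr),
$$
which is not $\Op(1)$. There are two ways to proceed.
\begin{enumerate}[label=(\roman*)]
\item Use the reading stated in the Remark following Assumption~\ref{ass:growth}, namely that the factors are bounded, $\norm{\f_t}_\infty\le C_f$ almost surely. The bound $|r_{it}|\le C L^{-r}$ then holds with probability one.
\item Otherwise, record the weaker rate $CL^{-r}T^{1/\{4(2+\kappa)\}}$, or equivalently a bound of order $L^{-r}$ up to a slowly growing factor. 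Later arguments only use the bias through $\norm{\b_T}_2$, which involves weighted averages of $r_{it}$. There one can apply Cauchy--Schwarz with $T^{-1}\sum_t\norm{\f_t}^2=\Op(1)$ in place of the uniform bound. This still yields $\norm{\b_T}_2=\Op(N^{1/2}L^{-r})$ as in~\eqref{eq:biasrates}.
\end{enumerate}
I would state the lemma under option~(i) and note option~(ii) as the route if the boundedness assumption is to be avoided.
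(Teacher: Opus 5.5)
Your proposal is correct and follows the paper's route. The paper's proof cites the standard Jackson-type spline approximation bound (Lemma A.1 of \citet{ma2020testing}) and then invokes uniform boundedness of the factors; your Steps 1--3, including the treatment of the centered basis via $\sum_{t}g_i(t/T)=0$ and the requirement $q\ge m+1$, make that citation self-contained.

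You are also right that the second display needs $\max_t\norm{\f_t}_\infty=\Op(1)$. The paper takes this from the boundedness asserted in its Remark, not from the stated moment condition, so your option (i) is the paper's argument. Option (ii) would not remove the boundedness requirement downstream: the later proofs use $\max_{i,t}|r_{it}|$ and $\norm{\b_T}_\infty$ directly, not only $\norm{\b_T}_2$, and Lemma~\ref{lem:proj} relies on bounded factors as well.
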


\begin{proof}
This is the standard sieve approximation bound for H\"older-smooth varying-coefficient functions (See Lemma A.1 in \citet{ma2020testing} for details). The factor multiplier is uniformly bounded by Assumption \ref{ass:factors}. 
\end{proof}

\begin{lemma}
\label{lem:proj}
{ 
Recall 
\begin{align*}
\eta_t=\frac{1-\Z_t^{\T}(\Z^{\T}\Z/T)^{-1}(\Z^{\T}\bone_T/T)}{1-(\bone_T^{\T}\Z/T)(\Z^{\T}\Z/T)^{-1}(\Z^{\T}\bone_T/T)}
~~\text{and}~~
\wk\eta_t=\frac{1-\Z_t^{\T}\{\E(\Z^{\T}\Z/T)\}^{-1}\E(\Z^{\T}\bone_T/T)}{1-\E(\bone_T^{\T}\Z/T)\{\E(\Z^{\T}\Z/T)\}^{-1}\E(\Z^{\T}\bone_T/T)}.
\end{align*}
Under Assumption \ref{ass:smooth}- \ref{ass:factors}, we have with probability approaching one,
\begin{align*}
\max_{1\le t\le T}|\eta_t|\le C,
\qquad
\max_{1\le t\le T}|\wk\eta_t|\le C,
\qquad
\max_{1\le t\le T}|\eta_t-\wk\eta_t|\le CT^{-1/2}L^{1/2}\log T.
\end{align*}}
\end{lemma}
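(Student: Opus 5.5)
The plan is to treat $\eta_t$ as a smooth function of two sample moments, $\wh\A=\Z^{\T}\Z/T$ and $\wh\a=\Z^{\T}\bone_T/T$, and $\wk\eta_t$ as the same function evaluated at $\A=\E(\Z^{\T}\Z/T)$ and $\a=\E(\Z^{\T}\bone_T/T)$. The three claims then follow from three ingredients: eigenvalue bounds for $\A$, a concentration bound for $\wh\A-\A$ and $\wh\a-\a$, and a perturbation argument for the ratio.

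\textbf{Step 1 (population design).} The normalized B-spline basis satisfies $B_l\ge 0$, $\sum_l B_l(u)=1$, and the classical bound $T^{-1}\sum_t \B(t/T)\B(t/T)^{\T}\asymp L^{-1}\I_L$.

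Combined with Assumption~\ref{ass:factors}, where $\E\{(1,\f_t^{\T})^{\T}(1,\f_t^{\T})\}$ has eigenvalues in $[c,C]$, this gives that the eigenvalues of $\A$ lie in $[c/L,\,C/L]$. The centering of $\wt\B$ only removes the constant direction, which is exactly the $\bone_T$ direction. It is convenient to rescale $\Z_t$ by $\sqrt L$ so that the scaled matrix has eigenvalues of order one. Under this scaling $\sqrt L\,\|\Z_t\|_2\le C(1+\|\f_t\|)$, since at most $q$ basis functions are nonzero at any $u$.

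\textbf{Step 2 (denominators and boundedness).} The denominator of $\wk\eta_t$ equals one minus the squared norm of the projection of the constant function onto the span of $\Z$ under the population inner product. Identifiability gives $1-\a^{\T}\A^{-1}\a\ge c>0$: the $\wt\B$ block is orthogonal to constants, and the $f_{tj}\B$ blocks can reproduce constants only partially, with $1-\cdot\ge c$ by the eigenvalue bound on $(1,\f_t^{\T})$. This is the same identifiability the paper attributes to $\kappa_T/T$ being bounded away from zero.

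For the numerator, $|\Z_t^{\T}\A^{-1}\a|\le \|\sqrt L\Z_t\|\,\|(L\A)^{-1}\|\,\|\sqrt L\a\|$. This is at most $C(1+\max_t\|\f_t\|_\infty)$, which is $\le C$ with probability approaching one; here I would invoke Assumption~\ref{ass:factors} in the same way as Lemma~\ref{lem:spline}, which treats the factor multiplier as bounded. This gives $\max_t|\wk\eta_t|\le C$. The bound for $\eta_t$ then follows from Step~3 together with the triangle inequality.

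\textbf{Step 3 (concentration; the main obstacle).} I would show
\[
\|L(\wh\A-\A)\|_{\mathrm{op}}=\Op\{(L/T)^{1/2}\log T\}
\quad\text{and}\quad
\|\sqrt L(\wh\a-\a)\|_2=\Op\{(L/T)^{1/2}\log T\}.
\]
Each entry of $L\wh\A$ is $L T^{-1}\sum_t B_k(t/T)B_l(t/T)f_{tj}f_{tj'}$. Only $O(T/L)$ summands are nonzero, and the matrix is banded with bandwidth $q$.

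A Bernstein inequality for strong mixing sequences, using Assumption~\ref{ass:factors} (mixing summability and $4(2+\kappa)$ moments), gives entrywise deviations of order $(L/T)^{1/2}\log T$ uniformly over the $O(L)$ nonzero entries. Bandedness then converts the entrywise bound into an operator-norm bound, since the operator norm is at most $q$ times the maximal entry. The $\log T$ factor absorbs both the union bound and the truncation of the heavy-tailed factors at level $T^{1/(4(2+\kappa))}$.

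This is where the $\log T$ rate is decided. Getting the truncation level consistent with the moment condition is the delicate part; the first attempt would be a Merlevède--Peligrad--Rio type inequality after truncation.

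\textbf{Step 4 (perturbation).} By Step~3 and Assumption~\ref{ass:growth}, $L/T\to0$, so $L\wh\A$ is invertible with probability approaching one and
\[
\|(L\wh\A)^{-1}-(L\A)^{-1}\|\le C(L/T)^{1/2}\log T.
\]
I would then expand the difference of numerators, $\Z_t^{\T}(\wh\A^{-1}\wh\a-\A^{-1}\a)$, and the difference of denominators. Both are bounded by $C(1+\|\f_t\|)(L/T)^{1/2}\log T$, since the quantity $\sqrt L\|\Z_t\|$ is bounded with probability approaching one. Because the denominators are bounded below, the ratio difference satisfies $\max_t|\eta_t-\wk\eta_t|\le CT^{-1/2}L^{1/2}\log T$. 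This yields the third claim and, together with Step~2, the first.
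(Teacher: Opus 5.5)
Your overall structure (population quantities, concentration of $\wh\A$ and $\wh\a$, perturbation of the ratio) matches the paper's, and your treatment of the denominators is fine. The numerator bounds, however, rest on a false claim, and the $\ell_2$ route cannot deliver the stated rates.

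With the normalization you are using ($0\le B_l\le 1$ and $\sum_l B_l=1$, which is what makes the eigenvalues of $\A$ of order $L^{-1}$), at most $q$ entries of $\B(u)$ are nonzero and they sum to one, so $q^{-1/2}\le\|\B(u)\|_2\le 1$. Hence $c\le\|\Z_t\|_2\le C(1+\|\f_t\|)$, and $\sqrt L\,\|\Z_t\|_2\ge c\sqrt L$; it is not $O(1+\|\f_t\|)$. The product $\|\Z_t\|_2\,\|\A^{-1}\|_{\mathrm{op}}\,\|\a\|_2$ is invariant under rescaling the basis and is of order $1\cdot L\cdot L^{-1/2}=L^{1/2}$. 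So the Cauchy--Schwarz bound in your Step 2 only yields $\max_t|\wk\eta_t|=O(L^{1/2})$. The same bookkeeping in Step 4 yields $\max_t|\eta_t-\wk\eta_t|=\Op(LT^{-1/2}\log T)$ rather than $L^{1/2}T^{-1/2}\log T$.

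The loss is structural, not a matter of constants. $\Z_t$ is supported on $O(q)$ coordinates per block, whereas $\A^{-1}\a$ (the spline coefficients of the population projection of the constant) has entries of order one spread over all $L$ coordinates. Thus $\|\A^{-1}\a\|_2\asymp L^{1/2}$, while only $O(q)$ of its entries actually enter $\Z_t^{\T}\A^{-1}\a$. For the denominators, which are quadratic forms in the spread-out vector $\a$, Cauchy--Schwarz is tight and your perturbation bound of order $(L/T)^{1/2}\log T$ is correct.

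The missing ingredient is sup-norm control of the inverse Gram matrices, which is exactly what the paper uses. $L\A$ and, with probability approaching one, $L\wh\A$ are banded up to low-rank corrections from the centering in $\wt\B$, with eigenvalues bounded away from zero and infinity. Demko's theorem on exponential decay of the entries of inverses of banded matrices, together with Lemma A.2 of Ma et al.\ (2020), then gives $\|\A^{-1}\|_\infty\le CL$ and $\|\wh\A^{-1}\|_\infty\le CL$ in the maximum-row-sum norm.

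One then argues by $\ell_1$--$\ell_\infty$ duality. First, $|\Z_t^{\T}\A^{-1}\a|\le\|\Z_t\|_1\,\|\A^{-1}\|_\infty\,\|\a\|_\infty\le C(1+\|\f_t\|_\infty)\cdot L\cdot L^{-1}$. Next, write $\wh\A^{-1}-\A^{-1}=\wh\A^{-1}(\A-\wh\A)\A^{-1}$, where $\|\wh\A-\A\|_\infty$ and $\|\wh\a-\a\|_\infty$ are both $\Op\{(TL)^{-1/2}\log T\}$; this follows from your entrywise Bernstein bound and the banded structure. The numerator difference is then at most $C(1+\|\f_t\|_\infty)\{L^{2}(TL)^{-1/2}\log T\cdot L^{-1}+L\,(TL)^{-1/2}\log T\}\asymp L^{1/2}T^{-1/2}\log T$. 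Without such an $\ell_\infty$ inverse bound (equivalently, $L_\infty$-stability of the $L_2$ spline projector), the first and third claims of the lemma do not follow from your argument.
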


\begin{proof}
{ 
Since $\bM_Z$ is idempotent, its eigenvalues are either 0 or 1. Then $\a^{\T}\bM_Z\a\ge 0$ for any $\a\in\mR^{T}$ with $\norm{\a}=1$ and the equality holds when $\a$ is an eigenvector corresponding to $\lambda_{\min}(\bM_Z)=0$. Note that $\bone_T/\sqrt{T}$ is not an eigenvector, thus $\bone_T^{\T}\bM_Z\bone_T/T\ge c>0$. In addition, $\bone_T^{\T}\bM_Z\bone_T/T\le \lambda_{\max}(\bM_Z)\bone_T^{\T}\bone_T/T=1$. Hence, $\bone_T^{\T}\bM_Z\bone_T/T \asymp 1$.

By Bernstein's inequality in \citet{Bosq1996NonparametricSF} and the same proof for Lemma A.8 of \citet{Ma2011SplinebackfittedKS}, under $L^3T^{-1}=o(1)$,
\begin{equation*}
\norm{\Z^{\T}\bone_T/T-\E(\Z^{\T}\bone_T/T)}_\infty=O_{\text{a.s.}}(T^{-1/2}L^{-1/2}\log T).
\end{equation*} 
Further, under Assumption \ref{ass:factors}
\begin{align*}
\norm{\E(\Z^{\T}\bone_T/T)}_\infty
&\le\max_{1\le l\le L}T^{-1}  \sum_{t=1}^T|B_l(t/T) |\left(1+\sum_{j=1}^d\E|f_{jt}| \right)\\
&\le MT^{-1} \max_{1\le l\le L}\sum_{t\in\{t:|l(t)-l|\le q-1\}}|B_l(t/T)|=O(L^{-1}),
\end{align*} 
which leads to $\norm{\Z^{\T}\bone_T/T}_\infty=O_{\text{a.s.}}(L^{-1})$. By Lemma A.2 in \citet{ma2020testing} and the result in \citet{Demko1986SpectralBF}, we have with probability one,
\begin{equation*}
\norm{(\Z^{\T}\Z/T)^{-1}}_\infty\le CL
~~\text{and}~~
\norm{\{\E(\Z^{\T}\Z/T)\}^{-1}}_\infty\le CL.
\end{equation*}
Thus,
\begin{equation*}
\norm{(\Z^{\T}\Z/T)^{-1}-\{\E(\Z^{\T}\Z/T)\}^{-1}}_\infty=O_\text{a.s.}(L^2)\norm{\Z^{\T}\bone_T/T-\E(\Z^{\T}\bone_T/T)}_\infty=O_\text{a.s.}(T^{-1/2}L^{3/2}\log T).
\end{equation*}

Using the fact that $\sum_{l=1}^LB_l(t/T)$ is bounded, we have with probability approaching one,
\begin{align*}
|\eta_t|&=\left|\frac{1-\Z_t^{\T}(\Z^{\T}\Z)^{-1}\Z^{\T}\bone_T}{\bone_T^{\T}\bM_Z\bone_T/T}\right|\le C(1+\norm{\Z_t}_1\cdot ||(\Z^{\T}\Z/T)^{-1}||_\infty\cdot ||\Z^{\T}\bone_T/T||_\infty)\le C(1+\norm{\bm{f}_t}_\infty),
\end{align*}
\begin{align*}
\left|\eta_t-\frac{1-\Z_t^{\T}\{\E(\Z^{\T}\Z/T)\}^{-1}\E(\Z^{\T}\bone_T/T)}{\bone_T^{\T}\bM_Z\bone_T/T}\right|\le& C\norm{\Z_t}_1\cdot \norm{(\Z^{\T}\Z/T)^{-1}-\{\E(\Z^{\T}\Z/T)\}^{-1}}_\infty\cdot \norm{\Z^{\T}\bone_T/T}_\infty\\
&+C\norm{\Z_t}_1\cdot \norm{\{\E(\Z^{\T}\Z/T)\}^{-1}}_\infty\cdot \norm{\Z^{\T}\bone_T/T-\E(\Z^{\T}\bone_T/T)}_\infty\\
\le&CT^{-1/2}L^{1/2}\log T(1+\norm{\bm{f}_t}_\infty),
\end{align*}
and 
\begin{align*}
&\left|\wk\eta_t-\frac{1-\Z_t^{\T}\{\E(\Z^{\T}\Z/T)\}^{-1}\E(\Z^{\T}\bone_T/T)}{\bone_T^{\T}\bM_Z\bone_T/T}\right|\\
\le&|\wk\eta_t|\cdot\left|(\bone_T^{\T}\Z/T)(\Z^{\T}\Z/T)^{-1}(\Z^{\T}\bone_T/T)-\E(\bone_T^{\T}\Z/T)\{\E(\Z^{\T}\Z/T)\}^{-1}\E(\Z^{\T}\bone_T/T)\right|\\
\le&CT^{-1/2}L^{1/2}\log T(1+\norm{\bm{f}_t}_\infty).
\end{align*}
Then desired results follow from the boundedness of factor.}
\end{proof}

\begin{lemma}
\label{lem:trunc}
{ 
Recall
\begin{align*}
\X_t=\e_t\eta_t,\qquad
\wk \X_t=\e_t\wk \eta_t,\qquad
\wh{\X}_t=\wh{\be}_t\eta_t,
\qquad
\bar{\wh{\X}}=T^{-1}\sum_{t=1}^T \wh{\X}_t,
\qquad
\tilde{\X}_t=\wh{\X}_t-\bar{\wh{\X}}.
\end{align*}
Under Assumption \ref{ass:smooth}- \ref{ass:errors} and $H_0$, we have
\begin{align*}
    &\left|\wb\X_T^{\T}\wb \X_T-\wb{\wk \X}_T^{\T}\wb{\wk \X}_T\right|=\Op\left\{T^{-3/2}L^{1/2}(\log T)\tr(\bmS)\right\},\\
    &\max_{0\le h\le T}\left|T^{-1}\sum_{t=1}^T\left(\tilde{\X}_t^{\T}\tilde{\X}_{t+h}-\X_t^{\T}\X_{t+h}\right)\right|=\Op\left\{T^{-1/2}N^{1/2}L^{-r}\tr^{1/4}(\bmS^2)+T^{-1}L\tr(\bmS)\right\},\\
    &\max_{0\le h\le T}\left\|T^{-1}\sum_{t=1}^T\left(\tilde{\X}_{t+h}\tilde{\X}_t^{\T}-\X_{t+h}\X_t^{\T}\right)\right\|_F=\Op\{T^{-1/2}N^{1/2}L^{-r+1}\tr^{1/2}(\bmS)+T^{-1}L\tr(\bmS)\}.
\end{align*}
}
\end{lemma}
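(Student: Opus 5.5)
Three bounds are claimed, and each splits into an \emph{oracle perturbation} from $\eta_t$ versus $\wk\eta_t$ and a \emph{residual perturbation} from $\wh{\be}_t$ versus $\be_t$. Lemma~\ref{lem:proj} controls the first piece and Lemma~\ref{lem:spline} the second. Throughout I condition on the factors, which by Assumption~\ref{ass:factors} are independent of $\{\be_t\}$. Conditionally, $\eta_t,\wk\eta_t$ and $\bM_Z$ are then deterministic, and every bound reduces to a second-moment computation over the linear process of Assumption~\ref{ass:errors}.

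\textbf{First bound.} Write
\[
\wb\X_T^{\T}\wb\X_T-\wb{\wk\X}_T^{\T}\wb{\wk\X}_T=(\wb\X_T-\wb{\wk\X}_T)^{\T}(\wb\X_T+\wb{\wk\X}_T),
\qquad
\wb\X_T-\wb{\wk\X}_T=T^{-1}\sum_t\be_t(\eta_t-\wk\eta_t).
\]
Conditional on the factors, with weights $a_t$ bounded by $A$, summability of the autocovariances $\sum_h|\gamma_b(h)|<\infty$ gives
\[
\E\bigl\|T^{-1}\textstyle\sum_t\be_ta_t\bigr\|_2^2\le CT^{-1}A^2\tr(\bmS).
\]
Taking $a_t=\eta_t-\wk\eta_t$, Lemma~\ref{lem:proj} gives $A=CT^{-1/2}L^{1/2}\log T$, so $\|\wb\X_T-\wb{\wk\X}_T\|_2=\Op(T^{-1}L^{1/2}\log T\,\tr^{1/2}(\bmS))$. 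Taking $a_t=\eta_t$ or $\wk\eta_t$ gives $\|\wb\X_T+\wb{\wk\X}_T\|_2=\Op(T^{-1/2}\tr^{1/2}(\bmS))$. Cauchy--Schwarz then yields the rate $T^{-3/2}L^{1/2}(\log T)\tr(\bmS)$.

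\textbf{Second and third bounds.} Under $H_0$, \eqref{eq:sieve_repr} gives $\wh{\be}_{i\cdot}=\bM_Z(\be_{i\cdot}+\r_{i\cdot})$. Hence
\[
\wh{\X}_t-\X_t=\eta_t\{-(\bP_Z\be)_t+(\bM_Z\r)_t\},
\]
and the centering contributes $\bar{\wh\X}$. So
\[
\tilde\X_t=\X_t+\bm\Delta_t,\qquad
\bm\Delta_t=-\eta_t(\bP_Z\be)_t+\eta_t(\bM_Z\r)_t-\bar{\wh\X}.
\]
Expanding, the lag-$h$ difference equals
\[
T^{-1}\sum_t\bigl(\X_t^{\T}\bm\Delta_{t+h}+\bm\Delta_t^{\T}\X_{t+h}+\bm\Delta_t^{\T}\bm\Delta_{t+h}\bigr),
\]
with the analogous outer-product version for the Frobenius bound. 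I bound the pieces as follows.

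(a) The projection piece. $\sum_t\|(\bP_Z\be)_t\|_2^2=\tr(\be^{\T}\bP_Z\be)$, and conditionally its expectation is at most $(d+1)L\cdot\lambda_{\max}(\text{temporal covariance})\tr(\bmS)=O(L\tr(\bmS))$. Dividing by $T$ gives $T^{-1}L\tr(\bmS)$, both for the quadratic term and, via Cauchy--Schwarz, for its cross terms with $\X_t$ after averaging.

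(b) The centering piece. $\|\bar{\wh\X}\|_2^2=\Op(T^{-1}\tr(\bmS)+NL^{-2r})$, so its contribution is no larger than the terms in (a) and (c).

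(c) The bias piece. $\|\bM_Z\r_{i\cdot}\|_2^2\le TCL^{-2r}$ by Lemma~\ref{lem:spline}. The cross term $T^{-1}\sum_t\X_t^{\T}\eta_{t+h}(\bM_Z\r)_{t+h}$ is linear in $\be$, so its variance is of order $T^{-1}\cdot L^{-2r}\cdot\|\bmS\|$-weighted sums. For the inner product, I bound it through $\var(\sum_i c_i e_{it})\le\|\bm c\|_2\|\bmS\bm c\|_2$, where $\bm c$ has entries of size $L^{-r}$. This produces $T^{-1/2}N^{1/2}L^{-r}\tr^{1/4}(\bmS^2)$, since $\|\bmS\bm c\|_2\le N^{1/2}L^{-r}\tr^{1/2}(\bmS^2)$. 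For the Frobenius norm of the outer product I use the cruder bound $\|\bm c\|_2\,\|T^{-1}\sum\X_t\cdots\|_2$. The extra factor $L$ comes from replacing the entrywise bound on $\bM_Z\r$ by $\|\bM_Z\|_\infty\lesssim L$, which yields $T^{-1/2}N^{1/2}L^{-r+1}\tr^{1/2}(\bmS)$.

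\textbf{Uniformity and the main obstacle.} To get uniformity over $0\le h\le T$, I avoid a union bound: the projection and bias bounds above are deterministic given $\sum_t\|\X_t\|_2^2$ and $\sum_t\|(\bP_Z\be)_t\|_2^2$, so Cauchy--Schwarz with the shift $t\mapsto t+h$ handles all lags at once. The main obstacle is the cross term between $\X_t$ and the projection error $\eta_{t+h}(\bP_Z\be)_{t+h}$. Both factors depend on the same $\be$, so there is a mean term $T^{-1}\tr\{\bP_Z\,\cdot\,\text{diag}(\eta)\,\cdot\,\text{(temporal cov)}\}\tr(\bmS)$, and I must show it is $O(T^{-1}L\tr(\bmS))$ rather than larger. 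I would first attempt this via $|\tr(\bP_Z\bm A)|\le\operatorname{rank}(\bP_Z)\|\bm A\|$, using that $\|\text{temporal cov}\|\le(\sum|b_k|)^2$ and that the $\eta_t$ are bounded. The fluctuation around this mean I would control with fourth moments of the innovations, which is where $q>8$ enters.
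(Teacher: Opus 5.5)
Your first bound is correct and somewhat cleaner than the paper's. Factoring the difference as $(\wb{\X}_T-\wb{\wk{\X}}_T)^{\T}(\wb{\X}_T+\wb{\wk{\X}}_T)$ and using two conditional second-moment bounds replaces the paper's fourth-moment expansion through $\E(\e_{t_1}^{\T}\e_{t_2}\e_{t_3}^{\T}\e_{t_4})$. Cauchy--Schwarz is not lossy there, because both factors are already sample means.

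The gap is in the second and third bounds. There your uniformity device applies Cauchy--Schwarz \emph{before} averaging, which discards the $T^{-1/2}$ cancellation the stated rates rely on. For the projection cross term it gives
\[
T^{-1}\Bigl(\sum_{t}\norm{\X_t}_2^2\Bigr)^{1/2}\Bigl(\sum_{t}\eta_t^2\norm{(\bP_Z\e)_t}_2^2\Bigr)^{1/2}=\Op\bigl\{T^{-1/2}L^{1/2}\tr(\bmS)\bigr\},
\]
which exceeds the target $T^{-1}L\tr(\bmS)$ by the factor $(T/L)^{1/2}$. So the assertion in (a) that Cauchy--Schwarz handles the cross terms is false; your last paragraph already senses this. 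The same problem appears elsewhere:
\begin{itemize}
\item For the bias cross term, the same device yields $N^{1/2}L^{-r}\tr^{1/2}(\bmS)$ instead of $T^{-1/2}N^{1/2}L^{-r}\tr^{1/4}(\bmS^2)$.
\item In (b), the product $\norm{\wb{\X}_T}_2\norm{\bar{\wh{\X}}}_2$ contains $T^{-1/2}N^{1/2}L^{-r}\tr^{1/2}(\bmS)$, which does not fit the second bound. A direct variance computation replaces $\tr^{1/2}(\bmS)$ by $\lambda_{\max}^{1/2}(\bmS)\le\tr^{1/4}(\bmS^2)$.
\item The Frobenius version in the third bound inflates in exactly the same way.
\end{itemize}
The stated rates come only from a fixed-$h$ analysis. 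Conditionally on the factors, treat each cross term as a linear or quadratic form in $\e$ and compute its second moment, using $\max_{t,s}|[\bP_Z]_{ts}|=\Op(T^{-1}L)$ and $\max_{i,t}|(\bP_Z\r_{i\cdot})_t|\le CL^{-r}$. This is what the paper does with $I_{h,1},\dots,I_{h,5}$ and $I_{1,ijh},\dots,I_{5,ijh}$, and what your (c) and final paragraph begin to do. Your rank bound on the mean of the projection cross term is correct and uniform in $h$.

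Once you make that repair, your claim to avoid a union bound no longer holds. The cross terms are then controlled only through $\max_h\E(\cdot)^2$, that is, lag by lag. Passing to $\max_{0\le h\le T}|\cdot|$ needs a genuine maximal inequality. A union bound over the $T+1$ lags with the available moments costs a polynomial factor in $T$ that the stated rates do not obviously absorb. The paper's proof has the same limitation: it bounds $\max_h\E(I_{h,j}^2)$ and then reads off $\max_h|I_{h,j}|$.

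What the proof of Theorem~\ref{thm:sumclt} actually uses is a weighted sum over $|h|\le\ell$. For that, per-lag moment bounds plus Minkowski's inequality suffice, so the version you should state and prove is the lag-uniform moment bound.

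Finally, you never bound the bias--bias term $T^{-1}\sum_t\eta_t\eta_{t+h}(\bM_Z\r)_t^{\T}(\bM_Z\r)_{t+h}$. It is deterministic given the factors and of order $NL^{-2r}$. It is absorbed into $T^{-1}L\tr(\bmS)$ only after invoking Assumption~\ref{ass:growth} together with $\tr(\bmS)\asymp N$, and neither is among the lemma's hypotheses. The paper also drops this term, treating the cross term as the leading one.
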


\begin{proof}
{ 
Observing that 
\begin{align*}
    \wb\X_T^{\T}\wb \X_T-\wb{\wk \X}_T^{\T}\wb{\wk \X}_T=2T^{-2}\sum_{1\le t,s\le T}\X_t^{\T}(\X_s-\wk \X_s)+T^{-2}\sum_{1\le t,s\le T}(\X_t-\wk \X_t)^{\T}(\X_s-\wk \X_s),
\end{align*}
in which the interaction term is the main term. Since $\{\f_t\}$ and $\{\e_t\}$ are independent,
\begin{align*}
    \E\left\{\left|T^{-2}\sum_{1\le t,s\le T}\X_t^{\T}(\X_s-\wk \X_s)\right|^2\right\}=T^{-4}\sum_{1\le t_1,t_2,t_3,t_4\le T}\E(\e_{t_1}^{\T}\e_{t_2}\e_{t_3}^{\T}\e_{t_4})\E\{\eta_{t_1}(\eta_{t_2}-\wk\eta_{t_2})\eta_{t_3}(\eta_{t_4}-\wk\eta_{t_4})\}.
\end{align*}
Using H\"{o}lder inequality and the results in Lemma~\ref{lem:proj}, 
\begin{align*}
    \max_{t_1,t_2,t_3,t_4}|\E\{\eta_{t_1}(\eta_{t_2}-\wk\eta_{t_2})\eta_{t_3}(\eta_{t_4}-\wk\eta_{t_4})\}|=O(T^{-1}L\log^2 T).
\end{align*}
Write $\e_t=\bmS^{1/2}\u_t$, where $\u_t=(u_{1t},\dots,u_{Nt})^{\T}$ with $u_{it}=\sum_{k=0}^\infty b_k\z_{i,t-k}$. Then for each $t$, $\{u_{it}\}_{i\ge 1}$ are independent with $\E(u_{it})=0$, and $\E(u_{it}u_{i,t+h})=a_h$ with
\begin{align}\label{eq:a_h}
a_h:=\sum_{k=0}^\infty b_kb_{k+h}.
\end{align}
Then,
\begin{align}\label{eq:fourmoment_e}
&\E(\e_{t_1}^{\T}\e_{t_2}\e_{t_3}^{\T}\e_{t_4})\nonumber\\
=&\sum_{1\le i,j,k,l\le N}\Sigma_{ij}\Sigma_{kl}\E(u_{it_1}u_{jt_2}u_{kt_3}u_{lt_4})\nonumber\\
=&\sum_{i\ne j}\Sigma_{ii}\Sigma_{jj}a_{t_1-t_2}a_{t_3-t_4}+\sum_{i\ne j}\Sigma_{ij}^2(a_{t_1-t_3}a_{t_2-t_4}+a_{t_1-t_4}a_{t_2-t_3})+\sum_{i}\Sigma_{ii}^2\E(u_{it_1}u_{it_2}u_{it_3}u_{it_4})\nonumber\\
\approx&\tr^2(\bmS)a_{t_1-t_2}a_{t_3-t_4}+\tr(\bmS^2)(a_{t_1-t_3}a_{t_2-t_4}+a_{t_1-t_4}a_{t_2-t_3})+\sum_{i}\Sigma_{ii}^2(a_{t_1-t_2}a_{t_3-t_4}+a_{t_1-t_3}a_{t_2-t_4}+a_{t_1-t_4}a_{t_2-t_3}).
\end{align}
Under the assumption that $\sum_{k=0}^\infty |b_k|<\infty$ and $b_k=o(k^{-5-\eta_b})$, we have
\begin{align}\label{eq:boundba}
\sum_{k=n}^\infty |b_k|=o(n^{-4-\eta_b}),\qquad 
T^{-1}\sum_{1\le t_1,t_2\le T}|a_{t_1-t_2}|< \infty.
\end{align}  
Hence,
\begin{align*}
T^{-4}\sum_{1\le t_1,t_2,t_3,t_4\le T}|\E(\e_{t_1}^{\T}\e_{t_2}\e_{t_3}^{\T}\e_{t_4})|\le CT^{-2}\tr^2(\bmS).   
\end{align*}
Accordingly,
\begin{align*}
\E\left\{\left|T^{-2}\sum_{1\le t,s\le T}\X_t^{\T}(\X_s-\wk \X_s)\right|^2\right\}=O\{T^{-3}L(\log^2 T)\tr^2(\bmS)\},
\end{align*}
which leads to 
\begin{align*}
    \left|\wb\X_T^{\T}\wb \X_T-\wb{\wk \X}_T^{\T}\wb{\wk \X}_T\right|=\Op\left\{T^{-3/2}L^{1/2}(\log T)\tr(\bmS)\right\}.
\end{align*}

Next, we analyze $T^{-1}\sum_{t=1}^T\left(\tilde{\X}_t^{\T}\tilde{\X}_{t+h}-\X_t^{\T}\X_{t+h}\right)$. Under $H_0$, $\wh{\e}_{i\cdot}=\bM_Z(\e_{i\cdot}+\r_{i\cdot})=(\I_T-\bP_Z)(\e_{i\cdot}+\r_{i\cdot})$, then
\begin{align}\label{eq:decome_it}
&\wh e_{it}=e_{it}+r_{it}+\Z_t^{\T}(\Z^{\T}\Z)^{-1}\Z^{\T}\r_{i\cdot}+\Z_t^{\T}(\Z^{\T}\Z)^{-1}\Z^{\T}\e_{i\cdot},~~\text{and}\nonumber\\
&\bar{\wh{X}}_i=\kappa_T^{-1}\h^{\T}\wh\e_{i\cdot}=\kappa_T^{-1}\h^{\T}(\e_{i\cdot}+\r_{i\cdot})=T^{-1}\bmeta^{\T}(\e_{i\cdot}+\r_{i\cdot}),
\end{align}
where the second equation uses the fact that $\h^{\T}\bM_Z=\h^{\T}$. Thus, 
\begin{align*}
T^{-1}\sum_{t=1}^T&\left(\tilde{\X}_t-\X_t\right)^{\T}\X_{t+h}\\
=T^{-1}\sum_{t=1}^T&\sum_{i=1}^N\left\{r_{it}e_{i,t+h}\eta_t\eta_{t+h}
+\Z_t^{\T}(\Z^{\T}\Z)^{-1}\Z^{\T}\r_{i\cdot}e_{i,t+h}\eta_t\eta_{t+h}
+\Z_t^{\T}(\Z^{\T}\Z)^{-1}\Z^{\T}\e_{i\cdot}e_{i,t+h}\eta_t\eta_{t+h}\right.\\
&\left.\qquad+T^{-1}\bmeta^{\T}\e_{i\cdot}e_{i,t+h}\eta_{t+h}+T^{-1}\bmeta^{\T}\r_{i\cdot}e_{i,t+h}\eta_{t+h}\right\}\\
\qquad=:I_{h,1}+&I_{h,2}+I_{h,3}+I_{h,4}+I_{h,5}.
\end{align*}
Since $\{\f_t\}_{t\ge 1}$ and $\{\e_t\}_{t\ge 1}$ are independent,
\begin{align*}
\max_h\E(I_{h,1}^2)
=&\max_hT^{-2}\sum_{1\le t,s\le T}\E(\r_t^{\T}\e_{t+h}\r_s^{\T}\e_{s+h}\eta_t\eta_{t+h}\eta_s\eta_{s+h})\\
=&\max_hT^{-2}\sum_{1\le t,s\le T}\tr\left\{\E(\e_{t+h}\e_{s+h}^{\T})\E(\r_s\r_t^{\T}\eta_t\eta_{t+h}\eta_s\eta_{s+h})\right\}\\
\le&\max_hT^{-2}\sum_{1\le t,s\le T}\norm{\E(\e_{t+h}\e_{s+h}^{\T})}_F\cdot\norm{\E(\r_s\r_t^{\T}\eta_t\eta_{t+h}\eta_s\eta_{s+h})}_F\\
\le&CT^{-2}\sum_{1\le t,s\le T}|a_{t-s}|\cdot\norm{\bmS}_F\cdot NL^{-2r}\\
\le&CT^{-1}NL^{-2r}\tr^{1/2}(\bmS^2),
\end{align*}
where the last second step comes from the fact that $\E(\e_t\e_{t+h}^{\T})=a_h\bmS$, the results in Lemma~\ref{lem:spline} and \ref{lem:proj}, and \eqref{eq:boundba}.
\begin{align*}
&\max_h\E(I_{h,2}^2)\\
=&\max_hT^{-2}\sum_{1\le t,s\le T}\sum_{1\le i,j\le N}\E(e_{i,t+h}e_{j,s+h})\E\left\{\Z_t^{\T}(\Z^{\T}\Z)^{-1}\Z^{\T}\r_{i\cdot}\Z_t^{\T}(\Z^{\T}\Z)^{-1}\Z^{\T}\r_{j\cdot}\eta_t\eta_{t+h}\eta_s\eta_{s+h}\right\}\\
\le&T^{-2}\sum_{1\le t,s\le T}\sum_{1\le i,j\le N}|a_{t-s}\Sigma_{ij}|\cdot\max_{i,t,h}\left|\E\left\{\Z_t^{\T}(\Z^{\T}\Z)^{-1}\Z^{\T}\r_{i\cdot}\Z_t^{\T}(\Z^{\T}\Z)^{-1}\Z^{\T}\r_{j\cdot}\eta_t\eta_{t+h}\eta_s\eta_{s+h}\right\}\right|.
\end{align*}
According to the proof of Lemma~\ref{lem:trunc}, we have with probability approaching one
\begin{align}\label{eq:PZrsize}
\max_{i,t}\left|\Z_t^{\T}(\Z^{\T}\Z)^{-1}\Z^{\T}\r_{i\cdot}\right|
\le&\max_t\norm{\Z_t}_1\cdot\norm{(\Z^{\T}\Z/T)^{-1}}_\infty\cdot\max_i\norm{\Z^{\T}\r_{i\cdot}/T}_\infty\nonumber\\
\le&CL\max_{1\le k\le(d+1)L}T^{-1}\sum_{t=1}^T|Z_{tk}|\cdot\max_{i,t}|r_{it}|\nonumber\\
\le&CL^{1-r}\max_{1\le l\le L}T^{-1}\sum_{t=1}^T|B_l(t/T)|\left(1+\max_{1\le j\le d}|f_{jt}|\right)\le CL^{-r},
\end{align}
which, combining with the fact that $\norm{\bmS}_1^{{\rm entry}}\le N\norm{\bmS}_F$, leads to
\begin{align*}
\max_h\E(I_{h,2}^2)\le CT^{-1}N\tr^{1/2}(\bmS^2)L^{-2r}.
\end{align*}
Note that $I_{h,3}=T^{-1}\sum_{1\le t,s\le T}\e_s^{\T}\e_{t+h}[\bP_Z]_{ts}\eta_t\eta_{t+h}$, by \eqref{eq:fourmoment_e},  
\begin{align*}
\max_h\E(I_{h,3}^2)
=T^{-2}\sum_{1\le t,s,m,n\le T}\E(\e_s^{\T}\e_{t+h}\e_n^{\T}\e_{m+h})\E\left([\bP_Z]_{ts}[\bP_Z]_{mn}\eta_t\eta_{t+h}\eta_m\eta_{m+h}\right)\le CT^{-2}L^2\tr^2(\bmS),
\end{align*}
where the final step uses the fact that
\begin{align}\label{eq:PZsize}
\max_{t,s}|[\bP_Z]_{ts}|\le T^{-1}\max_t\norm{\Z_t}_1\cdot\norm{(\Z^{\T}\Z/T)^{-1}}_\infty\cdot\max_s\norm{\Z_s}_\infty=\Op(T^{-1}L).
\end{align}
For $I_{h,4}=T^{-2}\sum_{1\le t,s\le T}\e_t^{\T}\e_{t+h}\eta_s\eta_{t+h}$, by \eqref{eq:fourmoment_e}, 
\begin{align*}
\max_h\E(I_{h,4}^2)
=&\max_hT^{-4}\sum_{1\le t,s,m,n\le T}\E(\e_s^{\T}\e_{t+h}\e_n^{\T}\e_{m+h})\E(\eta_s\eta_{t+h}\eta_n\eta_{m+h})\le CT^{-2}\tr^2(\bmS).
\end{align*}
For $I_{h,5}=T^{-2}\sum_{1\le t,s\le T}\r_s^{\T}\e_{t+h}\eta_s\eta_{t+h}$, similarly as $\max_h\E(I_{h,1}^2)$, we have
\begin{align*}
\max_h\E(I_{h,5}^2)
=&\max_hT^{-4}\sum_{1\le t,s,m,n\le T}\tr\{\E(\e_{t+n}\e_{m+h}^{\T})\E(\r_n\r_s^{\T}\eta_{t+h}\eta_s\eta_{m+h}\eta_n)\}\le CT^{-1}NL^{-2r}\tr^{1/2}(\bmS^2).
\end{align*}

Combining all the bounds of the second moment, we have
\begin{align*}
\max_h\left|T^{-1}\sum_{t=1}^T\left(\tilde{\X}_t-\X_t\right)^{\T}\X_{t+h}\right|=\Op\left\{T^{-1/2}N^{1/2}L^{-r}\tr^{1/4}(\bmS^2)+T^{-1}L\tr(\bmS)\right\}, 
\end{align*}
which is the leading term in $\max_h\left|T^{-1}\sum_{t=1}^T\left(\tilde{\X}_t^{\T}\tilde{\X}_{t+h}-\X_t^{\T}\X_{t+h}\right)\right|$.

Finally, we analyze $\left\|T^{-1}\sum_{t=1}^T\left(\tilde{\X}_{t+h}\tilde{\X}_t^{\T}-\X_{t+h}\X_t^{\T}\right)\right\|_F$. By \eqref{eq:decome_it},
\begin{align*}
&\left\|T^{-1}\sum_{t=1}^T\X_{t+h}\left(\tilde{\X}_t-\X_t\right)^{\T}\right\|_F^2\\
=&\sum_{1\le i,j\le N}\left|T^{-1}\sum_{t=1}^TX_{j,t+h}(\tilde{X}_{it}-X_{it})\right|^2\\
=&\sum_{1\le i,j\le N}\left|T^{-1}\sum_{t=1}^T\left\{r_{it}+\Z_t^{\T}(\Z^{\T}\Z)^{-1}\Z^{\T}\r_{i\cdot}+\Z_t^{\T}(\Z^{\T}\Z)^{-1}\Z^{\T}\e_{i\cdot}-T^{-1}\bmeta^{\T}(\e_{i\cdot}+\r_{i\cdot})\right\}e_{j,t+h}\eta_t\eta_{t+h}\right|^2\\
\le&5\sum_{1\le i,j\le N}(I_{1,ijh}^2+I_{2,ijh}^2+I_{3,ijh}^2+I_{4,ijh}^2+I_{5,ijh}^2),
\end{align*}
where 
\begin{align*}
I_{1,ijh}=&T^{-1}\sum_{t=1}^Tr_{it}e_{j,t+h}\eta_{t}\eta_{t+h},\\
I_{2,ijh}=&T^{-1}\sum_{t=1}^T\Z_t^{\T}(\Z^{\T}\Z)^{-1}\Z^{\T}\r_{i\cdot}e_{j,t+h}\eta_{t}\eta_{t+h}=T^{-1}\sum_{1\le t,s\le T}e_{j,t+h}r_{is}[\bP_Z]_{ts}\eta_{t}\eta_{t+h},\\ 
I_{3,ijh}=&T^{-1}\sum_{t=1}^T\Z_t^{\T}(\Z^{\T}\Z)^{-1}\Z^{\T}\e_{i\cdot}e_{j,t+h}\eta_{t}\eta_{t+h}=T^{-1}\sum_{1\le t,s\le T}e_{j,t+h}e_{is}[\bP_Z]_{ts}\eta_{t}\eta_{t+h},\\ 
I_{4,ijh}=&T^{-1}\sum_{t=1}^TT^{-1}\bmeta^{\T}\e_{i\cdot}e_{j,t+h}\eta_{t}\eta_{t+h},\\ 
I_{5,ijh}=&T^{-1}\sum_{t=1}^TT^{-1}\bmeta^{\T}\r_{i\cdot}e_{j,t+h}\eta_{t}\eta_{t+h}.
\end{align*}
By Lemma~\ref{lem:spline} and \ref{lem:proj}, the fact that $\E(e_{j,t+h}e_{j,s+h})=\Sigma_{jj}a_{t-s}$ and \eqref{eq:boundba}, we have
\begin{align*}
\max_h\E(I_{1,ijh}^2)
=&\max_hT^{-2}\sum_{1\le t,s\le T}\E(e_{j,t+h}e_{j,s+h})\E(r_{it}r_{is}\eta_{t}\eta_{t+h}\eta_{s}\eta_{s+h})\le CT^{-1}L^{-2r}\Sigma_{jj},
\end{align*}
which leads to $\max_h\sum_{1\le i,j\le N}I_{1,ijh}^2=\Op\{T^{-1}NL^{-2r}\tr(\bmS)\}$. Further, by \eqref{eq:PZsize},
\begin{align*}
\max_h\E(I_{2,ijh}^2)
=&\max_hT^{-2}\sum_{1\le t,s,m,n\le T} \E(e_{j,t+h}e_{j,m+h})\E (r_{is}r_{in}[\bP_Z]_{ts}[\bP_Z]_{mn}\eta_{t}\eta_{t+h}\eta_{m}\eta_{m+h})\le CT^{-1}L^{-2r+2}\Sigma_{jj},
\end{align*}
which leads to $\max_h\sum_{1\le i,j\le N}I_{2,ijh}^2=\Op\{T^{-1}NL^{-2r+2}\tr(\bmS)\}$. For
$I_{3,ijh}$, noting that
\begin{align*}
\sum_{1\le i,j\le N}I_{3,ijh}^2
=T^{-2}\sum_{1\le t,s,m,n\le T}\e_s^{\T}\e_n\e_{t+h}^{\T}\e_{m+h}[\bP_Z]_{ts}[\bP_Z]_{mn}\eta_{t}\eta_{t+h}\eta_{m}\eta_{m+h},     
\end{align*}
by \eqref{eq:fourmoment_e}, \eqref{eq:boundba} and \eqref{eq:PZsize}, we have
\begin{align*}
\max_h\E\left(\left|\sum_{1\le i,j\le N}I_{3,ijh}^2\right|\right)
=&\max_hT^{-2}\sum_{1\le t,s,m,n\le T}\E(\e_s^{\T}\e_n\e_{t+h}^{\T}\e_{m+h})\E([\bP_Z]_{ts}[\bP_Z]_{mn}\eta_{t}\eta_{t+h}\eta_{m}\eta_{m+h})\\
\le&CT^{-2}L^2\tr^2(\bmS),
\end{align*}
which leads to $\max_h\sum_{1\le i,j\le N}I_{3,ijh}^2=\Op\{T^{-2}L^2\tr^2(\bmS)\}$. Similarly, for $I_{4,ijh}$, we have
\begin{align*}
\max_h\E\left(\left|\sum_{1\le i,j\le N}I_{4,ijh}^2\right|\right)
=&\max_hT^{-4}\sum_{1\le t,s,m,n\le T}\E(\e_m^{\T}\e_n\e_{t+h}^{\T}\e_{s+h})\E(\eta_n\eta_n\eta_t\eta_{t+h}\eta_s\eta_{s+h})\\
\le&CT^{-2}\tr^2(\bmS),
\end{align*}
which leads to $\max_h\sum_{1\le i,j\le N}I_{4,ijh}^2=\Op\{T^{-2}\tr^2(\bmS)\}$. For $T_{5,ijh}$, noting that
\begin{align*}
\sum_{1\le i,j\le N}I_{5,ijh}^2
=T^{-4}\sum_{1\le t,s,m,n \le T}\e_{t+h}^{\T}\e_{s+h}\left(\sum_{i=1}^Nr_{im}r_{in}\right) \eta_{t}\eta_{t+h}\eta_{s}\eta_{s+h}\eta_m\eta_n,     
\end{align*}
we have
\begin{align*}
\max_h\E\left(\left|\sum_{1\le i,j\le N}I_{5,ijh}^2\right|\right)
=&\max_hT^{-4}\sum_{1\le t,s,m,n \le T}\E(\e_{t+h}^{\T}\e_{s+h})\E\left\{\left(\sum_{i=1}^Nr_{im}r_{in}\right) \eta_{t}\eta_{t+h}\eta_{s}\eta_{s+h}\eta_m\eta_n\right\}\\
\le&CT^{-2}NL^{-2r}\cdot\max_h\sum_{1\le t,s\le T}|\E(\e_{t+h}^{\T}\e_{s+h})| \\
\le&CT^{-2}NL^{-2r}\cdot\sum_{i=1}^N \max_h\sum_{1\le t,s\le T}|\E(e_{i,t+h}e_{i,s+h})| \\
\le&CT^{-1}NL^{-2r}\tr(\bmS).
\end{align*}
Thus, $\max_h\sum_{1\le i,j\le N}I_{5,ijh}^2=\Op\{T^{-1}NL^{-2r}\tr(\bmS)\}$. In summary,
\begin{align*}
&\max_h\left\|T^{-1}\sum_{t=1}^T\X_{t+h}\left(\tilde{\X}_t-\X_t\right)^{\T}\right\|_F=\Op\{T^{-1/2}N^{1/2}L^{-r+1}\tr^{1/2}(\bmS)+T^{-1}L\tr(\bmS)\},
\end{align*}
which is the leading term in $\max_h\left\|T^{-1}\sum_{t=1}^T\left(\tilde{\X}_{t+h}\tilde{\X}_t^{\T}-\X_{t+h}\X_t^{\T}\right)\right\|_F$
}

\end{proof}

% \begin{lemma}[Long-run variance order]
% \label{lem:sigmaorder}
% Under Assumptions \ref{ass:errors} and \ref{ass:sigma},
% $$
% \sigma_T^2\asymp T^{-2}\tr(\bO_T^2),
% \qquad
% T^{-2}\tr(\bO_T^2)\ge c\frac{\tr(\bmS^2)}{T^2}
% $$
% for some constant $c>0$.
% \end{lemma}

% \begin{proof}
% Since $\bO_T$ is a weighted average of the autocovariance sequence of the linear process and $\sum_k b_k=s\neq 0$, its spectral density at frequency zero is nondegenerate. The trace bound follows from Assumption \ref{ass:sigma} and the same argument as in the sum-type theory of \citet{ma2024dependent}. 
% \end{proof}

\subsection{Proof of Theorem \ref{thm:sumclt}}
Under $H_0$,
\begin{align}
T_{\mathrm{DSUM}}
=(\wb{\X}_T+\b_T)^{\T}(\wb{\X}_T+\b_T)=\wb{\X}_T^{\T}\wb{\X}_T+2\b_T^{\T}\wb{\X}_T+\b_T^{\T} \b_T.
\label{eq:sumdecomp_pf}
\end{align}
According to the proof of Lemma~\ref{lem:proj}, we have $\kappa_T=\bone_T^{\T}\bM_Z\bone_T \asymp T$ and $\max_{1\le t\le T}|h_t|=\Op(1)$, which combining with the result of Lemma~\ref{lem:spline} that $\max_{i,t}|r_{it}|=\Op(L^{-r})$, imply $\norm{\b_T}=\Op(N^{1/2}L^{-r})$. Thus, with probability approaching one 
\begin{align}\label{eq:smallzero}
|\b_T^{\T} \b_T|\le CNL^{-2r}
~~\text{and}~~
|\b_T^{\T}\wb{\X}_T|\le CN^{1/2}L^{-r}\left(\wb{\X}_T^{\T}\wb{\X}_T\right)^{1/2}.
\end{align}
{ 
By Lemma~\ref{lem:trunc}, we have
\begin{align}\label{eq:errorcheck}
    \left|\wb\X_T^{\T}\wb \X_T-\wb{\wk \X}_T^{\T}\wb{\wk \X}_T\right|=\Op\left\{T^{-3/2}L^{1/2}(\log T)\tr(\bmS)\right\},
\end{align}
where 
\begin{align*}
\wk \X_t=\e_t\wk\eta_t,\qquad \wk\eta_t=\frac{1-\Z_t^{\T}\{\E(\Z^{\T}\Z/T)\}^{-1}\E(\Z^{\T}\bone_T/T)}{1-\E(\bone_T^{\T}\Z/T)\{\E(\Z^{\T}\Z/T)\}^{-1}\E(\Z^{\T}\bone_T/T)}.
\end{align*}
Under Assumption~\ref{ass:factors}, $\{\f_t\}_{t\ge 1}$ is stationary,  alpha mixing and independent of $\{\e_t\}_{t\ge 1}$, so $\{\wk\eta_t\}_{t\ge 1}$ is as well and $\alpha_\eta(k)=\alpha_f(k)$. By the mixing property (for example \cite{PT85mixing}),  $\e_t=\bmS^{1/2}\sum_{k=0}^\infty b_k\z_{t-k}$ is alpha mixing with $\alpha_e(k)=o(k^{-4-\eta_b})$. Next, we use the blocking method to prove the asymptotic normality of $\wb{\wk \X}_T^{\T}\wb{\wk \X}_T$. For any $T$, choose $\zeta\in(0,1)$ and $c>0$ such that $w_T=cT^\zeta>M\to\infty$ and $T=w_Tq_T+r_T$, where $0\le r_T<w_T$. For $1\le t,s\le q_T$, define
\begin{align*}
    &B_{ts}:=\sum_{k=(t-1)w_T+1}^{tw_T-M}\sum_{l=(s-1)w_T+1}^{sw_T-M}T^{-2}\wk\X_k^{\T}\wk\X_l,\\
    &D_{ts}:=\sum_{k=(t-1)w_T+1}^{tw_T}\sum_{l=(s-1)w_T+1}^{sw_T}T^{-2}\wk\X_k^{\T}\wk\X_l-B_{ts},\\
    &F:=\sum_{(k,l)\in\{1,\dots,T\}^2-\{1,\dots,q_Tw_T\}^2 }T^{-2}\wk\X_k^{\T}\wk\X_l.
\end{align*}
Then,
\begin{align}\label{eq:decompose}
    \wb{\wk \X}_T^{\T}\wb{\wk \X}_T=\sum_{1\le t\ne s\le q_T}B_{ts}+\sum_{t=1}^{q_T}B_{tt}+\sum_{1\le t,s\le q_T}D_{ts}+F.
\end{align}

For the first term, by Berbee's Lemma (see Lemma 1.1 in \cite{Bosq1996NonparametricSF}),  there exists a sequence of mutually independent random vectors $\{\wk\X_k^{I},k=(t-1)w_T+1,\dots,tw_T-M\}_{t=1}^{q_T}$ such that $\{\wk\X_k^{I},k=(t-1)w_T+1,\dots,tw_T-M\}_{t=1}^{q_T}$ and $\{\wk\X_k,k=(t-1)w_T+1,\dots,tw_T-M\}_{t=1}^{q_T}$ are identically distributed and 
\begin{align*}
\Pr\left(\{\wk\X_k^{I},k=(t-1)w_T+1,\dots,tw_T-M\}\ne \{\wk\X_k,k=(t-1)w_T+1,\dots,tw_T-M\}\right)\le C\alpha_e(M)\vee\alpha_f(M).
\end{align*}
Define $B_{ts}^I$ as $B_{ts}$ by replacing $\X_k$ with $\X_k^I$. Then for any $\epsilon>0$,
\begin{align*}
    \Pr\left(\left|\sum_{1\le t\ne s\le q_T}B_{ts}-\sum_{1\le t\ne s\le q_T}B_{ts}^I\right|>\epsilon\sqrt{\var\left(\sum_{1\le t\ne s\le q_T}B_{ts}^I\right)} \right)
    \le Cq_T^2(w_T-M)^2\{\alpha_e(M)\vee\alpha_f(M)\}\to 0,
\end{align*}
by properly setting $M\to\infty$. For $1\le l\le Q_t$, define 
\begin{align*}
M_l=\frac{(w_T-M)^2}{T^2}\sum_{t=2}^l\sum_{s=1}^{t-1}\Y_t^{\T}\Y_s
~~\text{with}~~
\Y_t=\frac{1}{w_T-M}\sum_{k=(t-1)w_T+1}^{tw_T-M}\wk\X_k^I,
\end{align*}
and $\mathcal{F}_l=\sigma(\Y_1,\dots,\Y_l)$. Then,
\begin{align*}
\sum_{1\le t\ne s\le q_T}B_{ts}^I=2M_{q_T}.  
\end{align*}
Since $\Y_1,\dots,\Y_{q_T}$ are independent with zero mean, $\{M_l,\mathcal{F}_l\}_{2\le l\le q_T}$ is a zero mean and square integrable martingale sequence.

Define
\begin{align}\label{eq:omegacheck}
\wk\bO_{\ell}=\sum_{0\le |h|\le \ell}\left(1-\frac{|h|}{\ell}\right)\wk\bG_h,
\qquad
\wk\bG_h=\E(\wk\X_t\wk\X_{t+h}^{\T}).
\end{align}
Note that $\wk\bG_h=\Gamma_{f,h}a_h\bmS$ with $\Gamma_{f,h}:=\E(\wk\eta_t\wk\eta_{t+h})$ and $a_h=\sum_{k=0}^\infty b_kb_{k+h}$. Then,
\begin{align*}
|\E(M_{q_T})|=&\left|\sum_{1\le s<t\le q_T}\sum_{k=(t-1)w_T+1}^{tw_T-M}\sum_{l=(s-1)w_T+1}^{sw_T-M}T^{-2}\tr(\wk\bG_{k-l}) \right|\\
\le&T^{-2}\sum_{h=M+1}^T(T-h)\left|\tr(\wk\bG_h)\right|\\
\le&T^{-1}\tr(\bmS)\sum_{h=M+1}^T|E(\wk\eta_t\wk\eta_{t+h})|\cdot|a_h|\to 0,
\end{align*}
as $M\to\infty$. Using the independence of $\Y_t$ and the fact that $\E(\Y_t\Y_t^{\T})=(w_T-M)^{-1}\wk\bO_{w_T-M}$, through simple calculations, we can prove
\begin{align*}
&\frac{1}{\sigma_T^2}\sum_{t=1}^{q_T}\E\left\{\left|\frac{(w_T-M)^2}{T^2}\sum_{s=1}^{t-1}\Y_t^{\T}\Y_s\right|^2\;\middle|\;\mathcal{F}_{t-1}\right\}\cp \frac{1}{4},\\
&\E\left[\sum_{t=1}^{q_T}\E\left\{\left|\frac{(w_T-M)^2}{T^2}\sum_{s=1}^{t-1}\Y_t^{\T}\Y_s\right|^4\;\middle|\;\mathcal{F}_{t-1}\right\}\right]=o(\sigma_T^4),
\end{align*}
where $\sigma_T^2:=2q_T(q_T-1)(w_T-M)^2T^{-4}\tr(\wk\bO_{w_T-M}^2)$. Since $\tr(\wk\bG_{h_1}\wk\bG_{h_2})=o\{\sqrt{T}\tr(\wk\bO_T^2)\}$ when $|h_1|,|h_2|\le M$, we have
\begin{align*}
\frac{\left|\tr(\wk\bO_{w_T-M}^2)-\tr(\wk\bO_T^2)\right|}{\tr(\wk\bO_T^2)}=&\frac{\left|\sum_{|h_1|,|h_2|\le M}\left(-\frac{2|h_1|}{w_T-M}+\frac{2|h_2|}{T}+\frac{|h_1h_2|}{(w_T-M)^2}-\frac{|h_1h_2|}{T^2}\right)\tr(\wk\bG_{h_1}\wk\bG_{h_2})\right|}{\tr(\wk\bO_T^2)}=o(T^{1/2}Mw_T^{-1})\to 0,
\end{align*}
by properly setting $w_T,M$. Then,
\begin{align}\label{eq:variance}
    \sigma_T^2=2T^{-2}\tr(\wk\bO_T^2)\{1+o(1)\}=\wk\sigma_T^2\{1+o(1)\}.
\end{align}
By the martingale Central Limit Theorem,   
\begin{align}\label{eq:clt}
    \wk\sigma_T^{-1}\sum_{1\le t\ne s\le q_T}B_{ts}=\wk\sigma_T^{-1}\sum_{1\le t\ne s\le q_T}B_{ts}^I+\op(1)\cd\mathcal{N}(0,1).
\end{align}

For the second term, using Berbee's Lemma again and classical Central Limit Theorem, we can prove
\begin{align*}
\frac{\sum_{t=1}^{q_T}B_{tt}-q_T(w_T-M)T^{-2}\tr(\wk\bO_{w_T-M})}{\sqrt{2(w_T-M)^2T^{-4}\tr(\wk\bO_{w_T-M}^2)}}\cd\mathcal{N}(0,1).
\end{align*}
Note that $2(w_T-M)^2T^{-4}\tr(\wk\bO_{w_T-M}^2)\asymp q_T^{-2}\wk\sigma_T^2$ and 
\begin{align}\label{eq:mean}
    q_T(w_T-M)T^{-2}\tr(\wk\bO_{w_T-M})=T^{-1}\tr(\wk\bO_T)\{1+o(1)\}=\wk\mu_T\{1+o(1)\},
\end{align}
we have
\begin{align}\label{eq:smallone}
\sum_{t=1}^{q_T}B_{tt}-\wk\mu_T=\op(\wk\sigma_T).
\end{align}

For the third term, define
\begin{align*}
    L_{ts}:=\sum_{k=tw_T-M+1}^{tw_T}\sum_{l=(s-1)w_T+1}^{sw_T-M}T^{-2}\wk\X_k^{\T}\wk\X_l,
    \qquad
    C_{ts}:=\sum_{k=tw_T-M+1}^{tw_T}\sum_{l=sw_T-M+1}^{sw_T}T^{-2}\wk\X_k^{\T}\wk\X_l,
\end{align*}
then $D_{ts}=2L_{ts}+C_{ts}$. Further, define
\begin{align*}
    \mathcal{L}_1=\sum_{t=1}^{q_T-1}\sum_{s=t+2}^{q_T}L_{ts},~~\mathcal{L}_2=\sum_{t=2}^{q_T}\sum_{s=1}^{t-1}L_{ts},&~~\mathcal{L}_3=\sum_{t=1}^{q_T}L_{tt},~~\mathcal{L}_3=\sum_{t=1}^{q_T}L_{t,t+1}\\
    \mathcal{C}_1=\sum_{t=1}^{q_T}\sum_{s=t+1}^{q_T}C_{ts},~~&\mathcal{C}_2=\sum_{t=1}^{q_T}C_{tt}.
\end{align*}
Then, $\sum_{1\le t,s\le q_T}D_{ts}=2\mathcal{L}_1+2\mathcal{L}_2+2\mathcal{L}_3+2\mathcal{L}_4+2\mathcal{C}_1+\mathcal{C}_2$. We can prove the second moment of $\mathcal{L}_i$'s and $\mathcal{C}_i$'s are $o(\wk\sigma_T^2)$. Take the first item as an example.
\begin{align*}
    &E(\mathcal{L}_1^2)\\=&\sum_{t_1=1}^{q_T-1}\sum_{s_1=t_1+2}^{q_T}\sum_{t_2=1}^{q_T-1}\sum_{s_2=t_2+2}^{q_T}E(L_{t_1s_1}L_{t_2s_2})\\
    =&\frac{1}{T^4}\sum_{t_1=1}^{q_T-1}\sum_{s_1=t_1+2}^{q_T}\sum_{t_2=1}^{q_T-1}\sum_{s_2=t_2+2}^{q_T}\sum_{k_1=t_1w_T-M+1}^{t_1w_T}\sum_{k_2=(s_1-1)w_T+1}^{s_1w_T-M}\sum_{k_3=t_2w_T-M+1}^{t_2w_T}\sum_{k_4=(s_2-1)w_T+1}^{s_2w_T-M}\tr\{\E(\wk\X_{k_3}\wk\X_{k_1}^{\T}\wk\X_{k_2}\wk\X_{k_4}^{\T})\}.
\end{align*}
Notice that $\min{k_2-k_1,k_4-k_3}>M$. Intuitively, $\E(\wk\X_{k_3}\wk\X_{k_1}^{\T}\wk\X_{k_2}\wk\X_{k_4}^{\T})$ can be approximated by different formulas according to the distances between $(k_1,k_2)^{\T}$ and $(k_3,k_4)^{\T}$ as follows:
\begin{itemize}
\item $k_1$ and $k_3$ are close and $k_2$ and $k_4$ are close:
\begin{align*}
\E(\wk\X_{k_3}\wk\X_{k_1}^{\T} \wk\X_{k_2}\wk\X_{k_4}^{\T})\approx \E(\wk\X_{k_3}\wk\X_{k_1}^{\T} )\E(\wk\X_{k_2}\wk\X_{k_4}^{\T})=\wk\bG_{k_1-k_3}\wk\bG_{k_4-k_2},
\end{align*}
        
\item $k_1$ and $k_3$ are close and $k_2$ and $k_4$ are far away:
\begin{align*}
\E(\wk\X_{k_3}\wk\X_{k_1}^{\T} \wk\X_{k_2}\wk\X_{k_4}^{\T})\approx \E(\wk\X_{k_3}\wk\X_{k_1}^{\T} )\E(\wk\X_{k_2})\E(\wk\X_{k_4}^{\T})=0,
\end{align*}

\item $k_1$ and $k_3$ are far away and $k_2$ and $k_4$ are close:
\begin{align*}
\E(\wk\X_{k_3}\wk\X_{k_1}^{\T} \wk\X_{k_2}\wk\X_{k_4}^{\T})\approx \E(\wk\X_{k_3})\E(\wk\X_{k_1}^{\T} )\E(\wk\X_{k_2}\wk\X_{k_4}^{\T})=0,
\end{align*}

\item $k_2$ and $k_3$ are close ($k_1$, $k_3$ and $k_2$, $k_4$ must be far away):
\begin{align*}
\E(\wk\X_{k_3}\wk\X_{k_1}^{\T} \wk\X_{k_2}\wk\X_{k_4}^{\T})\approx \E(\wk\X_{k_3}\wk\X_{k_2}^{\T})\E(\wk\X_{k_1})\E(\wk\X_{k_4}^{\T})=0,
\end{align*}

\item $k_1$ and $k_4$ are close ($k_1$, $k_3$ and $k_2$, $k_4$ must be far away):
\begin{align*}
\E(\wk\X_{k_3}\wk\X_{k_1}^{\T} \wk\X_{k_2}\wk\X_{k_4}^{\T})\approx \E(\wk\X_{k_3})\E(\wk\X_{k_2}^{\T})\E(\wk\X_{k_1}\wk\X_{k_4}^{\T})=0,
\end{align*}

\item $k_1$, $k_2$ $k_3$, $k_4$ are all far away to each other:
\begin{align*}
\E(\wk\X_{k_3}\wk\X_{k_1}^{\T} \wk\X_{k_2}\wk\X_{k_4}^{\T})\approx \E(\wk\X_{k_3})\E(\wk\X_{k_1}^{\T})\E(\wk\X_{k_2})\E(\wk\X_{k_4}^{\T})=0.
\end{align*}
\end{itemize}
From a strictly theoretical perspective, using the alpha mixing property of $\wk\X_t$, we can prove
\begin{align*}
    \E(\mathcal{L}_1^2)
    \approx &\frac{1}{T^4}\sum_{t=1}^{q_T-1}\sum_{s=t+2}^{q_T}\E\left\{\left(\sum_{k=tw_T-M+1}^{tw_T}\sum_{l=(s-1)W_T+1}^{sw_T-M}\wk\X_k^{\rm T} \wk\X_l\right)^2 \right\}\\
    \approx &\frac{1}{T^4}\frac{(q_T-1)(q_T-2)}{2}M(w_T-M)\tr(\wk\bO_{M}\wk\bO_{w_T-M})\\
    =&O\{q_TMT^{-3}\tr(\wk\bO_{T}^2)\}=o(\wk\sigma_T^2),
\end{align*}
where the first $``\approx"$ comes from the Davydov's inequality (see Corollary 1.1. in \cite{Bosq1996NonparametricSF}), and the second $``\approx"$ comes from Berbee's Lemma (see Lemma 1.1 in \cite{Bosq1996NonparametricSF}). Then, $\mathcal{L}_1=\op(\wk\sigma_T)$. Likewise, we obtain 
\begin{align}\label{eq:smalltwo}
\sum_{1\le t,s\le q_T}D_{ts}=\op(\wk\sigma_T),
\qquad
F=\op(\wk\sigma_T).
\end{align}

Combining \eqref{eq:sumdecomp_pf}, \eqref{eq:smallzero}, \eqref{eq:errorcheck}, \eqref{eq:decompose}, \eqref{eq:clt}, \eqref{eq:smallone} and \eqref{eq:smalltwo}, we obtain
\begin{align}\label{cltforsum}
\frac{T_{\mathrm{DSUM}}-\wk\mu_T}{\wk\sigma_T}=\frac{\wb{\wk \X}_T^{\T}\wb{\wk \X}_T-\wk\mu_T}{\wk\sigma_T}+\op(1)=\frac{\sum_{1\le t\ne s\le q_T}B_{ts}}{\wk\sigma_T}+\op(1) \cd\mathcal{N}(0,1),
\end{align}
given 
\begin{align}\label{growthone}
\frac{NL^{-2r}+N^{1/2}L^{-r}T^{-1/2}\tr^{1/2}(\bmS)+T^{-3/2}L^{1/2}(\log T)\tr(\bmS)}{T^{-1}\tr^{1/2}(\bmS^2)}\to 0.
\end{align}
}
We next turn to the bootstrap centering and scaling. Recall
$$
\wh{\X}_t=\wh{\be}_t\eta_t,
\qquad
\bar{\wh{\X}}=T^{-1}\sum_{t=1}^T \wh{\X}_t,
\qquad
\tilde{\X}_t=\wh{\X}_t-\bar{\wh{\X}}.
$$
For circular blocks of length $\ell$, define the block sums
$$
\U_j=\ell^{-1/2}\sum_{s=0}^{\ell-1}\tilde{\X}_{j+s},
\qquad j=1,\dots,T,
$$
with indices understood modulo $T$. Conditional on the data, $\U_{I_1},\dots,\U_{I_k}$ are i.i.d. mean-zero vectors, and
$$
\sqrt{T}\,\bar{\X}_T^*=k^{-1/2}\sum_{m=1}^k \U_{I_m}.
$$
Therefore
\begin{equation}
\wh\mu_{B,T}=\E^*(T_{\mathrm{DSUM}}^*)=T^{-1}\tr(\bO_{B,T}),
\qquad
\bO_{B,T}=\E^*(\U_{I_1}\U_{I_1}^{\T}).
\label{eq:omegaboot_pf}
\end{equation}
A direct counting argument shows that each lag-$h$ product appears in a circular block with weight $1-|h|/\ell$. Hence
\begin{equation}
\bO_{B,T}=\sum_{0\le|h|\le\ell}\Bigl(1-\frac{|h|}{\ell}\Bigr)\hat{\bG}_h^{c},
\qquad
\hat{\bG}_h^{c}=T^{-1}\sum_{t=1}^T \tilde{\X}_{t+h}\tilde{\X}_t^{\T}.
\label{eq:omegaboot_expansion}
\end{equation}
{ 
Recalling \eqref{eq:omegacheck} and \eqref{eq:mean}, 
\begin{align*}
|\wk\mu_T-T^{-1}\tr(\wk\bO_{\ell})|=&\left|\frac{1}{T}\left\{\sum_{1\le|h|\le\ell}\left(\frac{|h|}{\ell}-\frac{|h|}{T}\right)+\sum_{\ell<|h|\le T}\left(1-\frac{|h|}{T}\right)\right\}\Gamma_{f,h}a_h\tr(\bmS)\right|\\
\le&\frac{\tr(\bmS)}{T}\left(\ell^{-1}\sum_{|h|\le\ell}|h|\cdot|\Gamma_{f,h}|\cdot|a_h|+\sum_{|h|>\ell}|\Gamma_{f,h}|\cdot|a_h|\right)\\
\le&C\frac{\tr(\bmS)}{T}\left(\ell^{-1}\sum_{|h|\le\ell}|h|\cdot|a_h|+\sum_{|h|>\ell}|a_h|\right),
\end{align*}
and 
\begin{align*}
\wh\mu_{B,T}-T^{-1}\tr(\wk\bO_{\ell})
=&\frac{1}{T}\sum_{0\le|h|<\ell}\Bigl(1-\frac{|h|}{\ell}\Bigr)\tr(\hat{\bG}_h^{c}-\wk\bG_h)\\
=&\frac{1}{T}\sum_{0\le|h|<\ell}\Bigl(1-\frac{|h|}{\ell}\Bigr)\frac{1}{T}\sum_{t=1}^T\left\{\wk\X_t^{\T}\wk\X_{t+h}-E(\wk\X_t^{\T}\wk\X_{t+h})\right\}\\
&+\frac{1}{T}\sum_{0\le|h|<\ell}\Bigl(1-\frac{|h|}{\ell}\Bigr)\frac{1}{T}\sum_{t=1}^T\left(\tilde{\X}_t^{\T}\tilde{\X}_{t+h}-\X_t^{\T}\X_{t+h}\right)\\
&+\frac{1}{T}\sum_{0\le|h|<\ell}\Bigl(1-\frac{|h|}{\ell}\Bigr)\frac{1}{T}\sum_{t=1}^T\left(\X_t^{\T}\X_{t+h}-\wk\X_t^{\T}\wk\X_{t+h}\right)\\
=&:I_1+I_2+I_3.
\end{align*}
By the second result in Lemma~\ref{lem:trunc}, we have $|I_2|=\Op\{\ell T^{-3/2}N^{1/2}L^{-r}\tr^{1/4}(\bmS^2)+\ell T^{-2}L\tr(\bmS)\}$. For $I_3$, by Lemma~\ref{lem:proj} and \eqref{eq:fourmoment_e}, 
\begin{align*}
&\E\left\{\left|\frac{1}{T}\sum_{0\le|h|<\ell}\Bigl(1-\frac{|h|}{\ell}\Bigr)\frac{1}{T}\sum_{t=1}^T(\X_t-\wk\X_t)^{\T}\wk\X_{t+h}\right|^2\right\}\\
=&\frac{1}{T^4}\sum_{1\le t_1,t_2\le T}\sum_{0\le |h_1|,|h_2|\le\ell}\Bigl(1-\frac{|h_1|}{\ell}\Bigr)\Bigl(1-\frac{|h_2|}{\ell}\Bigr)\E(\e_{t_1}^{\T}\e_{t_1+h_1}\e_{t_2}^{\T}\e_{t_2+h_2})\E\{(\eta_{t_1}-\wk\eta_{t_1})\wk\eta_{t_1+h_1}(\eta_{t_2}-\wk\eta_{t_2})\wk\eta_{t_2+h_2}\}\\
\le&\frac{1}{T^4}\sum_{1\le t_1,t_2\le T}\sum_{0\le |h_1|,|h_2|\le\ell}|\E(\e_{t_1}^{\T}\e_{t_1+h_1}\e_{t_2}^{\T}\e_{t_2+h_2})|\cdot CT^{-1}L\log^2 T\\
\le&CT^{-2}\tr^2(\bmS)\cdot T^{-1}L\log^2 T,
\end{align*}
which implies $I_3=\Op\{T^{-3/2}L^{1/2}(\log T)\tr(\bmS)\}$. For $I_1$, 
\begin{align*}
|I_1|\le&\frac{1}{T}\sum_{|h|<\ell}\left|\frac{1}{T}\sum_{t=1}^T\left\{\wk\X_t^{\T}\wk\X_{t+h}-E(\wk\X_t^{\T}\wk\X_{t+h})\right\}\right|
\end{align*}
Recalling that $\wk\X_t=\bmS^{1/2}\sum_{k=0}b_k\z_{t-k}\wk\eta_t$ is alpha-mixing and
\begin{align*}
\Pr\left(\max_{i,t}|z_{it}|>(NT)^{1/q}\log T\right)\le\max_{i,t}\E(|z_{it}|^q)/\log^q T\to 0,
\end{align*} 
for some $q>8$. Then for each fixed $h$, by the truncation-plus-Bernstein method, we can prove
\begin{align*}
\left|\frac{1}{T}\sum_{t=1}^T\left\{\wk\X_t^{\T}\wk\X_{t+h}-E(\wk\X_t^{\T}\wk\X_{t+h})\right\}\right|=O_p\left\{T^{-1/2}\tr^{1/2}(\bmS^2)\right\},
\end{align*}
uniformly over $|h|<\ell$. Thus, $|I_1|=O_p\{\ell T^{-3/2}\tr^{1/2}(\bmS^2)\}$. Therefore, 
\begin{align*}
|\wh\mu_{B,T}-T^{-1}\tr(\wk\bO_{\ell})|=&\Op\left\{\ell T^{-3/2}N^{1/2}L^{-r}\tr^{1/4}(\bmS^2)+\ell T^{-2}L\tr(\bmS)+T^{-3/2}L^{1/2}(\log T)\tr(\bmS)+\ell T^{-3/2}\tr^{1/2}(\bmS^2) \right\},
\end{align*}
and since $\wk\sigma_T^2=O\{T^{-2}\tr(\bmS^2)\}$, we have
\begin{align}\label{eq:bootmean}
\frac{|\wh\mu_{B,T}-\wk\mu_T|}{\wk\sigma_T}=&\op(1),
\end{align}
given
\begin{align}\label{growthtwo}
\frac{\tr(\bmS)}{\tr^{1/2}(\bmS^2)}\left(\ell^{-1}\sum_{|h|\le\ell}|h|\cdot|a_h|+\sum_{|h|>\ell}|a_h|+\ell T^{-1}L+T^{-1/2}L^{1/2}\log T\right)+\ell T^{-1/2}N^{1/2}L^{-r}\tr^{-1/4}(\bmS^2)\to 0.
\end{align}
}

For the variance, note that
$$
T_{\mathrm{DSUM}}^*=T^{-1}\Bigl\|k^{-1/2}\sum_{m=1}^k \U_{I_m}\Bigr\|^2.
$$
Because the bootstrap blocks are conditionally i.i.d., expanding the fourth moment and grouping terms according to whether two, three, or four block labels coincide yields
\begin{equation}
\wh\sigma_{B,T}^2
=2T^{-2}\tr(\bO_{B,T}^2)+R_{B,T}^{(\sigma)},
\qquad
|R_{B,T}^{(\sigma)}|\le_p C\frac{\ell^2}{T}\wk\sigma_T^2.
\label{eq:bootsig_pf}
\end{equation}
The leading term is compared with $2T^{-2}\tr(\wk\bO_T^2)$ through the inequality
$$
\bigl|\tr(\bO_{B,T}^2)-\tr(\wk\bO_T^2)\bigr|
\le \norm{\bO_{B,T}-\wk\bO_T}_F\bigl(\norm{\bO_{B,T}}_F+\norm{\wk\bO_T}_F\bigr).
$$
{ 
To bound $\norm{\bO_{B,T}-\wk\bO_T}_F$, define 
\begin{equation*}
\bar{\bO}_{B,T}=\sum_{0\le|h|\le\ell}\Bigl(1-\frac{|h|}{\ell}\Bigr)\bar{\bG}_h^{c},
\qquad
\bar{\bG}_h^{c}=T^{-1}\sum_{t=1}^T \X_{t+h}\X_t^{\T}.
\end{equation*}
Then,
\begin{align*}
\norm{\bO_{B,T}-\wk\bO_T}_F 
\le \norm{\bO_{B,T}-\bar{\bO}_{B,T}}_F +\norm{\bar{\bO}_{B,T}-\wk\bO_{\ell}}_F
+\norm{\wk\bO_{\ell}-\wk\bO_T}_F.
\end{align*}
Recalling \eqref{eq:omegacheck},
\begin{align*}
\norm{\wk\bO_{\ell}-\wk\bO_T}_F=&\left\Vert \sum_{|h|\le\ell}\left(\frac{|h|}{T}-\frac{|h|}{\ell}\right)\wk\bG_h-\sum_{|h|>\ell}\left(1-\frac{|h|}{T}\right)\wk\bG_h\right\Vert_F\\
\le&C\left(\ell^{-1}\sum_{|h|\le\ell}|h|\cdot|a_h|+\sum_{|h|>\ell}|a_h|\right)\norm{\bmS}_F.
\end{align*}
By the third result in Lemma~\ref{lem:trunc},
\begin{align*}
\norm{\bO_{B,T}-\bar{\bO}_{B,T}}_F\le&\sum_{|h|\le\ell} \left\|T^{-1}\sum_{t=1}^T\left(\tilde{\X}_{t+h}\tilde{\X}_t^{\T}-\X_{t+h}\X_t^{\T}\right)\right\|_F\\
=&\Op\{\ell T^{-1/2}N^{1/2}L^{-r+1}\tr^{1/2}(\bmS)+\ell T^{-1}L\tr(\bmS)\}.
\end{align*}
By \eqref{eq:fourmoment_e}, \eqref{eq:boundba} and Lemma~\ref{lem:proj},
\begin{align*}
&E\left(\left\|T^{-1}\sum_{t=1}^T(\X_{t+h}\X_t-\wk\X_{t+h}\wk\X_t)\right\|_F^2\right)\\
=&T^{-2}\sum_{1\le t,s\le T}\E(\e_t^{\T}\e_{s+h}\e_s^{\T}\e_{t+h})\E\{(\eta_{t+h}\eta_t-\wk\eta_{t+h}\wk\eta_t)(\eta_{s+h}\eta_s-\wk\eta_{s+h}\wk\eta_s)\}\\
\le&C\{T^{-1}\tr^2(\bmS)+\tr(\bmS^2)a_h^2\}T^{-1}L\log T,
\end{align*}
which implies
\begin{align*}
\norm{\bar{\bO}_{B,T}-\wk\bO_{\ell}}_F\le&\sum_{|h|\le\ell}\norm{\bar{\bG}_h^c-\wk\bG_h}_F\\
\le&\sum_{|h|\le\ell}\left\|T^{-1}\sum_{t=1}^T(\X_{t+h}\X_t-\wk\X_{t+h}\wk\X_t)\right\|_F+\sum_{|h|\le\ell}\left\|T^{-1}\sum_{t=1}^T\{\wk\X_{t+h}\wk\X_t-\E(\wk\X_{t+h}\wk\X_t)\}\right\|_F\\
=&\sum_{|h|\le\ell}\left\|T^{-1}\sum_{t=1}^T(\X_{t+h}\X_t-\wk\X_{t+h}\wk\X_t)\right\|_F+\Op\{\ell T^{-1/2}\tr^{1/2}(\bmS^2)\}\\
=&\Op\left\{\ell T^{-1}L^{1/2}(\log T)\tr(\bmS)+T^{-1/2}L^{1/2}(\log T)\tr^{1/2}(\bmS^2)+\ell T^{-1/2}\tr^{1/2}(\bmS^2)\right\}.
\end{align*}
In summary, by \eqref{eq:omegacheck} that $\norm{\wk\bO_T}_F\asymp \norm{\bmS}_F$, 
\begin{align*}
\norm{\bO_{B,T}-\bO_T}_F=\op(\norm{\bO_T}_F),
\end{align*}
given
\begin{align}\label{growththree}
&\ell^{-1}\sum_{|h|\le\ell}|h|\cdot|a_h|+\sum_{|h|>\ell}|a_h|+T^{-1/2}L^{1/2}(\log T)+\ell T^{-1/2}\nonumber\\
&\quad+\frac{\ell T^{-1/2}N^{1/2}L^{-r+1}\tr^{1/2}(\bmS)+(\ell T^{-1}L+\ell T^{-1}L^{1/2}\log T)\tr(\bmS) }{\tr^{1/2}(\bmS^2)}\to 0. 
\end{align}
Accordingly, 
\begin{align}\label{eq:bootvariance}
    \wh\sigma_{B,T}^2=\wk\sigma_T^2\{1+\op(1)\}.
\end{align}

Combining \eqref{cltforsum}, \eqref{eq:bootmean},  \eqref{eq:bootvariance} and Slutsky's theorem, we have
\begin{align*}
\frac{T_{\mathrm{DSUM}}-\wh\mu_{B,T}}{\wh\sigma_{B,T}} \cd\mathcal{N}(0,1),
\end{align*}
}
\qed

\subsection{Proof of Theorem \ref{thm:sumpower}}

Under the alternative,
$$
T_{\mathrm{DSUM}}=(\bmd+\wb{\X}_T+\b_T)^{\T}(\bmd+\wb{\X}_T+\b_T)
$$
so
\begin{align*}
T_{\mathrm{DSUM}}-\wk\mu_T
=&\bmd^{\T}\bmd+\left(\wb{\wk \X}_T^{\T}\wb{\wk \X}_T-\wk\mu_T\right)\\
&+2\bmd^{\T}\wb{\wk \X}_T+2\bmd^{\T}\b_T+2\bmd^{\T}\left(\wb\X_T-\wb{\wk \X}_T\right)\\
&+2\b_T^{\T}\wb{\wk \X}_T+\b_T^{\T} \b_T+2\b_T^{\T}\left(\wb\X_T-\wb{\wk \X}_T\right)+\left(\wb\X_T^{\T}\wb\X_T-\wb{\wk \X}_T^{\T}\wb{\wk \X}_T\right).
\end{align*}
By Lemma~\ref{lem:spline} and \ref{lem:trunc},
\begin{align*}
\norm{\b_T}_2=\Op(N^{1/2}L^{-r})
~~\text{and}~~
\left|\wb\X_T^{\T}\wb \X_T-\wb{\wk \X}_T^{\T}\wb{\wk \X}_T\right|=\Op\left\{T^{-3/2}L^{1/2}(\log T)\tr(\bmS)\right\}.
\end{align*}
Recalling \eqref{eq:omegacheck}, we have $\E(\norm{\wb{\wk\X}_T}_2^2)=T^{-1}\tr(\wk\bO_T)$, then
\begin{align*}
\norm{\wb{\wk\X}_T}_2=\Op\{T^{-1/2}\tr^{1/2}(\wk\bO_T)\}.
\end{align*}
By Lemma~\ref{lem:proj} and \eqref{eq:boundba},
\begin{align*}
\E\left(\left\|\wb\X_T-\wb{\wk\X}_T\right\|_2^2\right)=T^{-2}\sum_{1\le t,s\le T}\E(\e_t^{\T}\e_s)\E\{(\eta_t-\wk\eta_t)(\eta_s-\wk\eta_s)\}=O(T^{-2}NL\log^2 T),
\end{align*}
which implies $\left\|\wb\X_T-\wb{\wk\X}_T\right\|_2=\Op(T^{-1}N^{1/2}L^{1/2}\log T)$. Thus,
\begin{align*}
\frac{T_{\mathrm{DSUM}}-\wk\mu_T}{\wk\sigma_T}
=&\frac{\bmd^{\T}\bmd}{\wk\sigma_T}+\frac{\wb{\wk \X}_T^{\T}\wb{\wk \X}_T-\wk\mu_T}{\wk\sigma_T}\\
&+\Op\left\{\norm{\bmd}_2\cdot\frac{T^{-1/2}\tr^{1/2}(\bmS)+N^{1/2}L^{-r}+T^{-1}N^{1/2}L^{1/2}\log T}{T^{-1}\tr^{1/2}(\bmS^2)}\right\}\\
&+\Op\left\{\frac{T^{-1/2}N^{1/2}L^{-r}\tr^{1/2}(\bmS)+NL^{-r}+T^{-1}NL^{-r+1/2}\log T+T^{-3/2}L^{1/2}(\log T)\tr(\bmS)}{T^{-1}\tr^{1/2}(\bmS^2)}\right\}\\
=&\frac{\bmd^{\T}\bmd}{\wk\sigma_T}+Z_T+\op(1),
\end{align*}
with $Z_T\cd\mathcal{N}(0,1)$ by \eqref{cltforsum}, given \eqref{growthone}, \eqref{growthtwo}, \eqref{growththree} and the assumption that 
\begin{align*}
    \norm{\bmd}_2=O\{T^{-1/2}\tr^{1/4}(\wk\bO_T^2)\}.
\end{align*}

\qed

\subsection{Proof of Theorem \ref{thm:maxgumbel}}
{ 
Note that
\begin{align}\label{eq:maxdecompose}
\max_i\frac{T\wh\delta_i^2}{\wh\sigma_i}=&\max_i\frac{T(\wb{\wk X}_{T,i})^2}{\wk\sigma_i^0}+\left\{\max_i\frac{T\wh\delta_i^2}{\wk\sigma_i^0}-\max_i\frac{T(\wb{\wk X}_{T,i})^2}{\wk\sigma_i^0}\right\}+\left(\max_i\frac{T\wh\delta_i^2}{\wh\sigma_i}-\max_i\frac{T\wh\delta_i^2}{\wk\sigma_i^0}\right),~~\text{and}\nonumber\\
&\left|\max_i\frac{T\wh\delta_i^2}{\wh\sigma_i}-\max_i\frac{T\wh\delta_i^2}{\wk\sigma_i^0}\right|\le\max_i\frac{T\wh\delta_i^2}{\wk\sigma_i^0}\cdot \max_i\left|\frac{\wk\sigma_i^0}{\wh\sigma_i}-1\right|.
\end{align}
Recalling \eqref{eq:vector_decomp} that $\wh\delta_i=\wb{\wk X}_{T,i}+b_{T,i}+(\wb{X}_{T,i}-\wb{\wk X}_{T,i})$, we have
\begin{align*}
    &\left|\max_i\frac{T\wh\delta_i^2}{\wk\sigma_i^0}-\max_i\frac{T\left(\wb{\wk X}_{T,i}\right)^2}{\wk\sigma_i^0}\right|\\
    \le&\max_i\frac{Tb_{T,i}^2}{\wk\sigma_i^0}+\max_i\frac{T(\wb{X}_{T,i}-\wb{\wk X}_{T,i})^2}{\wk\sigma_i^0}+2\left\{\max_i\frac{Tb_{T,i}^2}{\wk\sigma_i^0}\cdot\max_i\frac{T(\wb{X}_{T,i}-\wb{\wk X}_{T,i})^2}{\wk\sigma_i^0}\right\}^{1/2}\\
    &+2\left\{\max_i\frac{Tb_{T,i}^2}{\wk\sigma_i^0}\cdot\max_i\frac{T(\wb{\wk X}_{T,i})^2}{\wk\sigma_i^0}\right\}^{1/2}+2\left\{\max_i\frac{T(\wb{\wk X}_{T,i})^2}{\wk\sigma_i^0}\cdot\max_i\frac{T(\wb{X}_{T,i}-\wb{\wk X}_{T,i})^2}{\wk\sigma_i^0}\right\}^{1/2}.
\end{align*}
By Lemma~\ref{lem:spline} and \ref{lem:proj}, we have $\norm{\b_T}_\infty=\Op(L^{-r})$, i.e.,
\begin{align}\label{eq:maxsmallone}
\max_i\frac{Tb_{T,i}^2}{\wk\sigma_i^0}=\Op(TL^{-2r}).  
\end{align}
Further by \eqref{eq:boundba},
\begin{align*}
\max_i\E\{(\wb{X}_{T,i}-\wb{\wk X}_{T,i})^2\}=\max_iT^{-2}\sum_{1\le t,s\le T}\E(e_{it}e_{is})\E\{(\eta_t-\wk\eta_t)(\eta_s-\wk\eta_s)\}=O\left(\max_i\Sigma_{ii}\cdot T^{-2}L\log^2 T\right),
\end{align*}
which implies
\begin{align}\label{eq:maxsmalltwo}
\max_i\frac{T(\wb{X}_{T,i}-\wb{\wk X}_{T,i})^2}{\wk\sigma_i^0}=\Op(T^{-1}L\log^2 T).
\end{align}

Next, using Gaussian approximation for sums of high dimensional
stationary time series \citep{zhang2017gaussian}, and the extreme-value theory for maxima of dependent Gaussian arrays \citep{cai2014twosample}, we derive the extreme-value limit for $\max_iT(\wb{\wk X}_{T,i})^2/\wk\sigma_i^0=T\norm{\wk\D^{-1/2}\wb{\wk\X}_T}_\infty^2$ with $\wk\D=\diag(\wk\sigma_1^0,\dots,\wk\sigma_N^0)=\diag(\wk\bO)$. Recalling 
\begin{align*}
    \wk\X_t=\bmS^{1/2}\sum_{k=0}^\infty b_k\z_{t-k}\wk\eta_t,
\end{align*}
using the independence between $\{\z_t\}_{t\in\mZ}$ and $\{\eta_t\}_{t\in\mZ}$ and the strict stability of $\{\eta_t\}_{t\in\mZ}$,  we can write $\wk\X_t=G(\mathcal{F}^t)$, where $\mathcal{F}^t=(\dots,\v_{t-1},\v_t)$ with $\v_t=\z_t\wk\eta_t$ and $G=(g_1(\cdot),\ldots,g_N(\cdot))^{\rm T}$ is an $\mR^N$-valued linear function. We first review some dependence measures in \cite{zhang2017gaussian}. 
\begin{align*}
    \delta_{t,q,i}:=\norm{\wk X_{it}-\wk X_{it,\{0\}}}=\norm{\wk X_{it}-g_i(\mathcal{F}^{t,\{0\}})}_q,
\end{align*}
where $\norm{\cdot}_q=\{\E(|\cdot|^q)\}^{1/q}$, and $\mathcal{F}^{t,\{k\}}=(\dots,\v_{k-1},\v_k',\v_{k+1},\dots,\v_t)$ is a coupled version of $\mathcal{F}^t$ with $\v_k$ in $\mathcal{F}^t$ is replaced by $\v_k'$, and $\v_t,\v_s',t,s\in\mZ$ are i.i.d. random elements.
\begin{align*}
    \norm{\wk\X_{i\cdot}}_{q,a}:=&\sup_{m\ge 0}(m+1)^a\sum_{t=m}^\infty \delta_{t,q,i},\\
    \psi_{q,a}:=&\max_{1\le i\le N}\norm{\wk\X_{i\cdot}}_{q,a},\\
    \Upsilon_{q,a}:=&\left(\sum_{i=1}^N\norm{\wk\X_{i\cdot}}_{q,a}^q\right)^{1/q},\\
    \omega_{t,q}:=&\norm{\norm{\wk\X_t-\wk\X_{t,\{0\}}}_\infty}_q,\\
    \norm{\norm{\wk\X_{\cdot}}_\infty}_{q,a}:=&\sup_{m\ge 0}(1+m)^a\sum_{t=m}^\infty \omega_{t,q}.
\end{align*}
Following the same notations as \cite{zhang2017gaussian}, define the following quantities 
\begin{align*}
    \Theta_{q,a}:=\Upsilon_{q,a}\wedge(\norm{\norm{\wk\X_{\cdot}}_\infty}_{q,a}\log^{3/2}N),~~&~~L_1:=(\psi_{2,a}\psi_{2,0}\log^2N)^{1/a},\\
    W_1:=(\psi_{3,0}^6+\psi_{4,0}^4)\log^7(NT),~~&~~W_2:=\psi_{2,a}^2\log^4(NT),\\
    N_1:=(T/\log N)^{q/2}/\Theta_{q,a}^q,~~&~~N_2:=T/(\log^2 N)\psi_{2,a}^{-2}.
\end{align*}
Note that
\begin{align*}
\wk X_{it}-\wk X_{it,\{0\}}=b_t[\bmS^{1/2}]_{i\cdot}^{\T}(\v_0-\v_0'),\qquad [\bmS^{1/2}]_{i\cdot}~~\text{being the $i$-th row of}~~\bmS^{1/2},
\end{align*}
then it is easy to calculate
\begin{align*}
\psi_{q,a}=&\left(\sup_{m\ge 0}(m+1)^a\sum_{t=m}^\infty|b_t|\right)\max_{1\le i\le N}\norm{[\bmS^{1/2}]_{i\cdot}^{\T}(\v_0-\v_0')}_q,~~\text{and}\\
\Theta_{q,a}=&\left(\sup_{m\ge 0}(m+1)^a\sum_{t=m}^\infty|b_t|\right) \left[\left(\sum_{i=1}^N\norm{[\bmS^{1/2}]_{i\cdot}^{\T}(\v_0-\v_0')}_q^q\right)^{1/q}\wedge\left\{\norm{\max_{1\le i\le N}|[\bmS^{1/2}]_{i\cdot}^{\T}(\v_0-\v_0')|}_q\log^{3/2}N\right\} \right].
\end{align*}
By \eqref{eq:boundba} that $\sum_{t\ge m}|b_t|=o(m^{-4-\eta_b})$ and the fact that $\norm{[\bmS^{1/2}]_{i\cdot}^{\T}(\v_0-\v_0')}_4^4=O(\Sigma_{ii}^2+\sum_{j=1}^N[\Sigma^{1/2}]_{ij}^4)$, under Assumption~\ref{ass:sigma}, we have $\min_i\wk\sigma_i^0=\min_i \Sigma_{ii}\sum_{h\in\mZ}a_h\E(\eta_t\eta_{t+h})\ge c>0$, and
\begin{align*}
\Theta_{q,a}<\infty,~~\Theta_{q,a}T^{1/q-1/2}\log^{3/2}(NT)\to 0,~~L_1\max(W_1,W_2)=o(1)\min(N_1,N_2),~~\text{for}~~q=4,\frac{1}{4}<a<4.
\end{align*}
That is, the conditions of Theorem 3.2 in \cite{zhang2017gaussian} hold, then
\begin{align}\label{Gaussianapproximation}
    \sup_{x\ge 0}\left|\Pr\left(T\norm{\wk\D^{-1/2}\wb{\wk\X}_T}_\infty^2\ge x\right)-\Pr\left(\norm{\Z}_\infty^2\ge x\right)\right|\to 0,\qquad \Z\sim\mathcal{N}(\bzero,\wk\D^{-1/2}\wk\bO\wk\D^{-1/2}).
\end{align}
By Lemma 6 in \cite{cai2014twosample}, under Assumption~\ref{ass:sigma}, 
\begin{align}\label{maxGaussian}
    \Pr(\norm{\Z}_\infty^2-2\log N+\log\log N\le x)\to F(x)=\exp\left\{-\frac{1}{\sqrt{\pi}}\exp\left(-\frac{x}{2}\right)\right\}.
\end{align}
Therefore,
\begin{align}\label{eq:extreme-value}
\Pr(T\norm{\wk\D^{-1/2}\wb{\wk\X}_T}_\infty^2-2\log N+\log\log N\le x)\to F(x)=\exp\left\{-\frac{1}{\sqrt{\pi}}\exp\left(-\frac{x}{2}\right)\right\}.
\end{align}

It remains to consider the long-run variance estimator error. Recalling 
\begin{align*}
\wh\sigma_i=&\sum_{0\le |h|\le M}\left(1-\frac{|h|}{M}\right)\frac{1}{T-|h|}\sum_{t=|h|+1}^T\wh e_{it}\wh e_{i,t-|h|}\eta_t\eta_{t-|h|},
\qquad
\wk\sigma_i^0=\sum_{|h|\ge 0}\E(\wk X_{it}\wk X_{i,t-|h|}),
\end{align*}
define
\begin{align*}
\wt\sigma_i=&\sum_{0\le |h|\le M}\left(1-\frac{|h|}{M}\right)\frac{1}{T-|h|}\sum_{t=|h|+1}^Te_{it}e_{i,t-|h|}\wk\eta_t\wk\eta_{t-|h|},
\qquad
\wk\sigma_i^c=\sum_{0\le |h|\le M}\left(1-\frac{|h|}{M}\right)\E(\wk X_{it}\wk X_{i,t-|h|}).
\end{align*}
Then,
\begin{align}\label{eq:longrunerrorone}
\max_i|\wk\sigma_i^0-\wk\sigma_i^c|\le&C\max_i\Sigma_{ii}\left(M^{-1}\sum_{|h|\le M}|ha_h|+\sum_{|h|>M}|a_h|\right).
\end{align}
By \eqref{eq:boundba},
\begin{align*}
\max_i\E(|\wt\sigma_i-\wk\sigma_i^c|^2)\le&CT^{-2}\max_i\sum_{|h_1|,|h_2|\le M}\sum_{t_1=|h_1|+1}^T\sum_{t_2=|h_2|+1}^T\cov(e_{it_1}e_{t_1-|h_1|}\wk\eta_{t_1}\wk\eta_{t_1-|h_1|},e_{it_2}e_{t_2-|h_2|}\wk\eta_{t_2}\wk\eta_{t_2-|h_2|})\\
\le&CT^{-2}\max_i\sum_{|h_1|,|h_2|\le M}\sum_{t_1=|h_1|+1}^T\sum_{t_2=|h_2|+1}^T\cov(e_{it_1}e_{t_1-|h_1|},e_{it_2}e_{t_2-|h_2|})\\
=&O(T^{-1}M^2),
\end{align*}
which leads to
\begin{align}\label{eq:longrunerrortwo}
\max_i|\wt\sigma_i-\wk\sigma_i^c|=\Op(T^{-1/2}M).
\end{align}
For $\max_i|\wh\sigma_i-\wt\sigma_i|$, we have the following decomposition
\begin{align*}
\wh\sigma_i-\wt\sigma_i=&\sum_{|h|\le M}\left(1-\frac{|h|}{M}\right)\frac{1}{T-|h|}\sum_{t=|h|+1}^T(\wh e_{it}\wh e_{i,t-|h|}-e_{it}e_{i,t-|h|})\eta_t\eta_{t-|h|}\\
&+\sum_{|h|\le M}\left(1-\frac{|h|}{M}\right)\frac{1}{T-|h|}\sum_{t=|h|+1}^Te_{it}e_{i,t-|h|}(\eta_t\eta_{t-|h|}-\wk\eta_t\wk\eta_{t-|h|})\\
=&:I_{1,i}+I_{2,i}.
\end{align*}
By \eqref{eq:fourmoment_e}, \eqref{eq:boundba} and Lemma~\ref{lem:proj},
\begin{align*}
&\max_i\E\left\{\left|\sum_{|h|\le M}\left(1-\frac{|h|}{M}\right)\frac{1}{T-|h|}\sum_{t=|h|+1}^Te_{it}e_{i,t-|h|}(\eta_t-\wk\eta_t)\eta_{t-|h|}\right|^2\right\}\\
\le&CT^{-2}\cdot T^{-1}L\log^2 T\cdot\max_i\sum_{|h_1|,|h_2|\le M}\sum_{t_1=|h_1|+1}^T\sum_{t_2=|h_2|+1}^T|\E(e_{it_1}e_{i,t_1-|h_1|}e_{it_2}e_{i,t_2-|h_2|})|\\
\le&CT^{-2}\cdot T^{-1}L\log^2 T\cdot\max_i\Sigma_{ii}^2(T^2+M^2T)=O\{(T^{-1}+T^{-2}M^2)L\log^2 T\},
\end{align*}
which implies 
\begin{align*}
\max_i|I_{2,i}|=\Op\{(T^{-1/2}+T^{-1}M)L^{1/2}\log T\}.
\end{align*}
For $I_{1,i}$, by \eqref{eq:decome_it},
\begin{align*}
&\sum_{|h|\le M}\left(1-\frac{|h|}{M}\right)\frac{1}{T-|h|}\sum_{t=|h|+1}^T(\wh e_{it}-e_{it})e_{i,t-|h|}\eta_t\eta_{t-|h|}\\
=&\sum_{|h|\le M}\left(1-\frac{|h|}{M}\right)\frac{1}{T-|h|}\sum_{t=|h|+1}^T\{r_{it}+\Z_t^{\T}(\Z^{\T}\Z)^{-1}\Z^{\T}\r_{i\cdot}+\Z_t^{\T}(\Z^{\T}\Z)^{-1}\Z^{\T}\e_{i\cdot}\}e_{i,t-|h|}\eta_t\eta_{t-|h|}\\
=&:I_{11,i}+I_{12,i}+I_{13,i}.
\end{align*}
By Lemma~\ref{lem:spline} and \eqref{eq:boundba},
\begin{align*}
    \max_i\E(I_{11,i}^2)\le&CT^{-2}\cdot L^{-2r}\cdot\max_i\sum_{|h_1|,|h_2|\le M}\sum_{t_1=|h_1|+1}^T\sum_{t_2=|h_2|+1}^T|\E(e_{i,t_1-|h_1|}e_{i,t_2-|h_2|})|\\
    \le&CT^{-2}\cdot L^{-2r}\cdot\max_i\Sigma_{ii}M^2T,
\end{align*}
which leads to $\max_i|I_{11,i}|=\Op(T^{-1/2}ML^{-r})$. Similarly, by \eqref{eq:PZrsize}, we have $\max_i|I_{12,i}|=\Op(T^{-1/2}ML^{-r})$. By \eqref{eq:PZsize} and \eqref{eq:boundba},
\begin{align*}
\max_i\E(I_{13,i}^2)=&\E\left\{\left|\sum_{|h|\le M}\left(1-\frac{|h|}{M}\right)\frac{1}{T-|h|}\sum_{t=|h|+1}^T\sum_{s=1}^T[\bP_Z]_{ts}e_{is}e_{i,t-|h|}\eta_t\eta_{t-|h|}\right|^2\right\}\\
\le&CT^{-2}\cdot T^{-2}L^2\cdot\max_i\sum_{|h_1|,|h_2|\le M}\sum_{t_1=|h_1|+1}^T\sum_{t_2=|h_2|+1}^T\sum_{s_1=1}^T\sum_{s_2=1}^T|\E(e_{is_1}e_{i,t_1-|h_1|}e_{is_2}e_{i,t_2-|h_2|})|\\
\le&CT^{-2}\cdot T^{-2}L^2\cdot\max_i\Sigma_{ii}^2M^2T^2,
\end{align*}
which leads to $\max_i|I_{13,i}|=\Op(T^{-1}ML)$. Thus,
\begin{align*}
\max_i|I_{1,i}|=\Op(T^{-1/2}ML^{-r}+T^{-1}ML).
\end{align*}
Further,
\begin{align}\label{eq:longrunerrorthree}
\max_i|\wh\sigma_i-\wt\sigma_i|=\Op\{(T^{-1/2}+T^{-1}M)L^{1/2}\log T+T^{-1/2}ML^{-r}+T^{-1}ML\}.
\end{align}
Combining \eqref{eq:longrunerrorone}, \eqref{eq:longrunerrortwo} and \eqref{eq:longrunerrorthree}, we derive
\begin{align}\label{eq:longrunerror}
&\max_i|\wh\sigma_i-\wk\sigma_i^0|\nonumber\\
=&\Op\left\{M^{-1}\sum_{|h|\le M}|ha_h|+\sum_{|h|>M}|a_h|+T^{-1/2}M+  (T^{-1/2}+T^{-1}M)L^{1/2}\log T+T^{-1}ML\right\}.  
\end{align}

Combining \eqref{eq:maxdecompose}, \eqref{eq:maxsmallone}, \eqref{eq:maxsmalltwo}, \eqref{eq:extreme-value} and \eqref{eq:longrunerror}, we obtain
\begin{align}\label{eq:maxgumbel}
\max_i\frac{T\wh\delta_i^2}{\wh\sigma_i}=T\norm{\wk\D^{-1/2}\wb{\wk\X}_T}_\infty^2+\op(1),\qquad\Pr\left(\max_i\frac{T\wh\delta_i^2}{\wh\sigma_i}-2\log N+\log\log N\le x\right)\to F(x),
\end{align}
given 
\begin{align}\label{growthfour}
&TL^{-2r}+T^{-1}L\log^2 T+L^{-r+1/2}\log T+T^{1/2}L^{-r}\log^{1/2}N+T^{-1/2}L^{1/2}\log T\log^{1/2} N\nonumber\\
&+\log N \left\{M^{-1}\sum_{|h|\le M}|ha_h|+\sum_{|h|>M}|a_h|+T^{-1/2}M+  (T^{-1/2}+T^{-1}M)L^{1/2}\log T+T^{-1}ML\right\}\to 0.
\end{align}}

\qed

\subsection{Proof of Theorem \ref{thm:maxboot}}
For the bootstrap statistic, define
$$
W_i^*=\frac{T^{-1/2}\sum_{t=1}^T X_{it}^*}{(\sigma_{i,B}^0)^{1/2}},
\qquad
\sigma_{i,B}^0=\var^*\Bigl(T^{-1/2}\sum_{t=1}^T X_{it}^*\Bigr).
$$
The same uniform expansion \eqref{eq:maxgumbel} used in the proof of Theorem \ref{thm:maxgumbel} gives
$$
Q_{\mathrm{DMAX}}^*=\max_{1\le i\le N}W_i^{*2}-2\log N+\log\log N+o_p^*(1).
$$
Hence it suffices to compare the maxima of the Gaussian surrogates generated by $\{T^{1/2}\wb{\wk X}_{T,i}/(\wk\sigma_i^0)^{1/2}\}$ and $\{W_i^*\}$.
Let $\wk\R=(\wk R_{ij})$ and $\R_{B}=(R_{B,ij})$ denote the correlation matrices of $\{T^{1/2}\wb{\wk X}_{T,i}/(\wk\sigma_i^0)^{1/2}\}$ and $\{W_i^*\}$. Similarly as the analyze of $\norm{\bO_{B,T}-\wk\bO_T}_F$ in the proof of Theorem~\ref{thm:sumpower}\eqref{eq:omegaboot_expansion}, we can prove uniformly over $1\le i,j\le N$,
\begin{align*}
&|[R_{B}]_{ij}-\wk R_{ij}|\\
=&\Op\left\{\ell^{-1}\sum_{|h|\le\ell}|ha_h|+\sum_{|h|>\ell}|a_h|+\ell(T^{-1/2}L^{-r+1}+T^{-1}L+T^{-3/2}NL^{-r}+T^{-1/2}+T^{-1}L^{1/2}\log T+T^{-1/2}L^{1/2}\log T)\right\}\\
=&\op(\log^{-1/2}N),
\end{align*}
given 
\begin{align}\label{growthfive}
\ell^{-1}+\ell(T^{-1/2}L^{-r+1}+T^{-1}L+T^{-3/2}NL^{-r}+T^{-1/2}+T^{-1}L^{1/2}\log T+T^{-1/2}L^{1/2}\log T)=o(\log^{-1/2}N).
\end{align}
The Gaussian comparison inequality for maxima (see, for example, \citet{cai2014twosample} and \citet{chernozhukov2023gaussian}) then yields
$$
\sup_{x\in\mR}\left|\Pr^*\bigl(\max_i W_i^{*2}\le x\bigr)-\Pr\bigl(\max_iT(\wb{\wk X}_{T,i})^2/\wk\sigma_i^0\le x\bigr)\right|\cp 0.
$$
Thus,
\begin{equation*}
\sup_{x\in\mR}\Bigl|\Pr^*(Q_{\mathrm{DMAX}}^*\le x)-\Pr(Q_{\mathrm{DMAX}}\le x)\Bigr|\cp 0,\qquad \sup_{x\in\mR}\Bigl|\Pr^*(Q_{\mathrm{DMAX}}^*\le x)-F(x)\Bigr|\cp 0 .
\label{eq:maxboot}
\end{equation*}
Consequently, $c_{1-\gamma}^*=q_{1-\gamma}^{\mathrm{EV}}+o_p(1)$, where $c_{1-\gamma}^*$ is the conditional $(1-\gamma)$ quantile of $Q_{\mathrm{DMAX}}^*$ and $q_{1-\gamma}^{\mathrm{EV}}$ is the $(1-\gamma)$ quantile of $F$. Hence
$$
\Pr\bigl(p_{\mathrm{MAX}}^{\mathrm{boot}}\le \gamma\bigr)
=\Pr\bigl(Q_{\mathrm{DMAX}}>c_{1-\gamma}^*\bigr)
\to \Pr\bigl(Q_{\mathrm{DMAX}}>q_{1-\gamma}^{\mathrm{EV}}\bigr)=\gamma.
$$
\qed

\subsection{Proof of Theorem \ref{thm:maxpower}}

Pick $i^\star$ such that
$$
\frac{|\de_{i^\star}|}{\sqrt{\wk\sigma_{i^\star}^0}}\ge c\sqrt{\frac{\log N}{T}}.
$$
Then
$$
\frac{T\wh\de_{i^\star}^2}{\wh\sigma_{i^\star}}
\ge \frac{T\de_{i^\star}^2}{\wk\sigma_{i^\star}^0}-2\Bigl|\frac{T^{1/2}\de_{i^\star}}{\sqrt{\wk\sigma_{i^\star}^0}}\Bigr|\Bigl|\frac{T^{-1/2}\sum_t X_{i^\star t}}{\sqrt{\wk\sigma_{i^\star}^0}}\Bigr|-o_p(\log N).
$$
The first term is at least $c^2\log N$. The second is $O_p(c\log^{1/2}N)$ because the normalized partial sum is asymptotically Gaussian with bounded tails. Therefore,
$$
T_{\mathrm{DMAX}}\ge c^2\log N-O_p(c\log^{1/2}N)-o_p(\log N).
$$
Since the null threshold is $2\log N+O(\log\log N)$, any $c>2$ yields consistency. 
\qed

\subsection{Proof of Theorem \ref{thm:indep}}

The proof follows the now-standard decomposition of the sum statistic into a part generated by a small exceptional coordinate set and a bulk part generated by the remaining coordinates, combined with asymptotic orthogonality between the max-event coordinates and the bulk sum. We sketch the argument in a rate-explicit way.

Fix $K=K_N=\lfloor \log^2 N\rfloor$. Let $A$ be the random index set of the $K$ largest coordinates in absolute value among $\{T^{-1/2}\sum_t X_{it}/\sqrt{\wk\sigma_i^0}:1\le i\le N\}$. According to the proof of Theorem~\ref{thm:sumclt},
\begin{align*}
T_{\mathrm{DSUM}}=\wb{\X}_T^{\T}\wb{\X}_T+\op(1),
\end{align*}
then we can write
$$
Q_{\mathrm{DSUM}}=S_A+S_{A^c}+\op(1),
$$
where $S_A$ collects the contribution of coordinates in $A$, $S_{A^c}$ the remainder. The set $A$ contains all coordinates that can influence the maximum event up to probability $o(1)$. Since $|A|=O(\log^2 N)$ and $\tr(\bmS_A^2)=O(|A|)$ under Assumption \ref{ass:sigma}, the partial sum component $S_A$ has variance at most $C|A|/N=o(1)$. Hence the max statistic depends asymptotically only on $A$, whereas the sum statistic depends asymptotically only on $A^c$.

Now approximate the bulk process on $A^c$ by an $m$-dependent Gaussian process independent of the Gaussian vector driving the exceptional set $A$; this is the dependent analogue of the conditioning argument used in \citet{feng2023independence}. The covariance between the bulk quadratic form and the exceptional Gaussian vector is bounded by
$$
C\frac{|A|^{1/2}\log N}{N^{1/2}}+C\Bigl(T^{1/2}L^{-r}+\frac{L\log N}{T^{1/2}}\Bigr)=o(1).
$$
Therefore the pair $(Q_{\mathrm{DSUM}},Q_{\mathrm{DMAX}})$ is asymptotically equivalent to a pair consisting of a standard normal variable and an independent Gumbel variable, establishing \eqref{eq:indep}.

Under the local alternatives \eqref{eq:localalt}, the same proof goes through because the deterministic shift in the sum statistic is of constant order and the number of coordinates affecting the max statistic is still $o\{N/\log^2\log N\}$. 
\qed

\subsection{Proof of Theorem \ref{thm:cc}}

Under Theorem \ref{thm:indep}, $(p_{\mathrm{SUM}},p_{\mathrm{MAX}})$ converges jointly to two independent $U(0,1)$ variables. By \citet{liu2020cauchy}, if $U_1,U_2$ are independent uniforms, then
$$
\frac{1}{2}\tan\{\pi(1/2-U_1)\}+\frac{1}{2}\tan\{\pi(1/2-U_2)\}
$$
is standard Cauchy. Therefore \eqref{eq:ccvalid} follows by the continuous mapping theorem.

For power, if either $p_{\mathrm{SUM}}\to 0$ or $p_{\mathrm{MAX}}\to 0$ in probability, then $T_{\mathrm{CC}}\to +\infty$ in probability because the tangent transform diverges to $+\infty$ near zero p-values. This proves consistency. Inequality \eqref{eq:ccpower} follows from monotonicity of the tangent transform and the elementary bound that the combination statistic exceeds the $(1-\gamma)$ Cauchy quantile whenever one component p-value is below $\gamma/2$ up to an asymptotically negligible event.
\qed

\bibliographystyle{apa}
\bibliography{ctvfm_dependent_alpha_verified}

@article{Demko1986SpectralBF,
    title   = {Spectral bounds for {$||A^{-1}||_\infty$}},
    journal = {Journal of Approximation Theory},
    volume  = {48},
    number  = {2},
    pages   = {207-212},
    year    = {1986},
    author  = {Stephen Demko},
}

@book{Bosq1996NonparametricSF,
    title     = {Nonparametric Statistics for Stochastic Processes: Estimation and Prediction},
    author    = {Bosq, D.},
    series    = {Lecture notes in statistics},
    year      = {1996},
    publisher = {Springer},
}

@article{PT85mixing,
title = {Some mixing properties of time series models},
journal = {Stochastic Processes and their Applications},
volume = {19},
number = {2},
pages = {297-303},
year = {1985},
author = {Tuan D. Pham and Lanh T. Tran},
}

@article{Ma2011SplinebackfittedKS,
    title   = {Spline-backfitted kernel smoothing of partially linear additive model},
    journal = {Journal of Statistical Planning and Inference},
    volume  = {141},
    number  = {1},
    pages   = {204-219},
    year    = {2011},
    author  = {Shujie Ma and Lijian Yang},
}

@article{ang2007CAPM,
  title   = {CAPM over the long run: 1926--2001},
  author  = {Ang, Andrew and Chen, Joseph},
  journal = {Journal of Empirical Finance},
  volume  = {14},
  number  = {1},
  pages   = {1--40},
  year    = {2007},
  doi     = {10.1016/j.jempfin.2005.12.001}
}

@article{politis2004automatic,
  title        = {Automatic Block-Length Selection for the Dependent Bootstrap},
  author       = {Politis, Dimitris N. and White, Halbert},
  journal      = {Econometric Reviews},
  year         = {2004},
  volume       = {23},
  number       = {1},
  pages        = {53--70},
  doi          = {10.1081/ETC-120028836}
}

@article{patton2009correction,
  title        = {Correction to ``Automatic Block-Length Selection for the Dependent Bootstrap'' by D. Politis and H. White},
  author       = {Patton, Andrew and Politis, Dimitris N. and White, Halbert},
  journal      = {Econometric Reviews},
  year         = {2009},
  volume       = {28},
  number       = {4},
  pages        = {372--375},
  doi          = {10.1080/07474930802459016}
}

@article{politis1995spectral,
  title        = {Bias-Corrected Nonparametric Spectral Estimation},
  author       = {Politis, Dimitris N. and Romano, Joseph P.},
  journal      = {Journal of Time Series Analysis},
  year         = {1995},
  volume       = {16},
  number       = {1},
  pages        = {67--103},
  doi          = {10.1111/j.1467-9892.1995.tb00223.x}
}

@article{beaulieu2007multivariate,
  title   = {Multivariate tests of mean-variance efficiency with possibly non-Gaussian errors: An exact simulation-based approach},
  author  = {Beaulieu, Marie-Claude and Dufour, Jean-Marie and Khalaf, Lynda},
  journal = {Journal of Business \& Economic Statistics},
  volume  = {25},
  number  = {4},
  pages   = {398--410},
  year    = {2007},
  doi     = {10.1198/073500106000000468}
}

@article{cai2014twosample,
  title   = {Two-sample test of high dimensional means under dependence},
  author  = {Cai, T. Tony and Liu, Weidong and Xia, Yin},
  journal = {Journal of the Royal Statistical Society: Series B (Statistical Methodology)},
  volume  = {76},
  number  = {2},
  pages   = {349--372},
  year    = {2014},
  doi     = {10.1111/rssb.12034}
}

@article{chang2024martingale,
  title   = {Testing the martingale difference hypothesis in high dimension},
  author  = {Chang, Jinyuan and Jiang, Qing and Shao, Xiaofeng},
  journal = {Journal of Econometrics},
  volume  = {235},
  number  = {2},
  pages   = {972--1000},
  year    = {2023},
  doi     = {10.1016/j.jeconom.2022.09.001}
}

@article{chernozhukov2023gaussian,
  title   = {High-Dimensional Data Bootstrap},
  author  = {Chernozhukov, Victor and Chetverikov, Denis and Kato, Kengo and Koike, Yuta},
  journal = {Annual Review of Statistics and Its Application},
  volume  = {10},
  number  = {1},
  pages   = {427--449},
  year    = {2023},
  doi     = {10.1146/annurev-statistics-040120-022239}
}

@misc{cho2019note,
  title         = {Note on Mean Vector Testing for High-Dimensional Dependent Observations},
  author        = {Cho, Seonghun and Lim, Johan and Ayyala, Deepak Nag and Park, Junyong and Roy, Anindya},
  year          = {2019},
  eprint        = {1904.09344},
  archivePrefix = {arXiv},
  primaryClass  = {math.ST},
  doi           = {10.48550/arXiv.1904.09344},
  note          = {arXiv preprint}
}

@article{fama1993common,
  title   = {Common risk factors in the returns on stocks and bonds},
  author  = {Fama, Eugene F. and French, Kenneth R.},
  journal = {Journal of Financial Economics},
  volume  = {33},
  number  = {1},
  pages   = {3--56},
  year    = {1993},
  doi     = {10.1016/0304-405X(93)90023-5}
}

@article{fan2015power,
  title   = {Power Enhancement in High-Dimensional Cross-Sectional Tests},
  author  = {Fan, Jianqing and Liao, Yuan and Yao, Jiawei},
  journal = {Econometrica},
  volume  = {83},
  number  = {4},
  pages   = {1497--1541},
  year    = {2015},
  doi     = {10.3982/ECTA12749}
}

@article{feng2022alpha,
  title   = {High-dimensional test for alpha in linear factor pricing models with sparse alternatives},
  author  = {Feng, Long and Lan, Wei and Liu, Binghui and Ma, Yanyuan},
  journal = {Journal of Econometrics},
  volume  = {229},
  number  = {1},
  pages   = {152--175},
  year    = {2022},
  doi     = {10.1016/j.jeconom.2021.07.011}
}

@article{feng2023independence,
  title   = {Asymptotic Independence of the Sum and Maximum of Dependent Random Variables with Applications to High-Dimensional Tests},
  author  = {Feng, Long and Jiang, Tiefeng and Li, Xiaoyun and Liu, Binghui},
  journal = {Statistica Sinica},
  volume  = {34},
  number  = {3},
  pages   = {1745--1763},
  year    = {2024},
  doi     = {10.5705/ss.202022.0354}
}

@article{gagliardini2016time,
  title   = {Time-Varying Risk Premium in Large Cross-Sectional Equity Data Sets},
  author  = {Gagliardini, Patrick and Ossola, Elisa and Scaillet, Olivier},
  journal = {Econometrica},
  volume  = {84},
  number  = {3},
  pages   = {985--1046},
  year    = {2016},
  doi     = {10.3982/ECTA11069}
}

@incollection{gagliardini2020estimation,
  title     = {Estimation of large dimensional conditional factor models in finance},
  author    = {Gagliardini, Patrick and Ossola, Elisa and Scaillet, Olivier},
  booktitle = {Handbook of Econometrics},
  editor    = {Durlauf, Steven N. and Hansen, Lars Peter and Heckman, James J. and Matzkin, Rosa L.},
  volume    = {7A},
  chapter   = {3},
  pages     = {219--282},
  publisher = {Elsevier},
  year      = {2020},
  doi       = {10.1016/bs.hoe.2020.10.001}
}

@article{gibbons1989test,
  title   = {A Test of the Efficiency of a Given Portfolio},
  author  = {Gibbons, Michael R. and Ross, Stephen A. and Shanken, Jay},
  journal = {Econometrica},
  volume  = {57},
  number  = {5},
  pages   = {1121--1152},
  year    = {1989}
}

@article{gungor2013testing,
  title   = {Testing Linear Factor Pricing Models With Large Cross Sections: A Distribution-Free Approach},
  author  = {Gungor, Sermin and Luger, Richard},
  journal = {Journal of Business \& Economic Statistics},
  volume  = {31},
  number  = {1},
  pages   = {66--77},
  year    = {2013},
  doi     = {10.1080/07350015.2012.740435}
}

@article{lan2018testing,
  title   = {Testing High-Dimensional Linear Asset Pricing Models},
  author  = {Lan, Wei and Feng, Long and Luo, Ronghua},
  journal = {Journal of Financial Econometrics},
  volume  = {16},
  number  = {2},
  pages   = {191--210},
  year    = {2018},
  doi     = {10.1093/jjfinec/nby002}
}

@article{lewellen2006conditional,
  title   = {The conditional CAPM does not explain asset-pricing anomalies},
  author  = {Lewellen, Jonathan and Nagel, Stefan},
  journal = {Journal of Financial Economics},
  volume  = {82},
  number  = {2},
  pages   = {289--314},
  year    = {2006},
  doi     = {10.1016/j.jfineco.2005.05.012}
}

@article{lintner1965valuation,
  title   = {The valuation of risk assets and the selection of risky investments in stock portfolios and capital budgets},
  author  = {Lintner, John},
  journal = {The Review of Economics and Statistics},
  volume  = {47},
  number  = {1},
  pages   = {13--37},
  year    = {1965}
}

@article{liu2020cauchy,
  title   = {Cauchy Combination Test: A Powerful Test With Analytic p-Value Calculation Under Arbitrary Dependency Structures},
  author  = {Liu, Yaowu and Xie, Jun},
  journal = {Journal of the American Statistical Association},
  volume  = {115},
  number  = {529},
  pages   = {393--402},
  year    = {2020},
  doi     = {10.1080/01621459.2018.1554485}
}

@article{liu2023robust,
  title   = {High-Dimensional Alpha Test of Linear Factor Pricing Models with Heavy-Tailed Distributions},
  author  = {Liu, Binghui and Feng, Long and Ma, Yanyuan},
  journal = {Statistica Sinica},
  volume  = {33},
  pages   = {1389--1410},
  year    = {2023},
  doi     = {10.5705/ss.202021.0134}
}

@article{long2023cauchy,
  title   = {The Cauchy Combination Test under Arbitrary Dependence Structures},
  author  = {Long, Mingya and Li, Zhengbang and Zhang, Wei and Li, Qizhai},
  journal = {The American Statistician},
  volume  = {77},
  number  = {2},
  pages   = {134--142},
  year    = {2023},
  doi     = {10.1080/00031305.2022.2116109}
}

@article{ma2020testing,
  title   = {Testing Alphas in Conditional Time-Varying Factor Models With High-Dimensional Assets},
  author  = {Ma, Shujie and Lan, Wei and Su, Liangjun and Tsai, Chih-Ling},
  journal = {Journal of Business \& Economic Statistics},
  volume  = {38},
  number  = {1},
  pages   = {214--227},
  year    = {2020},
  doi     = {10.1080/07350015.2018.1482758}
}

@article{ma2024adaptive,
  title   = {Adaptive Testing for Alphas in Conditional Factor Models with High Dimensional Assets},
  author  = {Ma, Huifang and Feng, Long and Wang, Zhaojun and Bao, Jigang},
  journal = {Journal of Business \& Economic Statistics},
  volume  = {42},
  number  = {4},
  pages   = {1356--1366},
  year    = {2024},
  doi     = {10.1080/07350015.2024.2313543}
}

@misc{ma2024dependent,
  title         = {Testing Alpha in High Dimensional Linear Factor Pricing Models with Dependent Observations},
  author        = {Ma, Huifang and Feng, Long and Wang, Zhaojun and Bao, Jigang},
  year          = {2024},
  eprint        = {2401.14052},
  archivePrefix = {arXiv},
  primaryClass  = {stat.ME},
  doi           = {10.48550/arXiv.2401.14052},
  note          = {arXiv preprint}
}

@article{mackinlay1991using,
  title   = {Using Generalized Method of Moments to Test Mean-Variance Efficiency},
  author  = {MacKinlay, A. Craig and Richardson, Matthew P.},
  journal = {The Journal of Finance},
  volume  = {46},
  number  = {2},
  pages   = {511--527},
  year    = {1991},
  doi     = {10.1111/j.1540-6261.1991.tb02672.x}
}

@techreport{pesaran2012testing,
  title       = {Testing CAPM with a Large Number of Assets},
  author      = {Pesaran, M. Hashem and Yamagata, Takashi},
  institution = {IZA},
  number      = {IZA Discussion Paper No. 6469},
  year        = {2012}
}

@article{sharpe1964capital,
  title   = {Capital Asset Prices: A Theory of Market Equilibrium under Conditions of Risk},
  author  = {Sharpe, William F.},
  journal = {The Journal of Finance},
  volume  = {19},
  number  = {3},
  pages   = {425--442},
  year    = {1964},
  doi     = {10.1111/j.1540-6261.1964.tb02865.x}
}

@article{xia2024adaptive,
  title   = {Adaptive Testing for Alphas in High-Dimensional Factor Pricing Models},
  author  = {Xia, Qiang and Zhang, Xianyang},
  journal = {Journal of Business \& Economic Statistics},
  volume  = {42},
  number  = {2},
  pages   = {640--653},
  year    = {2024},
  doi     = {10.1080/07350015.2023.2217871}
}

@article{PesaranYamagata2024,
  title   = {Testing for Alpha in Linear Factor Pricing Models with a Large Number of Securities},
  author  = {Pesaran, M. Hashem and Yamagata, Takashi},
  journal = {Journal of Financial Econometrics},
  volume  = {22},
  number  = {2},
  pages   = {407--460},
  year    = {2024},
  doi     = {10.1093/jjfinec/nbad002}
}

@article{ayyala2017mean,
  title   = {Mean vector testing for high-dimensional dependent observations},
  author  = {Ayyala, Deepak Nag and Park, Junyong and Roy, Anindya},
  journal = {Journal of Multivariate Analysis},
  volume  = {153},
  pages   = {136--155},
  year    = {2017},
  doi     = {10.1016/j.jmva.2016.09.012}
}

@article{wang2020selfnorm,
  title   = {Hypothesis testing for high-dimensional time series via self-normalization},
  author  = {Wang, Runmin and Shao, Xiaofeng},
  journal = {The Annals of Statistics},
  volume  = {48},
  number  = {5},
  pages   = {2728--2758},
  year    = {2020},
  doi     = {10.1214/19-AOS1904}
}

@article{tsay2020serial,
  title   = {Testing serial correlations in high-dimensional time series via extreme value theory},
  author  = {Tsay, Ruey S.},
  journal = {Journal of Econometrics},
  volume  = {216},
  number  = {1},
  pages   = {106--117},
  year    = {2020},
  doi     = {10.1016/j.jeconom.2020.01.008}
}

@article{yang2024blockwise,
  title   = {A Blockwise Bootstrap-Based Two-Sample Test for High-Dimensional Time Series},
  author  = {Yang, Lin},
  journal = {Entropy},
  volume  = {26},
  number  = {3},
  pages   = {226},
  year    = {2024},
  doi     = {10.3390/e26030226}
}

@article{yu2023power,
  title   = {Power enhancement for testing multi-factor asset pricing models via Fisher's method},
  author  = {Yu, Xiufan and Yao, Jiawei and Xue, Lingzhou},
  journal = {Journal of Econometrics},
  volume  = {239},
  number  = {2},
  pages   = {105458},
  year    = {2024},
  doi     = {10.1016/j.jeconom.2023.05.004}
}

@article{zhang2017gaussian,
  title   = {Gaussian approximation for high dimensional time series},
  author  = {Zhang, Danna and Wu, Wei Biao},
  journal = {The Annals of Statistics},
  volume  = {45},
  number  = {5},
  pages   = {1895--1919},
  year    = {2017},
}

@article{cho2015multiple,
  title   = {Multiple-Change-Point Detection for High Dimensional Time Series via Sparsified Binary Segmentation},
  author  = {Cho, Haeran and Fryzlewicz, Piotr},
  journal = {Journal of the Royal Statistical Society: Series B (Statistical Methodology)},
  volume  = {77},
  number  = {2},
  pages   = {475--507},
  year    = {2015},
  doi     = {10.1111/rssb.12079}
}

@article{zhao2022high,
  title   = {High-dimensional non-parametric tests for linear asset pricing models},
  author  = {Zhao, Ping and Chen, Dachuan and Zi, Xuemin},
  journal = {Stat},
  volume  = {11},
  number  = {1},
  pages   = {e490},
  year    = {2022},
  doi     = {10.1002/sta4.490}
}

@article{zhao2023robust,
  title   = {Robust high-dimensional alpha test for conditional time-varying factor models},
  author  = {Zhao, Ping},
  journal = {Statistics},
  volume  = {57},
  number  = {2},
  pages   = {444--457},
  year    = {2023},
  doi     = {10.1080/02331888.2023.2180003}
}

@article{zhou1993asset,
  title   = {Asset-pricing Tests under Alternative Distributions},
  author  = {Zhou, Guofu},
  journal = {The Journal of Finance},
  volume  = {48},
  number  = {5},
  pages   = {1927--1942},
  year    = {1993},
  doi     = {10.1111/j.1540-6261.1993.tb05134.x}
}

\end{document}